\documentclass[11pt,a4paper]{article}

\usepackage{amsmath,amssymb,amsthm}
\usepackage{geometry}
\usepackage{hyperref}
\usepackage{enumitem}

\newtheorem{theorem}{Theorem}[section]
\newtheorem{lemma}[theorem]{Lemma}
\newtheorem{proposition}[theorem]{Proposition}
\newtheorem{corollary}[theorem]{Corollary}

\theoremstyle{definition}
\newtheorem{definition}[theorem]{Definition}
\newtheorem{remark}[theorem]{Remark}
\newtheorem{observation}[theorem]{Observation}

\newcommand{\MQ}{\mathrm{MQ}}
\newcommand{\FF}{\mathbb{F}}
\newcommand{\Fix}{\mathrm{Fix}}
\newcommand{\RM}{\mathrm{RM}}
\newcommand{\cl}{\mathrm{cl}}
\newcommand{\ev}{\mathrm{ev}}
\DeclareMathOperator{\supp}{supp}

\title{\textbf{A Note on Binary Quadratic Systems and their relation to complexity theory}}
\author{%
  Gabriele Radici\thanks{Department of Mathematics, University of Trento, Italy.
  \texttt{gabriele.radici@studenti.unitn.it}}
  \and
  Massimiliano Sala\thanks{Department of Mathematics, University of Trento, Italy.
  \texttt{maxsalacodes@gmail.com}}%
}

\begin{document}

\maketitle

\begin{abstract}
Deciding whether a system of multivariate quadratic equations over
$\mathbb F_2$ has a solution is a classical NP-complete problem, and remains so for
square systems, with as many equations as variables. 
The hardness of this problem is one of the
cornerstones of nowadays post-quantum cryptography.
Let $\MQ_0(n)$ and $\MQ_1(n)$ denote the sets of square quadratic
systems in $n$ variables having respectively no solutions and exactly
one solution. $\cup_{n\geq 2} \MQ_0(n)$ is a coNP-complete language, while
$\cup_{n\geq 2} \MQ_1(n)$ lies in DP. 
It is known that
$\lim_{n\to \infty} |\MQ_1(n)|/|\MQ_0(n)|=1$.
Here we prove the explicit finite-$n$ bounds
\[
|\MQ_0(n)|<|\MQ_1(n)|
\le
\left(1+\frac{1}{2^n-1}\right)|\MQ_0(n)|,
\]

We find of interest that the two previous inequalities are equivalent to the existence of some injections
$
\MQ_0(n)\hookrightarrow\MQ_1(n)
$
(whose images omit at most a $2^{-n}$ fraction of $\MQ_1(n)$).
Whether such an injection can be given explicitly, or computed or inverted in
polynomial time, is open.

More generally, let $Q_d$ be the space of polynomial functions
$(\FF_2)^n\to\mathbb F_2$ of degree at most $d$, and let $\alpha_k$
count square systems in $(Q_d)^n$ having exactly $k$ solutions. Then
\[
\alpha_0<\alpha_1
\le
\left(1+\frac{1}{2^n-1}\right)\alpha_0\,, \qquad 2\le d\le n \,.
\]
The proof combines matroid and coding-theoretic methods. We interpret
$(\FF_2)^n$ as the ground set of the evaluation matroid of $Q_d$,
express $\alpha_0$ and $\alpha_1$ through characteristic polynomials,
and use a Whitney-type sign-reversing involution to show that the only
terms that can push $\alpha_1-\alpha_0$ below $\alpha_1/2^n$ come from the
elements of a matroid port. These are identified with
minimal-support words of the Reed--Muller code
$\RM(n-d-1,n)=\RM(d,n)^\perp$; the required estimate then follows
from the MacWilliams identity, the minimum-distance bound $2^{d+1}$,
and the even-weight structure of the code.
\end{abstract}

\section{Introduction}\label{sec:intro}

In this paper we will present results of an algebraic, geometric and combinatorial nature,
but which may open the path to results in complexity theory. These results are also connected to post-quantum cryptography.
The reader unfamiliar with complexity theory will find in Section~\ref{sec:complexity} a collection of notions and statements useful to understand the connection with our algebraic results.

Let $\FF$ denote the finite field with $2$ elements, usually denoted
by $\FF_2$ or $\mathrm{GF}(2)$.
Deciding whether a system of multivariate quadratic equations over
$\FF$ has (at least) a solution is a classical NP-complete problem
\cite{fraenkelyesha,gareyjohnson}. As shown in Section~\ref{sec:complexity}, NP-completeness persists when one restricts to
\emph{square} systems, that is, to systems with as many equations as variables.
Beyond its complexity-theoretic status, the hardness of this problem (in the more general finite-field setting) is the main hardness assumption underlying
multivariate post-quantum cryptography. For example, of the nine candidates advanced by NIST to
the third round of its additional digital signature process in May 2026, four are variants of the Unbalanced Oil and Vinegar scheme \cite{uov}: UOV itself, MAYO \cite{mayo}, QR-UOV \cite{qruov} and SNOVA \cite{snova}, whose security rests on the difficulty of solving systems of quadratic equations over a finite field
\cite{nistir8610}. 

For $n\ge2$, let $\MQ_0(n)$ and $\MQ_1(n)$ denote the sets of square
quadratic systems in $n$ variables over $\FF$ having respectively
no solutions and exactly one solution. The language
$
\bigcup_{n\ge2}\MQ_0(n)
$
is coNP-complete, whereas
$\bigcup_{n\ge2}\MQ_1(n)$
belongs to DP, not known to be either in NP or coNP. 
Moreover, $\bigcup_{n\ge2}\MQ_1(n)$ is coNP-hard and, under randomized
reductions, NP-hard. For the unique satisfiability problem these are the theorems of Blass and Gurevich \cite{blassgurevich} and of Valiant
and Vazirani \cite{valiantvazirani}; their transfer to square quadratic systems is discussed in Section~\ref{sec:complexity}.

The
two families have remarkably close cardinalities. It is known
\cite{fuscobach} that
\[
\lim_{n\to\infty}
\frac{|\MQ_1(n)|}{|\MQ_0(n)|}=1.
\]
Here we prove the explicit finite-$n$ bounds
\begin{equation}\label{eq:intro-mq}
|\MQ_0(n)|<|\MQ_1(n)|
\le
\left(1+\frac{1}{2^n-1}\right)|\MQ_0(n)|.
\end{equation}
Therefore, with only a relative excess of order $2^{-n}$, a
uniformly-random square quadratic system is 
\underline{more likely to have exactly
one solution than none}.\\

The two inequalities in
\eqref{eq:intro-mq} are equivalent to the existence of an
injection
\[
\MQ_0(n)\hookrightarrow\MQ_1(n)\,,\qquad \mbox{for any } n\geq 2\,,
\]
whose image omits at most a $2^{-n}$ fraction of $\MQ_1(n)$.
Whether such an injection can be given explicitly, or computed/inverted
in polynomial time, is open: an injection of that kind would relate a
coNP-complete language to a language in DP. While we make no
complexity-theoretic claim in this paper,
in Section~\ref{sec:complexity} we will discuss some consequences.
\\

Our result for quadratic systems is a special case of the more general
statement that we prove in this paper. Let
$n\ge 2$, $V=\FF^n$, $N=|V|=2^n$ and write
$$
Q:=\FF[x_1,\dots,x_n]/\langle x_i^2-x_i\rangle
$$
for the ring of polynomial functions $V\to\FF$. 
For $0\le d\le n$,
let $Q_d\subseteq Q$ be the subspace of functions of degree at most
$d$, of dimension
$
D=D(n,d)=\sum_{j=0}^{d}\binom n j
$.\\
A \emph{square system of degree $\le d$} is an $n$-tuple
$F\in (Q_d)^{\,n}$ with variety $X(F)\subseteq V$, that is,
\[
F=(f_1,\dots,f_n)\in (Q_d)^{\,n}, \qquad 
X(F):=\{v\in V \mid f_1(v)=\dots=f_n(v)=0\}.
\]
For $0\le k\le N$, set
\[
\alpha_k=\alpha_k(n,d)
:=
\bigl|\{F\in (Q_d)^{\,n}:|X(F)|=k\}\bigr|.
\]
Thus, for $d=2$,
\[
|\MQ_k(n)|=\alpha_k(n,2).
\]

\newpage

Our main result is the following.
\begin{theorem}[restated with proof as Theorem~\ref{thm:main}]\label{thm:intro}
Let $n\ge 2$ and $2\le d\le n$. Then
\begin{enumerate}[label=\textnormal{(\roman*)}]
\item
\[
\alpha_1-\alpha_0\le\frac{\alpha_1}{N},
\]
with equality if and only if $d=n$;

\item
\[
\alpha_1>\alpha_0;
\]

\item if $d\le n-2$, then
\[
\frac{\alpha_1}{N}-\frac{N^{D-3}}{3}
\le
\alpha_1-\alpha_0
\le
\frac{\alpha_1}{N},
\qquad
\frac{\alpha_1}{N}
\ge
N^{D-4}\frac{(N-1)(N-2)(2N+3)}{6}
\ge
\frac{N^{D-1}}{4};
\]
hence
\[
\left(1-\frac4{3N^2}\right)\frac{\alpha_1}{N}
\le
\alpha_1-\alpha_0
\le
\frac{\alpha_1}{N}.
\]
\end{enumerate}
\end{theorem}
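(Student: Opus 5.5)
The plan is to turn everything into statements about the evaluation matroid $M$ on ground set $V$, represented by the columns $\ev_v\in Q_d^{*}$. Since $d\le n$, the evaluation map $Q_d\to\FF^V$ is injective, so $r(V)=D$. For $S\subseteq V$, the number of $F\in(Q_d)^n$ vanishing on $S$ is $(2^{D-r(S)})^n=N^{D-r(S)}$. Inclusion--exclusion then gives
\[
\alpha_0=\sum_{S\subseteq V}(-1)^{|S|}N^{D-r(S)}=\chi_M(N).
\]
Fix a point $v$. By the transitive affine symmetry of $V$ (which preserves $Q_d$), we get $\alpha_1=N\,\chi_{M/v}(N)$. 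Applying deletion--contraction, $\chi_M=\chi_{M\setminus v}-\chi_{M/v}$, and writing $c=\chi_{M/v}(N)=\alpha_1/N$, we obtain
\[
\frac{\alpha_1}{N}-(\alpha_1-\alpha_0)=\chi_{M\setminus v}(N)-N\chi_{M/v}(N).
\]
So all three parts reduce to estimating this single quantity.

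Next I expand both characteristic polynomials over $S\subseteq V\setminus\{v\}$, using $r_{M/v}(S)=r(S\cup v)-1$. This is the Whitney-type cancellation. Every $S$ with $v\notin\cl(S)$ satisfies $r(S\cup v)=r(S)+1$, and its two contributions cancel exactly. What survives is
\[
\chi_{M\setminus v}(N)-N\chi_{M/v}(N)=-(N-1)\,\Sigma_v,\qquad \Sigma_v:=\sum_{\substack{S\subseteq V\setminus v\\ v\in\cl(S)}}(-1)^{|S|}N^{D-r(S)}.
\]
Thus (i) is equivalent to $\Sigma_v\le0$. The sets with $v\in\cl(S)$ are exactly those containing $C\setminus v$ for some circuit $C\ni v$, i.e.\ for some element of the port of $M$ at $v$. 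By duality of the representation, these circuits are the supports of minimal-support codewords of $\RM(d,n)^\perp=\RM(n-d-1,n)$ containing $v$.

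To evaluate $\Sigma_v$, I rewrite $N^{D-r(S)}$ as the number of $n$-tuples of codewords of $\RM(d,n)$ vanishing on $S$. Then I apply MacWilliams to convert the alternating sum into a sum over the dual code $\RM(n-d-1,n)$, restricted to words whose support meets $v$. The signs should come out uniformly: each such word $c$ with $|c|=w$ contributes a negative term of size roughly $N^{D-w+1}$ times a binomial-type factor. This gives $\Sigma_v\le0$, proving (i).

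For the equality case, when $d=n$ the dual code is $\{0\}$, the port is empty, and $\Sigma_v=0$. When $d<n$ the dual code contains words through $v$, so the inequality is strict.

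For (iii), assume $d\le n-2$. The minimum distance $2^{d+1}\ge8$ and the even-weight structure of the dual code imply that the dominant contribution to $(N-1)|\Sigma_v|$ is at most $N^{D-3}/3$. This gives the lower bound on $\alpha_1-\alpha_0$. The bound $\alpha_1/N=\chi_{M/v}(N)\ge N^{D-4}(N-1)(N-2)(2N+3)/6$ would come from the same expansion applied to $\chi_{M/v}$. Here I keep the terms with $|S|\le3$, which are independent because every circuit has size at least $2^{d+1}\ge8$, and bound the tail by the even-weight argument. Combining the two gives the final sandwich.

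Part (ii) follows from (iii) when $d\le n-2$, since $\alpha_1/N-N^{D-3}/3>0$. When $d=n$ it follows from (i), because then $\alpha_1-\alpha_0=\alpha_1/N>0$. When $d=n-1$ the dual code is the repetition code $\{0,\mathbf 1\}$, so $M$ is the uniform-type matroid $U_{N-1,N}$. In that case I compute $\Sigma_v$ in closed form and check that $(N-1)|\Sigma_v|<c$ directly.

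The main obstacle is the MacWilliams step: getting the signs of the surviving terms to come out uniformly, and controlling the tail sharply enough to produce the explicit constant $1/3$ rather than just an $O(N^{D-3})$ bound.
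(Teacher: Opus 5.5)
\paragraph{There is a genuine gap: the sign and size of $\Sigma_v$.}
Your reduction is the paper's. The identity $\alpha_1/N-(\alpha_1-\alpha_0)=-(N-1)\Sigma_v$ with $\Sigma_v=W_v(N)$ is Corollary~\ref{cor:We-alpha}, and the description of the port through minimal-support words of $\RM(n-d-1,n)$ is Corollary~\ref{cor:port-rm}. What follows is not carried out: $\Sigma_v\le0$, its strictness for $d<n$, and $(N-1)|\Sigma_v|\le N^{D-3}/3$ are asserted rather than derived, and the MacWilliams mechanism you sketch does not give them.

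Because $N^{D-r(S)}$ counts $n$-tuples of codewords, the duality that actually applies is MacWilliams over $\FF_{2^n}$, not a sum over single binary words. Identify $(Q_d)^n$ with $\RM(d,n)\otimes\FF_{2^n}$ through a basis $\omega_1,\dots,\omega_n$ of $\FF_{2^n}$ over $\FF$, so that $X(F)$ is the zero set of a word. Poisson summation then gives exactly
\[
\Sigma_v=-\frac{N^{D-N+1}}{N-1}\sum_{\substack{y\in\RM(n-d-1,n)\otimes\FF_{2^n}\\ y_v\neq0}}(-1)^{\mathrm{wt}(y)}(N-1)^{N-\mathrm{wt}(y)},
\]
where $y=\sum_i\omega_iy_i$ with $y_i\in\RM(n-d-1,n)$ has weight $|\bigcup_i\supp(y_i)|$. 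Your guess about which words survive is right, but these weights are not all even. For $d=n-2$, the words $y_i=1+x_i\in\RM(1,n)$ give $\mathrm{wt}(y)=N-1$. So the terms have mixed signs, and showing the sum is non-negative is just a restatement of (i), not a proof of it. The even-weight property of $\RM(n-d-1,n)$ only becomes usable after the alternating sum has been collapsed onto the port itself.

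\paragraph{The missing idea is that collapse.}
The paper applies the Whitney-type involution $\iota_1$ on $T_v$: toggle the least $p$ with $p\in\cl(S_{>p})$. This pairs off sets of equal rank whose sizes differ by one, leaving $W_v(\lambda)=\sum_k(-1)^kb_k\lambda^{D-k}$, where $b_k$ counts fixed sets, all of them independent. Each fixed set $S$ contains a unique port member $S^v$, and every $S'$ with $S^v\subseteq S'\subseteq S$ is also fixed (Lemma~\ref{lem:interval}). This gives $b_{k+1}\le(N-1-k)b_k+\mu_{k+1}$.

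Only at this point does parity enter: all port members have odd size.
\begin{itemize}
\item Pairing $(k,k+1)$ with $k$ odd gives $W_v(N)\le0$.
\item Pairing with $k$ even gives $NW_v(N)\ge-\sum_{j\ \mathrm{odd}}\mu_jN^{D+1-j}\ge-N^{D+1}\Phi$.
\end{itemize}
Binary MacWilliams, together with Delsarte orthogonality up to moment $2^{d+1}-1$ and a Taylor-remainder bound, is used only to show $\Phi<1/(3N^4)$.

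\paragraph{The lower bound on $\alpha_1/N$ needs the same machinery.}
In the raw expansion of $\chi_{M/v}(N)$, the layers $|S|=k\in\{4,5,6\}$ contribute $(-1)^k\binom{N-1}{k}N^{D-1-k}\approx(-1)^kN^{D-1}/k!$. This is far more than the slack $N^{D-2}$ you can afford, and from $|S|=7$ on, $S\cup\{v\}$ can be dependent. So the tail must be controlled by cancellation, not by parity. The paper writes $\alpha_1/N=\mathcal S_0(N)+\mathcal S_1(N)$. It makes every pair in $\mathcal S_0$ non-negative using $\iota_0$ on $\overline{T_v}$ and the shadow bound $(k+1)a_{k+1}\le(N-1-k)a_k$, and it uses $\mathcal S_1(N)\ge-N^{D-4}/3$.

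\paragraph{Strictness.}
For $d\le n-2$, strictness does not follow from the mere existence of dual words through $v$. The paper orders $V$ so that a minimum-size port member $P_0$, of size $m$, comes right after $v$. Then $P_0\in\Fix_1(T_v)$, so $b_m\ge1$, and the $(m,m+1)$ pair is strictly negative.
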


Theorem~\ref{thm:intro} specializes for quadratic systems to
\eqref{eq:intro-mq}, since
\[
\alpha_1-\alpha_0\le\frac{\alpha_1}{N}
\quad\Longleftrightarrow\quad
\alpha_1\le
\left(1+\frac1{N-1}\right)\alpha_0\,.
\]
The equality in its upper bound occurs if and only if
$n=2$.

Two boundary cases of Theorem~\ref{thm:intro} are classical. For
$d=n$, $Q_n$ is the space of all functions $V\to\FF$, the number of
solutions of a uniformly random square system is exactly binomial with
parameters $N$ and $1/N$ \cite[Theorem~1]{jain}, and (i) holds with
equality; by Theorem~\ref{thm:intro}(i) itself, $d=n$ is the only
degree $\ge2$ for which the number of solutions is binomial (compare
\cite[Corollary~3.6]{jain}: the events ``$F$ vanishes at $v$'', $v\in
V$, are independent only in that case), which is what makes the case
$d<n$ non-trivial. For $d=1$ the number of solutions of a random
square linear system over $\FF$ has the classical distribution
governed by the rank of a random matrix \cite[Chapter~3]{kolchin},
and there the inequality goes the other way: $\alpha_0>\alpha_1$ for
every $n\ge3$ (Remark~\ref{rem:final}). So the hypothesis $d\ge2$ in
Theorem~\ref{thm:intro} cannot be dropped.
\\

We briefly describe the proof. Section~\ref{sec:prelim} collects the facts on matroids that we use. In Section~\ref{sec:matroid}, the point set $V$ is made
into a matroid $M$ by declaring $S\subseteq V$
independent when the evaluation vectors of its points on $Q_d$ are
linearly independent. Inclusion--exclusion then gives
\[
\alpha_0=\chi_M(N),
\qquad
\alpha_1=N\chi_{M/v}(N),
\]
where $\chi$ denotes the characteristic polynomial and $v\in V$ is
arbitrary.

In Section~\ref{sec:reduction}, a sign-reversing involution in the
spirit of Whitney's broken-circuit theorem, adapted to the port of $M$
at $v$, shows that the only terms that can push
$\alpha_1-\alpha_0$ below $\alpha_1/N$ come from the minimal sets of
the port of odd cardinality. More precisely, if $\mu_j$ denotes the
number of minimal sets of the port of cardinality $j$, we obtain
\[
\alpha_1-\alpha_0
\ge
\frac{\alpha_1}{N}
-
\sum_{j\ {\rm odd}}\mu_jN^{D+1-j}.
\]

In Section~\ref{subsec:rm-codes}, these minimal sets are identified
with the minimal-support words through $v$ of the Reed--Muller code
\[
\RM(n-d-1,n)=\RM(d,n)^\perp,
\]
so that the $\mu_j$ become weight-enumerator coefficients. In
Section~\ref{sec:rm-bound} the resulting error term is controlled through the
MacWilliams identity: the minimum distance $2^{d+1}\ge8$ implies agreement with
the corresponding binomial model through the first seven moments, while the
even-weight structure gives the required sign for the upper bound, and a
sixth-moment estimate makes the error term small enough.
Section~\ref{sec:mainthm} assembles these ingredients into a proof of
Theorem~\ref{thm:intro}, and shows that its hypothesis $d\ge2$ cannot be
dropped.

Finally, in Section~\ref{sec:complexity} some comments are given on the
relation between the research carried out in this paper and complexity
theory.

\section{Preliminaries on matroids}\label{sec:prelim}

Matroids, introduced by Whitney \cite{whitney35}, abstract linear dependence; we refer to \cite{welsh,oxley} for the basic theory. We use the rank-function axiomatization.

\begin{definition}\label{def:matroid-rank}
A \emph{matroid} $M$ is a pair $(E,r)$, where $E$ is a finite set, the \emph{ground set}, and $r:2^E\to\mathbb Z_{\ge0}$ satisfies, for all $A,B\subseteq E$:
\begin{enumerate}[label=\textnormal{(R\arabic*)}]
    \item $0\le r(A)\le|A|$;
    \item if $A\subseteq B$ then $r(A)\le r(B)$;
    \item $r(A\cup B)+r(A\cap B)\le r(A)+r(B)$.
\end{enumerate}
The rank of $M$ is $r(M):=r(E)$.
\end{definition}

Matroids are also usually defined by axioms on the independent sets; the two
axiomatizations are equivalent, in the following sense.

\begin{theorem}[\cite{whitney35,welsh}]\label{thm:rank-equivalence}
Call $\mathcal I\subseteq2^E$ an \emph{independence system} if
\begin{itemize}
    \item $\emptyset\in\mathcal I$; \item subsets of members of $\mathcal I$ are in $\mathcal I$; \item if $I,J\in\mathcal I$ and $|I|<|J|$ there is $e\in J\setminus I$ with $I\cup\{e\}\in\mathcal I$.
\end{itemize}
The assignments
\[
r\ \longmapsto\ \mathcal I_r:=\{I\subseteq E: r(I)=|I|\},
\qquad
\mathcal I\ \longmapsto\ r_{\mathcal I}(A):=\max\{|I| : I\subseteq A,\ I\in\mathcal I\}
\]
are mutually inverse bijections between the functions $r:2^E\to\mathbb Z_{\ge0}$
satisfying \textnormal{(R1)}--\textnormal{(R3)} and the independence systems on $E$.
\end{theorem}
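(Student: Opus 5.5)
The statement is the classical equivalence of the two axiomatizations, and I would prove it in three steps: (a) $\mathcal I_r$ is an independence system whenever $r$ satisfies (R1)--(R3); (b) $r_{\mathcal I}$ satisfies (R1)--(R3) whenever $\mathcal I$ is an independence system; (c) the two assignments are mutually inverse.

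\emph{Step (a).} First I record the unit-increase property $r(A\cup\{e\})\le r(A)+1$. It follows from (R3) applied to $A$ and $\{e\}$: we get $r(A\cup\{e\})\le r(A)+r(\{e\})-r(A\cap\{e\})$. If $e\notin A$ this is at most $r(A)+1$ by (R1). If $e\in A$ the inequality is trivial.

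Next comes the ``closure lemma'', which I expect to be the main technical point. Suppose $I\subseteq A$ and $r(I\cup\{e\})=r(I)$ for every $e\in A\setminus I$. Then $r(A)=r(I)$. I would prove this by adding the elements of $A\setminus I$ one at a time. If $I\subseteq C$ and $r(C)=r(I)$, then (R3) applied to $C$ and $I\cup\{e\}$ gives
\[
r(C\cup\{e\})\le r(C)+r(I\cup\{e\})-r(I)=r(C),
\]
and (R2) gives the reverse inequality.

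With these tools the axioms for $\mathcal I_r$ follow.
\begin{itemize}
\item $\emptyset\in\mathcal I_r$ by (R1).
\item Heredity: let $J\subseteq I\in\mathcal I_r$. Then (R3) and (R1) give $|I|=r(I)\le r(J)+r(I\setminus J)\le r(J)+|I\setminus J|$. Hence $r(J)\ge|J|$, and (R1) forces equality.
\item Augmentation: suppose $I,J\in\mathcal I_r$ with $|I|<|J|$, and suppose no $e\in J\setminus I$ has $I\cup\{e\}\in\mathcal I_r$. By unit increase, $r(I\cup\{e\})=r(I)$ for all such $e$. The closure lemma then gives $r(I\cup J)=r(I)=|I|$. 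But (R2) gives $|J|=r(J)\le r(I\cup J)$, a contradiction.
\end{itemize}

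\emph{Step (b).} (R1) and (R2) are immediate for $r_{\mathcal I}$. For (R3), the first observation is that augmentation forces all maximal members of $\mathcal I$ inside any set $A$ to have the same size, namely $r_{\mathcal I}(A)$. Choose a maximal independent $B_0\subseteq A\cap B$ and extend it, by repeated augmentation, to a maximal independent $B_1\subseteq A\cup B$. Then $|B_1|=r_{\mathcal I}(A\cup B)$, and by maximality of $B_0$ we have $B_1\cap A\cap B=B_0$. Since $B_1\cap A$ and $B_1\cap B$ are independent,
\[
r_{\mathcal I}(A)+r_{\mathcal I}(B)\ \ge\ |B_1\cap A|+|B_1\cap B|\ =\ |B_1|+|B_0|\ =\ r_{\mathcal I}(A\cup B)+r_{\mathcal I}(A\cap B).
\]

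\emph{Step (c).} The identity $\mathcal I_{r_{\mathcal I}}=\mathcal I$ is direct: $r_{\mathcal I}(I)=|I|$ holds exactly when $I$ itself is independent. For $r_{\mathcal I_r}=r$, first note that every $I\in\mathcal I_r$ with $I\subseteq A$ satisfies $|I|=r(I)\le r(A)$ by (R2). Conversely, take a maximal such $I$. For each $e\in A\setminus I$ we have $I\cup\{e\}\notin\mathcal I_r$, so unit increase gives $r(I\cup\{e\})=r(I)$. The closure lemma then yields $r(A)=r(I)=|I|$, so the maximum is attained and equals $r(A)$.

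The only genuinely nontrivial ingredient is the closure lemma derived from submodularity. It is used twice, in augmentation and in the inverse property, so I would isolate it as a separate lemma first.
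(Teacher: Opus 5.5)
The paper gives no proof of this theorem. It is quoted from Whitney and Welsh, followed only by the remark that $r$ is recovered from $\mathcal I_r$ through $r=r_{\mathcal I_r}$.

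Your proposal is a correct and complete proof along the standard textbook lines. The closure lemma (if adjoining each $e\in A\setminus I$ to $I$ leaves the rank unchanged, then $r(A)=r(I)$) drives both the augmentation axiom for $\mathcal I_r$ and the identity $r_{\mathcal I_r}=r$. (R3) for $r_{\mathcal I}$ comes from extending a maximal independent subset of $A\cap B$ to a maximal independent subset of $A\cup B$. Your step (c) proves exactly the point the paper's remark insists on.

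One detail is worth stating explicitly. The (R3) step in your closure lemma uses $C\cap(I\cup\{e\})=I$. This holds because each newly adjoined $e$ is not yet in $C$.
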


\begin{remark}
The correspondence must be stated in this form. It is \emph{not} true that an
arbitrary $r:2^E\to\mathbb Z_{\ge0}$ satisfies \textnormal{(R1)}--\textnormal{(R3)}
as soon as $\mathcal I_r$ is an independence system: for $E=\{1,2\}$ and
$r(\emptyset)=0$, $r(\{1\})=r(\{2\})=1$, $r(\{1,2\})=5$ one gets
$\mathcal I_r=\{\emptyset,\{1\},\{2\}\}$, which satisfies the three axioms, while
$r$ violates \textnormal{(R1)}. What recovers $r$ from $\mathcal I_r$ is the
displayed formula $r=r_{\mathcal I_r}$, valid for rank functions.
\end{remark}
\textbf{Notation:} If not better specified, throughout all the paper $M$ will denote a matroid $(E,r)$. By some abuse of notation, we will write $S \in M$, if $S \in 2^E$, and $T \subseteq M$, if $T \subseteq 2^E$. \\

From \cite{welsh} and \cite{whitney35}, we also bring the following definition and basic facts on matroids.

\begin{definition}\label{def:basic}
Let $M = (E, r)$ be a matroid with rank function $r$. We define:
\begin{enumerate}
    \item \textbf{Independent sets}: $\mathcal I:=\{I\subseteq E: r(I)=|I|\}$; the other subsets are \textbf{dependent}.
    \item \textbf{Bases}: maximal independent sets. 
    \item \textbf{Circuits} $\mathcal C(M)$: minimal dependent sets.
    \item \textbf{Loop}: $e\in E$ with $r(\{e\})=0$. 
    \item \textbf{Coloop}: $e\in E$ with $r(E\setminus\{e\})=r(E)-1$. \item Two non-loops $e\ne f$ are \textbf{parallel} if $r(\{e,f\})=1$.
    \item \textbf{Closure}: $\cl:2^E\to2^E$, $\cl(A):=\{e\in E: r(A\cup\{e\})=r(A)\}$.
    \item \textbf{Deletion}: for $T\subseteq E$, $M\setminus T$ is the matroid on $E\setminus T$ with rank $r_{M\setminus T}(A)=r(A)$.
    \item \textbf{Contraction}: for $T\subseteq E$, $M/T$ is the matroid on $E\setminus T$ with rank $r_{M/T}(A)=r(A\cup T)-r(T)$.
    \item \textbf{Characteristic polynomial}: $\chi_M(\lambda)=\sum_{A\subseteq E}(-1)^{|A|}\lambda^{r(M)-r(A)}$.
\end{enumerate}
\end{definition}

\begin{remark}\label{rem:facts}
The following are standard \cite{welsh,oxley}.
\begin{enumerate}[label=\textnormal{(\alph*)}]
    \item All bases have cardinality $r(M)$.
    \item A subset $C \subseteq E$ is a circuit iff $r(C) = |C| - 1$ and $r(C \setminus \{e\}) = |C| - 1$ for all $e \in C$.
    \item The closure operator $\cl: 2^E \to 2^E$ satisfies the following properties. For any set $S \subseteq E$: 
    \begin{itemize}
        \item Extensivity: $S \subseteq \cl(S)$;
        \item Monotonicity: if $S \subseteq S'$ then $\cl(S) \subseteq \cl(S')$;
        \item Idempotency: $\cl(\cl(S)) = \cl(S)$.
    \end{itemize}
    \item If $I$ is independent and $e\in\cl(I)\setminus I$, then $I\cup\{e\}$ contains exactly one circuit, and that circuit contains $e$. It is called \textbf{fundamental circuit}, we denote it as $I^e$.
    \item $r_{M/T}$ is still a rank function. We have that, since $A \cap T = \emptyset$ and submodularity of $r$, $r(A \cup T) - r(T) \le |A|$. To prove submodularity of $r_{M/T}$, one notices that $r(A \cup B \cup T) + r((A \cap B) \cup T) = r((A \cup T) \cup (B \cup T)) + r((A \cup T) \cap (B \cup T)) \le r(A \cup T) + r(B \cup T)$.
    \item The characteristic polynomial of $M\setminus T$ is $\chi_{M\setminus T} = \sum_{S \subseteq E \setminus T}(-1)^{|S|}\lambda^{r(M \setminus T) - r(S)}$;
    \item The characteristic polynomial of $M/T$ is $\chi_{M/T} = \sum_{T \subseteq S \subseteq E}(-1)^{|S| - |T|}\lambda^{r(M) - r(S)} =$ \\ $ = \sum_{S \subseteq E \setminus T}(-1)^{|S|}\lambda^{r(M) - r(S \cup T)}$
\end{enumerate}
\end{remark}

Characteristic polynomials of deletion and contraction matroids provide a useful identity:

\begin{proposition}[\cite{welsh,oxley}]\label{identity}
If $e\in E$ is neither a loop nor a coloop, then \[ \chi_M(\lambda)=\chi_{M\setminus e}(\lambda)-\chi_{M/e}(\lambda)  \,.
\]
\end{proposition}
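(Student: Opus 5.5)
We prove the deletion--contraction identity of Proposition~\ref{identity} directly from the definition of $\chi$ in Definition~\ref{def:basic}, by splitting the subsets $A\subseteq E$ according to whether $e\notin A$ or $e\in A$:
\[
\chi_M(\lambda)=\sum_{A\subseteq E\setminus e}(-1)^{|A|}\lambda^{r(M)-r(A)}+\sum_{A\subseteq E\setminus e}(-1)^{|A|+1}\lambda^{r(M)-r(A\cup e)}.
\]
The plan is to identify the first sum with $\chi_{M\setminus e}$ and the second with $-\chi_{M/e}$, using the expressions in Remark~\ref{rem:facts}(f),(g). The only point needing care is that the exponents there involve $r(M\setminus e)$ and $r(M/e)$ rather than $r(M)$, so we must compute these two ranks.

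\emph{The deletion term.} Since $e$ is not a coloop, Definition~\ref{def:basic}(5) gives $r(E\setminus e)\ne r(E)-1$. Removing one element lowers the rank by at most one, by (R2) and submodularity (R3) applied to $E\setminus e$ and $\{e\}$ together with (R1). Hence $r(M\setminus e)=r(E\setminus e)=r(M)$. Therefore the first sum is exactly $\chi_{M\setminus e}(\lambda)$ as written in Remark~\ref{rem:facts}(f).

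\emph{The contraction term.} Since $e$ is not a loop, $r(\{e\})=1$. Then
\[
r(M/e)=r(E)-r(\{e\})=r(M)-1 .
\]
For $A\subseteq E\setminus e$ the contracted rank is $r_{M/e}(A)=r(A\cup e)-1$, so
\[
r(M)-r(A\cup e)=r(M/e)-r_{M/e}(A).
\]
Consequently the second sum equals $-\sum_{A\subseteq E\setminus e}(-1)^{|A|}\lambda^{r(M/e)-r_{M/e}(A)}=-\chi_{M/e}(\lambda)$, which is the characteristic polynomial of the matroid $M/e$ (a matroid by Remark~\ref{rem:facts}(e)).

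We must use $\chi_{M/e}$ with its own rank $r(M/e)$, not the shifted form $\lambda^{r(M)-r(S\cup T)}$ of Remark~\ref{rem:facts}(g). The two differ by a factor of $\lambda$ exactly because $r(\{e\})=1$, and it is here that the non-loop hypothesis is used. The non-coloop hypothesis is used only to make the deletion exponents match. Adding the two pieces gives $\chi_M=\chi_{M\setminus e}-\chi_{M/e}$.
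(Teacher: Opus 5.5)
Your proof is correct. The paper gives no proof of its own here (the proposition is quoted from Welsh and Oxley). Splitting $2^E$ according to whether $e\in A$ and then matching exponents is the standard argument, and it is the same manipulation the paper carries out in the proof of Proposition~\ref{We}. Your rank computations $r(M\setminus e)=r(M)$ (from non-coloop, (R1)--(R3)) and $r(M)-r(A\cup e)=r(M/e)-r_{M/e}(A)$ are right.

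One correction to your closing paragraph, which does not affect the proof. You claim that the form $\lambda^{r(M)-r(S\cup T)}$ in Remark~\ref{rem:facts}(g) differs from $\chi_{M/e}$ by a factor of $\lambda$. That is false, and it contradicts the identity you displayed a few lines earlier. Since $r(M/e)=r(E)-r(\{e\})$ and $r_{M/e}(A)=r(A\cup e)-r(\{e\})$, the term $r(\{e\})$ cancels. So $r(M/e)-r_{M/e}(A)=r(M)-r(A\cup e)$ holds for every $e$, whatever the value of $r(\{e\})$. Remark~\ref{rem:facts}(g) is therefore exactly $\chi_{M/e}$, and the non-loop hypothesis is never actually used in your argument. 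In fact the identity also holds when $e$ is a loop: then $M\setminus e=M/e$ and $\chi_M=0$, so both sides vanish. Only the non-coloop hypothesis is essential, since for a coloop one gets $\chi_M=(\lambda-1)\chi_{M/e}$ instead.
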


\section{Boolean polynomial systems as matroids} \label{sec:matroid}

The point set $V=\FF^n$ carries a matroid structure in which a set of points is
independent when the corresponding evaluation functionals on $Q_d$ are linearly
independent. The purpose of this section is to set that structure up and to record
its one consequence that the rest of the paper uses: the numbers $\alpha_0$ and
$\alpha_1$ of systems with no solution and with exactly one solution are values of
the characteristic polynomial of this matroid and of one of its contractions.

\subsection{Rank function}

Let $\mathbb{F} = \mathbb{F}_2$ be the field with two elements, $V = \mathbb{F}^n$ be the $n$-dimensional vector space over $\mathbb{F}$, and $N = 2^n = |V|$.
Set $Q := \mathbb{F}[x_1, \dots, x_n] / \langle x_i^2 - x_i : i = 1, \dots, n \rangle$ and, for $0 \le d \le n$, let $Q_d$ denote the space of polynomials in $Q$ up to degree $d$. The dimension of $Q_d$ is 
\[
D = \sum_{j=0}^d \binom{n}{j}.
\]
Let $M_d$ be the set of monomials of $Q_d$. Since $M_d$ is a basis for $Q_d$, $Q_d$ is isomorphic to $\mathbb{F}^D$. 
We order monomials in $M_d$ with respect to the \textit{graded lexicographic} ordering. Namely, let $1<x_1<x_2<\dots<x_n$, then $x_1^{i_1}\dots x_n^{i_n} < x_1^{j_1}\dots x_n^{j_n}$ (with $i_k,j_k \in \{0,1\}$) if and only if: \begin{itemize}
    \item $\sum_{k=1}^ni_k < \sum_{k=1}^nj_k$;
    \item $\sum_{k=1}^ni_k = \sum_{k=1}^nj_k$ and $0 = j_{\overline{k}} < i_{\overline{k}} = 1$, where $\overline{k} := \min\{k \ | \ i_k \neq j_k \}$.
\end{itemize}
Define the evaluation map \[ \mathrm{ev} : V \to \mathbb{F}^D,\  \mathrm{ev}(v) := (m(v) : m \in M_d).\] Here $\mathbb{F}^D$ carries two roles, identified throughout by the standard scalar product: it is the space of coefficient vectors of the polynomials of $Q_d$, and, through $\overline{f}\mapsto\overline{f}\cdot\mathrm{ev}(v)$, it is also the space of linear functionals on $Q_d$; the vector $\mathrm{ev}(v)$ is the functional ``evaluate at $v$''. Extending $\mathrm{ev}$ to the power set of $V$ by $\mathrm{ev}(S) = \{\mathrm{ev}(v) : v \in S\}$ for $S \subseteq V$, we define the set function:
\[
\overline{r}(S) := \mathrm{Rank}(\mathrm{ev}(S)), \quad \text{for } S \subseteq V.
\]

From now on, we assume $d \ge 2$. So,
\begin{remark}
    $\mathrm{ev}$ is injective, indeed $\mathrm{ev}(v) = (1, v, (m(v))_{m \in M_d, \deg m \ge 2})$. Hence $\mathrm{ev}(v)=\mathrm{ev}(w)$ implies $v = w$.
\end{remark}

\begin{lemma}\label{obs:boolean-matroid}
The function $S \mapsto \overline{r}(S)$ is a matroid rank function on ground set $V$.
\end{lemma}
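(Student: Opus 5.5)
We verify the three axioms (R1)--(R3) of Definition~\ref{def:matroid-rank} directly, using only the linear algebra of the ambient space $\FF^D$. For $S\subseteq V$, write $W(S):=\mathrm{span}_{\FF}\,\ev(S)\subseteq\FF^D$. Then $\overline r(S)=\dim W(S)$, because the rank of a finite set of vectors is the dimension of its span. Every axiom then becomes a statement about dimensions of subspaces.

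For (R1), nonnegativity is clear. $W(S)$ is spanned by the $|\ev(S)|$ vectors $\ev(v)$, $v\in S$, so $\overline r(S)\le|\ev(S)|\le|S|$. Here we may even note $|\ev(S)|=|S|$ by the injectivity remark, although the inequality does not need it. For (R2), if $A\subseteq B$ then $\ev(A)\subseteq\ev(B)$, so $W(A)\subseteq W(B)$, and dimension is monotone under inclusion.

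For (R3), we compare spans. First, $W(A\cup B)=W(A)+W(B)$, since the span of a union of generating sets is the sum of the spans. Second, $\ev(A\cap B)\subseteq\ev(A)\cap\ev(B)$, which gives $W(A\cap B)\subseteq W(A)\cap W(B)$. The Grassmann formula $\dim(U+U')+\dim(U\cap U')=\dim U+\dim U'$ then yields
\[
\overline r(A\cup B)+\overline r(A\cap B)\le \dim\bigl(W(A)+W(B)\bigr)+\dim\bigl(W(A)\cap W(B)\bigr)=\overline r(A)+\overline r(B).
\]

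The only point needing care is this submodularity step, and specifically the direction of the inclusion for the intersection. The span of $\ev(A\cap B)$ may be strictly smaller than $W(A)\cap W(B)$, but that only helps the inequality. Once that is noted, the proof is complete. Alternatively, one could invoke Theorem~\ref{thm:rank-equivalence}: linear independence of vectors satisfies the exchange axiom by the Steinitz exchange lemma, and $\overline r$ coincides with $r_{\mathcal I}$ for the resulting independence system. The direct verification above is shorter, so we would use it.
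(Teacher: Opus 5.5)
Your proposal is correct and follows essentially the same route as the paper: both verify (R1)--(R3) directly by identifying $\overline r(S)$ with the dimension of the span of the evaluation vectors, using monotonicity of spans for (R2), and for (R3) the inclusion of the span of the vectors indexed by $A\cap B$ into $U\cap W$ combined with the Grassmann dimension formula. Your remark that (R1) needs only $|\mathrm{ev}(S)|\le|S|$ is a small clarification of the paper's phrasing, which speaks of ``a collection of $|S|$ vectors'' and so tacitly uses injectivity of the evaluation map.
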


\begin{proof}
We directly verify axioms (R1)--(R3) of Definition~\ref{def:matroid-rank} for $\overline{r}(S)$:

\textbf{(R1)}: $\mathrm{ev}(S)$ is a collection of $|S|$ vectors in $\mathbb{F}^D$. The dimension of $\mathrm{span}(\mathrm{ev}(S))$ cannot exceed the number of vectors $|S|$ nor can it be negative, so $0 \le \overline{r}(S) \le |S|$.

\textbf{(R2)}: If $A \subseteq B \subseteq V$, then $\mathrm{ev}(A) \subseteq \mathrm{ev}(B)$. Taking linear spans yields $\mathrm{span}(\mathrm{ev}(A)) \subseteq \mathrm{span}(\mathrm{ev}(B))$, so $\dim \mathrm{span}(\mathrm{ev}(A)) \le \dim \mathrm{span}(\mathrm{ev}(B))$, i.e., $\overline{r}(A) \le \overline{r}(B)$.

\textbf{(R3)}: Let $A, B \subseteq V$. Set $U = \mathrm{span}(\mathrm{ev}(A))$ and $W = \mathrm{span}(\mathrm{ev}(B))$. Then $U + W = \mathrm{span}(\mathrm{ev}(A \cup B))$. Since $\mathrm{ev}(A \cap B) \subseteq \mathrm{ev}(A) \cap \mathrm{ev}(B) \subseteq U \cap W$, we have $\mathrm{span}(\mathrm{ev}(A \cap B)) \subseteq U \cap W$, so $\overline{r}(A \cap B) \le \dim(U \cap W)$. Applying the standard dimension formula for vector subspaces:
\begin{align*}
\overline{r}(A \cup B) + \overline{r}({A \cap B}) &= \dim(U + W) + \overline{r}(A \cap B) \\
&\le \dim(U + W) + \dim(U \cap W) \\
&= \dim U + \dim W = \overline{r}(A) + \overline{r}(B). \qedhere
\end{align*}
\end{proof} 

\begin{corollary}
    $(V,\overline{r})$ is a matroid. We call it \textbf{evaluation matroid} (of degree $d$)
\end{corollary}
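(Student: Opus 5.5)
The plan is to read the corollary directly off Definition~\ref{def:matroid-rank} together with Lemma~\ref{obs:boolean-matroid}. A matroid is defined there as a pair $(E,r)$ with $E$ a finite set and $r:2^E\to\mathbb Z_{\ge0}$ satisfying (R1)--(R3). So I only need to check two things: that the candidate ground set is finite, and that the candidate function has the right type and satisfies the axioms. Naming the resulting object the evaluation matroid is then just a definition.

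The steps are as follows.
\begin{enumerate}
\item Take $E=V=\FF^n$. This is finite, with $|V|=N=2^n$.
\item Observe that $\overline{r}(S)=\mathrm{Rank}(\ev(S))$ is the dimension of a subspace of $\FF^D$. Hence it is a nonnegative integer, so $\overline{r}:2^V\to\mathbb Z_{\ge0}$.
\item Invoke Lemma~\ref{obs:boolean-matroid}, which verifies (R1)--(R3) for $\overline{r}$ on $V$. This gives exactly the data required by Definition~\ref{def:matroid-rank}.
\end{enumerate}

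Optionally, Theorem~\ref{thm:rank-equivalence} lets one also describe the same matroid through its independent sets: $\mathcal I_{\overline r}$ consists of those $S\subseteq V$ whose evaluation vectors are linearly independent. The standing assumption $d\ge2$, with the injectivity of $\ev$ from the preceding remark, adds something useful here. Distinct points give distinct vectors, so $\ev(S)$ really is a family of $|S|$ vectors, and the matroid has no parallel pairs. None of this is needed for the corollary itself.

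There is no real obstacle. The only point needing care is that (R1) is stated in terms of $|S|$ rather than $|\ev(S)|$. That discrepancy is handled either by the injectivity of $\ev$ or by the trivial bound $\dim\mathrm{span}(\ev(S))\le|\ev(S)|\le|S|$, and the lemma already accounts for it.
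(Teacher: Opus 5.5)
Your proposal is correct and matches the paper: the corollary is read off directly from Lemma~\ref{obs:boolean-matroid}, which verifies (R1)--(R3) for $\overline{r}$ on the finite set $V$. Your extra remark on $|\ev(S)|$ versus $|S|$ is fine; the bound $\dim\mathrm{span}(\ev(S))\le|S|$ handles (R1) even without injectivity of $\ev$.
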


\begin{remark}
    $\mathrm{ev}(V)$ spans the whole $\FF^D$ (since no non-zero polynomial vanishes on all $V$) hence, $\overline{r}(V) = \dim\FF^D = D$. 
\end{remark}

\textbf{Notation:} Established that $\overline{r}$ is a rank function on the powerset of $V$, for compactness we will write $r_S := \overline{r}(S)$, $S \subseteq V$. \\

\subsection{Translation of matroid concepts}

Having proven that $(V, \overline{r})$ is a matroid, we specialize general definitions in our particular case for better understanding.

\begin{enumerate}
    \item \textbf{Independent Sets $\mathcal{I}$}: Point sets $I \subseteq V$ whose monomial evaluation vectors $\mathrm{ev}(I)$ are linearly independent in $\mathbb{F}^D$.
    \item \textbf{Bases $\mathcal{B}(M)$}: Maximal point sets in $V$ whose evaluation vectors form a vector space basis for $\mathrm{span}(\mathrm{ev}(V)) = \mathbb{F}^D$.
    \item \textbf{Circuits $\mathcal{C}(M)$}: Minimal point sets $C \subseteq V$ exhibiting a non-trivial linear dependency $\sum_{v \in C} c_v \mathrm{ev}(v) = 0$ over $\mathbb{F}$. As shown in Section \ref{subsec:rm-codes}, these correspond to minimal support non-zero codewords in the dual Reed-Muller code $\mathrm{RM}(d, n)^\perp = \mathrm{RM}(n - (d + 1), n)$.
    \item \textbf{Deletion $M \setminus T$}: Restricting evaluation to the subset of points $V \setminus T$.
    \item \textbf{Contraction $M/T$}: Projecting the coefficient space $\mathbb{F}^D$ modulo the subspace spanned by $\mathrm{ev}(T)$.
    \item \textbf{Characteristic Polynomial $\chi_M(\lambda)$}: $\sum_{S \subseteq V} (-1)^{|S|} \lambda^{D - r_S}$.
\end{enumerate}

Concerning loops, coloops and parallel elements in the evaluation matroid, we prove that 

\begin{lemma}\label{lem:coloop}
$M$ has no loops, and no two points are parallel. If $2 \le d\le n-1$ then $M$ has no coloops; if $d=n$ every point is a coloop.
\end{lemma}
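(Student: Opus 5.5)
I will handle the three claims separately, working throughout with the evaluation vectors $\ev(v)\in\FF^D$ and the fact that their coordinates indexed by $1,x_1,\dots,x_n$ are $(1,v)$.

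\emph{No loops.} For every $v$, the coordinate of $\ev(v)$ indexed by the constant monomial $1$ equals $1$. Hence $\ev(v)\neq 0$ and $r_{\{v\}}=1$, so $M$ has no loops.

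\emph{No parallel pairs.} Over $\FF_2$, a set of two nonzero vectors has rank $1$ exactly when the two vectors are equal. By the Remark preceding Lemma~\ref{obs:boolean-matroid}, $\ev$ is injective. So for $v\neq w$ we get $r_{\{v,w\}}=2$, and no two points are parallel.

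\emph{Coloops, translated.} A point $v$ is a coloop iff $r_{V\setminus\{v\}}=D-1$. Equivalently, $\ev(v)\notin\mathrm{span}\,\ev(V\setminus\{v\})$. By duality this means there is a polynomial $f\in Q_d$ that vanishes on $V\setminus\{v\}$ but not at $v$. The only function with this property is the indicator
\[
\delta_v(x)=\prod_{i=1}^n(x_i+v_i+1),
\]
so $v$ is a coloop iff $\delta_v\in Q_d$. Expanding the product, the monomial $x_1\cdots x_n$ appears with coefficient $1$, so $\deg\delta_v=n$.

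\emph{Case $d=n$.} Here $\delta_v\in Q_n=Q$, so every point is a coloop. Alternatively, $D=2^n=N$ and the $N$ vectors $\ev(V)$ span $\FF^D$, so they are independent and every element is a coloop.

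\emph{Case $2\le d\le n-1$.} Here $\delta_v\notin Q_d$, so no point is a coloop. To avoid relying on uniqueness of multilinear representatives in $Q$, I can instead argue directly. Suppose $f\in Q_d$ vanished on $V\setminus\{v\}$ and $f(v)=1$. Summing $f$ over all of $V$ gives $1$. But the sum over $V$ of any monomial of degree $<n$ is $0$, because it is a multiple of $2^{n-\deg}$ with $n-\deg\geq 1$. Hence $\sum_{x\in V} f(x)=0$ for every $f\in Q_d$ with $d\le n-1$, a contradiction.

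The only step needing any care is this last one, the coloop characterization. The counting argument settles it, and it also foreshadows the link to $\RM(d,n)^\perp$.
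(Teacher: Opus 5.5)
Your proof is correct and follows the paper's argument. Loops are excluded by the constant coordinate of $\ev(v)$, parallel pairs by injectivity of $\ev$, and coloops reduce to whether the indicator $\delta_v$ (of degree $n$) lies in $Q_d$; for $d=n$, $D=N$ makes $\ev(V)$ a basis. Your supplementary parity argument ($\sum_{x\in V}f(x)=0$ for $f\in Q_d$ with $d\le n-1$, i.e.\ $\sum_{x\in V}\ev(x)=0$) is a self-contained alternative that avoids uniqueness of multilinear representatives; it is exactly the fact $\mathbf{1}\in\RM(d,n)^\perp$ that the paper uses later in its case $d=n-1$.
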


\begin{proof}
For any $v \in V$, $\ev(v)$ has first coordinate $1(v)=1$, so it is non-zero, hence it cannot have zero rank. \\ $v,w \in V$ are parallel iff $\ev(v)=\ev(w)$, impossible since $\mathrm{ev}$ is injective. \\
If $2 \le d\le n-1$ then for any $v \in V$, $\{f\in Q_d: f(w)=0\ \forall w\in V \setminus \{v\}\}= \{0\}$, because the only function vanishing on $V\setminus\{v\}$ and not on $v$ is the indicator of $v$, of degree $n$; hence $r(V\setminus\{v\})=D=r_V$. If $d=n$ then $Q_d=Q$ and $D=N$, so $\ev(V)$ is a basis of $\FF^N$ and every point is a coloop.
\end{proof}

\subsection{Counting systems with 0 and 1 solutions}

For $S\subseteq V$ let
\[
H_S:=\{f\in Q_d: f(v)=0\ \forall v\in S\}
\]
Let $\overline{f} \in \mathbb{F}^D$ represent the vector of coefficients of the polynomial $f = \sum f_i m_i$, by construction $f(v) = \overline{f}\cdot \mathrm{ev}(v)$ where $\cdot$ is the standard scalar product of $\mathbb{F}^D$. \\

This implies that $H_S = \mathrm{ev}(S)^{\perp} =\ker(\mathrm{ev}(S))$,  so $|H_S| = 2^{D - r_S}$. \\

Let $X(F)$ be the variety of $F \in (Q_d)^n$, denoting by $\varphi(S) := | \{F \in (Q_d)^n \mid S \subseteq X(F) \} |$, we have \[ \varphi(S) = |H_S|^n = 2^{n(D - r_S)} = N^{D - r_S}. \] Indeed the space of system vanishing at least on $S$ is $(H_S)^n$, since a system vanishes on $S$ if and only if all its polynomials vanish on $S$. \\

For $k = 0, \dots, 2^n$, let $\alpha_k := |\{F \in (Q_d)^n \mid |X(F)| = k\}|$ denote the number of square systems with $k$ solutions.
Recalling that $r_V = D$, we show

\begin{theorem}\label{thm:boolean-matroid-link}
In the evaluation matroid $M = (V, r)$, for any $v \in V$:
\[
\alpha_0 = \chi_M(N) \quad \text{and} \quad \alpha_1 = \chi_{M/v}(N)N.
\]
\end{theorem}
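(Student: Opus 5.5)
The plan is to apply inclusion--exclusion with the counting function $\varphi(S)=N^{D-r_S}$ computed just above.

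\textbf{Step 1: $\alpha_0$.} For each system $F$, the identity $\sum_{S\subseteq X(F)}(-1)^{|S|}=[X(F)=\emptyset]$ holds. Summing over all $F\in(Q_d)^n$ and exchanging the order of summation gives
\[
\alpha_0=\sum_{S\subseteq V}(-1)^{|S|}\varphi(S)=\sum_{S\subseteq V}(-1)^{|S|}N^{D-r_S}.
\]
Since $r(M)=r_V=D$, this is exactly $\chi_M(N)$ by Definition~\ref{def:basic}.

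\textbf{Step 2: $\alpha_1$, reduced to a fixed point.} Write $\alpha_1=\sum_{v\in V}\beta_v$, where $\beta_v$ is the number of $F$ with $X(F)=\{v\}$. For fixed $v$, the condition $X(F)=\{v\}$ means that $v\in X(F)$ and $X(F)\setminus\{v\}=\emptyset$. Inclusion--exclusion over $S\subseteq V\setminus\{v\}$ then gives
\[
\beta_v=\sum_{S\subseteq V\setminus\{v\}}(-1)^{|S|}\varphi(S\cup\{v\})
=\sum_{S\subseteq V\setminus\{v\}}(-1)^{|S|}N^{D-r_{S\cup\{v\}}}.
\]
Because $v$ is not a loop (Lemma~\ref{lem:coloop}), $r(M/v)=D-1$. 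By Remark~\ref{rem:facts}(g), $\chi_{M/v}(N)=\sum_{S\subseteq V\setminus\{v\}}(-1)^{|S|}N^{D-1-r_{S\cup\{v\}}}$, and therefore $\beta_v=N\chi_{M/v}(N)$.

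\textbf{Step 3: independence of $v$.} What remains is to show that $\beta_v$ does not depend on $v$; then $\alpha_1=N\beta_v$ gives the claimed formula. It is the only step needing an idea beyond bookkeeping. For this I would use the translations $\tau_w:x\mapsto x+w$ of $V$. Each $\tau_w$ maps $Q_d$ onto itself, since substituting $x_i\mapsto x_i+w_i$ does not raise degree. Hence $F\mapsto F\circ\tau_w$ is a bijection of $(Q_d)^n$, and it satisfies $X(F\circ\tau_w)=\tau_w^{-1}(X(F))$. This bijection sends the systems with $X(F)=\{v\}$ onto those with $X(F)=\{v+w\}$, so $\beta_v=\beta_{v+w}$ for all $v,w$.

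Alternatively, the same symmetry shows at the matroid level that $\tau_w$ is an automorphism of $M$, because $r_{\tau_w(S)}=r_S$. Either way $\chi_{M/v}$ is independent of $v$, and this completes the proof.
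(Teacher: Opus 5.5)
There is an error in Step 2: you identify your inclusion--exclusion sum with $\chi_{M/v}(N)$ off by a factor of $N$. In the contraction the rank function is $r_{M/v}(S)=r_{S\cup\{v\}}-1$. So the exponent in $\chi_{M/v}$ is $r(M/v)-r_{M/v}(S)=(D-1)-(r_{S\cup\{v\}}-1)=D-r_{S\cup\{v\}}$, which is exactly what Remark~\ref{rem:facts}(g) states ($r(M)-r(S\cup T)$). You subtracted the $1$ from the top rank but kept the uncontracted rank $r_{S\cup\{v\}}$ for $S$, and so got $D-1-r_{S\cup\{v\}}$.

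Your first displayed formula for $\beta_v$ is correct, but it yields $\beta_v=\chi_{M/v}(N)$, not $\beta_v=N\chi_{M/v}(N)$. The case $d=n$ confirms this. There $M/v$ is the free matroid on $N-1$ points, so $\chi_{M/v}(N)=(N-1)^{N-1}$, and the number of systems whose unique zero is $v$ is also $(N-1)^{N-1}$. As written, Steps 2 and 3 combine to $\alpha_1=N\beta_v=N^2\chi_{M/v}(N)$, which is not the claimed formula. So the last sentence of Step 3 does not follow from what precedes it.

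With that line corrected, your proof is the paper's: inclusion--exclusion over the subspaces $H_S^{\,n}$ for both counts, followed by a transitivity argument to make the count independent of $v$. Your Step 3 uses only the translations $x\mapsto x+w$ and checks explicitly that they preserve $Q_d$ and transport zero sets. That is all that is needed, and it is slightly more concrete than the paper's appeal to the full affine group.
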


\begin{proof}
Applying inclusion-exclusion directly to the complement of the arrangement of all the $(H_v)^n$ inside $\mathbb{F}^{nD}$:
\[
\alpha_0 = \left| \mathbb{F}^{nD} \setminus \bigcup_{v \in V} H_v^n \right| = \sum_{S \subseteq V} (-1)^{|S|} \left| \bigcap_{v \in S} H_v^n \right| = \sum_{S \subseteq V} (-1)^{|S|} |H_S^n| = \sum_{S \subseteq V} (-1)^{|S|} N^{D - r(S)} = \chi_M(N).
\]
For the second identity, fix $v \in V$. Systems whose unique solution is $v$ correspond to $H_v^n \setminus \bigcup_{w \neq v} (H_v^n \cap H_w^n)$. By inclusion-exclusion, their cardinality is:
\begin{align*}
    \left| (H_v)^n \setminus \bigcup_{w \neq v} ((H_v)^n \cap (H_w)^n)  \right| = \sum_{S \subseteq V \setminus \{v\}} (-1)^{|S|} \left| \bigcap_{w \in S} ((H_v)^n \cap (H_w)^n) \right|  \\ =  \sum_{S \subseteq V \setminus \{v\}} (-1)^{|S|} \varphi(S \cup \{v\}) = \sum_{S \subseteq V \setminus \{v\}} (-1)^{|S|}N^{D - r_{S \cup \{v\}}} = \chi_{M/v}(N). 
\end{align*}

Furthermore, the affine transformation group of $V$ acts transitively on its $N$ points, preserves degree, induces an invertible linear transformation of the coefficient space $\mathbb{F}^D$, and thus preserves the count for every point, yielding $\alpha_1 =  \chi_{M/v}(N)N$.
\end{proof}

\section{A sign-reversing involution and the matroid inequality} \label{sec:reduction}

In this section $M=(E,r)$ is an arbitrary matroid and $e\in E$ is neither a loop nor a coloop and has no parallel element. We reduce $\lambda\chi_{M/e}(\lambda)-\chi_M(\lambda)$ to sums over two families of \emph{independent} sets, and then bound it from below.

The three steps are as follows. In \S\ref{subsec:ports} the difference is split
along the \emph{port} of $M$ at $e$: the sets $S$ with $e\in\cl(S)$ contribute one
sum, the remaining ones another. Neither sum is alternating in a useful way, because
the ranks $r(S)$ vary with $S$. In \S\ref{subsec:involutions} we remove that
obstacle with a sign-reversing involution in the spirit of Whitney's
broken-circuit theorem: on each family it pairs off all but the sets carrying no
\emph{active} element, and paired sets have equal rank and opposite sign, so they
cancel. What survives is indexed by \emph{independent} sets only, on which
$r(S)=|S|$, so both sums become polynomials in $\lambda$ with explicit integer
coefficients $a_k$ and $b_k$ counting fixed sets of each size. In
\S\ref{subsec:fixed} we bound those coefficients: each family is closed under
passing to subsets, up to the members of the port itself, which gives
$(k+1)a_{k+1}\le(|E|-1-k)a_k$ and a matching inequality for the $b_k$ with an error
term $\mu_{k+1}$ counting the elements of the port of size $k+1$. Pairing consecutive terms then makes the
first sum non-negative and the second small, which is
Theorem~\ref{thm:matroid-inequality}.

\subsection{Matroid ports}\label{subsec:ports}

Lehman firstly introduced the port of a matroid in 1964 in a work on Shannon's Switching Game \cite{port}.
\begin{definition}
    Let $e \in E$, then the $e$-\emph{port} of $M$, or the \emph{port of M at} $e$, is the set of circuits containing $e$, without $e$. \[P_e := \{ C \setminus \{e\} \mid C \in \mathcal{C}(M), \ e \in C\}\]
    Elements of $P_e$ are called \emph{minimals of} $e$.  
\end{definition}

We define the upfilter generated by it and the associated polynomial.
\begin{definition}
For $e \in E$, define the $e$-\emph{port upfilter} as \[T_e := \{ S \subseteq E \setminus \{e\} \mid \exists  \ C \in P_e \text{ s.t. } C \subseteq S\} \] and the $e$-\emph{port polynomial}
\[
W_e(\lambda) = \sum_{S \in T_e} (-1)^{|S|} \lambda^{r(M) - r(S)}.
\]
\end{definition}

\begin{observation} \label{obs:Te-cl} 
    We observe that $T_e = \{ S \subseteq E \setminus \{e\} \mid e \in \mathrm{cl}(S)\}$. Indeed $e \in \mathrm{cl}(S)$ iff there exists a circuit $C \subseteq S \cup \{e\}$, with $e \in C$.
\end{observation}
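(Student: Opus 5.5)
The claim is $T_e=\{S\subseteq E\setminus\{e\} : e\in\cl(S)\}$. I will prove the two inclusions separately. The only tools needed are the definition of closure, $\cl(S)=\{x: r(S\cup\{x\})=r(S)\}$, and the circuit facts in Remark~\ref{rem:facts}. The standing hypothesis that $e$ is not a loop matters here, because it guarantees that every circuit through $e$ has at least one other element. Since $S\subseteq E\setminus\{e\}$, we have $e\notin S$.

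\emph{($\subseteq$).} Let $S\in T_e$, so there is a circuit $C\ni e$ with $C\setminus\{e\}\subseteq S$. By Remark~\ref{rem:facts}(b), $r(C)=|C|-1=r(C\setminus\{e\})$. Now apply submodularity (R3) to $A=S$ and $B=C$. Their union is $S\cup\{e\}$ and their intersection is $C\setminus\{e\}$, so
\[
r(S\cup\{e\})+r(C\setminus\{e\})\le r(S)+r(C).
\]
Cancelling the equal terms gives $r(S\cup\{e\})\le r(S)$. Monotonicity (R2) gives the reverse inequality, so $r(S\cup\{e\})=r(S)$ and $e\in\cl(S)$.

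\emph{($\supseteq$).} Suppose $e\in\cl(S)$.
\begin{itemize}
\item Choose a maximal independent subset $I\subseteq S$. Then $r(I)=r(S)$, so $S\subseteq\cl(I)$.
\item By idempotency of $\cl$, $e\in\cl(\cl(I))$, so $e\in\cl(I)$.
\item Since $e\notin I$, Remark~\ref{rem:facts}(d) applies: $I\cup\{e\}$ contains a unique circuit $C$ (the fundamental circuit $I^e$), and $e\in C$.
\item Therefore $C\setminus\{e\}\in P_e$ and $C\setminus\{e\}\subseteq I\subseteq S$, which gives $S\in T_e$.
\end{itemize}

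The one point to check is the first bullet, $r(I)=r(S)\Rightarrow S\subseteq\cl(I)$. If some $x\in S$ had $r(I\cup\{x\})>r(I)$, then $I\cup\{x\}$ would be independent, contradicting the maximality of $I$ in $S$. Everything else is bookkeeping, so I expect no real obstacle.
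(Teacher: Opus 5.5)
Your proof is correct and takes essentially the paper's route. The paper simply asserts the standard fact that $e\in\cl(S)$ iff some circuit $C\ni e$ lies in $S\cup\{e\}$; you prove it from the axioms, using submodularity for one inclusion and the fundamental circuit of a maximal independent $I\subseteq S$ for the other. One small correction: you say the non-loop hypothesis matters, but it is not needed, since for a loop $C=\{e\}$ and both directions go through verbatim.
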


\begin{proposition}\label{We}
Let $M$ be a matroid on ground set $E$. Then for every non-loop $e \in E$,
\begin{equation*}
    \chi_{M/e}(\lambda)(\lambda - 1)= \chi_M(\lambda) + (\lambda - 1) W_e(\lambda),
\end{equation*} equivalently
\begin{equation}\label{eq:We}
  \chi_{M/e}(\lambda)\lambda - \chi_M(\lambda) = \chi_{M/e}(\lambda) + (\lambda - 1) W_e(\lambda).    
\end{equation}
\end{proposition}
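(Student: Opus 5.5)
Both identities are equivalent, so I will prove the second form directly, by splitting the subset sum defining $\chi_M$ according to whether $e\in\cl(S)$. I write $r:=r(M)$ and use Remark~\ref{rem:facts}(g) in the form
\[
\chi_{M/e}(\lambda)=\sum_{S\subseteq E\setminus\{e\}}(-1)^{|S|}\lambda^{r-r(S\cup\{e\})}.
\]
Since $e$ is not a loop, $r(\{e\})=1$, so $r(M/e)=r-1$. This exponent is therefore consistent with the definition of $\chi_{M/e}$, and I will keep the convention $\lambda^{r(M/e)-r_{M/e}(S)}=\lambda^{r-r(S\cup e)}$.

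First I split $\chi_M(\lambda)=\sum_{S\subseteq E}(-1)^{|S|}\lambda^{r-r(S)}$ according to whether $e\in S$. Every $A\subseteq E$ is either $S$ or $S\cup\{e\}$ with $S\subseteq E\setminus\{e\}$, which gives
\[
\chi_M(\lambda)=\sum_{S\subseteq E\setminus\{e\}}(-1)^{|S|}\bigl(\lambda^{r-r(S)}-\lambda^{r-r(S\cup e)}\bigr).
\]
Now I use Observation~\ref{obs:Te-cl}.

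\begin{itemize}
\item If $S\in T_e$, then $e\in\cl(S)$, so $r(S\cup e)=r(S)$ and the bracket vanishes.
\item If $S\notin T_e$, then $r(S\cup e)=r(S)+1$, so $\lambda^{r-r(S)}=\lambda\cdot\lambda^{r-r(S\cup e)}$ and the bracket equals $(\lambda-1)\lambda^{r-r(S\cup e)}$.
\end{itemize}

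Hence
\[
\chi_M(\lambda)=(\lambda-1)\sum_{S\notin T_e}(-1)^{|S|}\lambda^{r-r(S\cup e)}.
\]

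Next I split $\chi_{M/e}$ the same way. For $S\in T_e$ the term is $(-1)^{|S|}\lambda^{r-r(S)}$, so the $T_e$ part is exactly $W_e(\lambda)$. Writing $R(\lambda):=\sum_{S\notin T_e}(-1)^{|S|}\lambda^{r-r(S\cup e)}$, this gives
\[
\chi_{M/e}=W_e+R \qquad\text{and}\qquad \chi_M=(\lambda-1)R.
\]
Therefore
\[
(\lambda-1)\chi_{M/e}=(\lambda-1)W_e+(\lambda-1)R=(\lambda-1)W_e+\chi_M,
\]
which is the first identity. Expanding $\lambda\chi_{M/e}-\chi_{M/e}$ and rearranging yields \eqref{eq:We}.

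There is no real obstacle here. The only point needing care is the rank bookkeeping: using $r(M)$ rather than $r(M/e)$ in the exponents, which relies on $e$ not being a loop, and the characterization of $T_e$ via closure in Observation~\ref{obs:Te-cl}.
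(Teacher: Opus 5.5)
Your proof is correct, and its core is the same termwise comparison the paper uses: pair each $S\subseteq E\setminus\{e\}$ with $S\cup\{e\}$, and split according to whether $e\in\cl(S)$, i.e.\ whether $S\in T_e$. The two proofs differ only in the first step.

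The paper starts from deletion--contraction, $\chi_M=\chi_{M\setminus e}-\chi_{M/e}$ (Proposition~\ref{identity}). It then writes $\chi_{M\setminus e}$ as $\Sigma(\lambda):=\sum_{S\subseteq E\setminus\{e\}}(-1)^{|S|}\lambda^{r(M)-r(S)}$ and compares $\lambda\chi_{M/e}$ with $\Sigma$ term by term. You instead expand $\chi_M$ directly over the pairs $\{S,S\cup\{e\}\}$, which establishes $\chi_M=\Sigma-\chi_{M/e}$ by hand.

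Your route fits the statement's hypothesis better. Proposition~\ref{identity} also requires $e$ not to be a coloop, and for a coloop $\Sigma=\lambda\chi_{M\setminus e}\neq\chi_{M\setminus e}$. So the paper's chain of equalities, read literally, is justified only for non-coloops. Its conclusion still stands, because the composite identity $\chi_M=\Sigma-\chi_{M/e}$ that it actually uses holds for every $e$. Your argument covers every non-loop $e$ directly; for a coloop, $T_e=\emptyset$ and it gives $\chi_M=(\lambda-1)\chi_{M/e}$.

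Your bookkeeping also makes explicit something the paper leaves implicit. Your sum $R$ is the $A(\lambda)$ in the proof of Corollary~\ref{cor:pointed-reduction-full}, hence equals $\mathcal S_0(\lambda)$, and it is exactly $\chi_M(\lambda)/(\lambda-1)$.

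One small correction to your closing remark: the exponent convention does not depend on $e$ being a non-loop. We have $r(M/e)-r_{M/e}(S)=r(M)-r(S\cup\{e\})$ for every $e$. In fact your argument goes through verbatim for a loop, where $T_e=2^{E\setminus\{e\}}$, $R=0$ and $\chi_M=0$.
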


\begin{proof}
We prove the second equality applying the formula from Proposition \ref{identity}. \\
For any $S \subseteq E \setminus \{e\}$, consider the term differences between $ \chi_{M/e}(\lambda)\lambda$ and $\chi_{M \setminus e}(\lambda)$, i.e. $(-1)^{|S|}(\lambda^{r(M)-r(S \cup \{e\}) + 1} - \lambda^{r(M)-r(S)}) $. If $e \notin \mathrm{cl}(S)$, adjoining $e$ increases rank by 1, yielding matching terms. If $e \in \mathrm{cl}(S)$, the difference evaluates to $(\lambda - 1)(-1)^{|S|} \lambda^{r(M) - r(S)}$.
\begin{align*}
&\chi_{M/e}(\lambda)\lambda - \chi_M(\lambda) \\ &=  \chi_{M/e}(\lambda)\lambda - (\chi_{M\setminus e}(\lambda) - \chi_{M / e}(\lambda)) \\&= \lambda \sum_{S \subseteq E \setminus \{e\}} (-1)^{|S|} \lambda^{r(M) - r(S \cup \{e\})} - \sum_{S \subseteq E \setminus \{e\}} (-1)^{|S|} \lambda^{r(M) - r(S)} + \sum_{S \subseteq E \setminus \{e\}} (-1)^{|S|} \lambda^{r(M) - r(S \cup \{e\})} \\
&= (\lambda - 1) \sum_{\substack{S \subseteq E \setminus \{e\} \\ e \in \mathrm{cl}(S)}} (-1)^{|S|} \lambda^{r(M) - r(S)} + \sum_{S' \subseteq E \setminus \{e\}} (-1)^{|S'|} \lambda^{r(M) - r(S' \cup \{e\})} \\
&= \chi_{M/e}(\lambda) + (\lambda - 1) W_e(\lambda).  \qedhere 
\end{align*}
\end{proof}

This can be translated in the evaluation matroid as

\begin{corollary} \label{cor:We-alpha}
    \begin{equation*}
        \alpha_1 - \alpha_0 = \frac{\alpha_1}{N} + (N - 1)\sum_{S \in T_v}(-1)^{|S|}\varphi(S)
    \end{equation*}
\end{corollary}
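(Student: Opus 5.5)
The plan is to specialize Proposition~\ref{We} to the evaluation matroid $M=(V,\overline r)$ with $e=v$ and $\lambda=N$, and then to rewrite each term using Theorem~\ref{thm:boolean-matroid-link}.

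First, $v$ is not a loop by Lemma~\ref{lem:coloop}, so Proposition~\ref{We} applies in the form \eqref{eq:We}:
\[
N\chi_{M/v}(N)-\chi_M(N)=\chi_{M/v}(N)+(N-1)W_v(N).
\]
By Theorem~\ref{thm:boolean-matroid-link} we have $N\chi_{M/v}(N)=\alpha_1$ and $\chi_M(N)=\alpha_0$. Hence the left-hand side equals $\alpha_1-\alpha_0$, and the first term on the right equals $\chi_{M/v}(N)=\alpha_1/N$.

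Next we unwind the port polynomial. The rank of $M$ is $r(M)=\overline r(V)=D$, so
\[
W_v(N)=\sum_{S\in T_v}(-1)^{|S|}N^{D-r_S}.
\]
Since $\varphi(S)=N^{D-r_S}$ by the count of systems vanishing on $S$ established before Theorem~\ref{thm:boolean-matroid-link}, this becomes $\sum_{S\in T_v}(-1)^{|S|}\varphi(S)$. Substituting gives exactly the displayed identity.

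There is no real obstacle here; this is a direct translation. The only point to check is that Proposition~\ref{We} needs only that $v$ is a non-loop. In particular it does not need $v$ to be a non-coloop, even though its proof goes through Proposition~\ref{identity}. So the identity also covers the case $d=n$. If one worries about the coloop case, it can be verified directly: when $d=n$ every $v$ is a coloop, $T_v=\emptyset$, and $\alpha_1=N\alpha_0\cdot\ldots$ reduces to the binomial case. However, the term-by-term computation in the proof of Proposition~\ref{We} is valid anyway, since it only compares $r(S\cup\{e\})$ with $r(S)$.
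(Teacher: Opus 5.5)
Your argument is correct and is essentially the paper's own proof: evaluate \eqref{eq:We} at $e=v$, $\lambda=N$, substitute $\alpha_0=\chi_M(N)$ and $\alpha_1=N\chi_{M/v}(N)$ from Theorem~\ref{thm:boolean-matroid-link}, and rewrite $W_v(N)=\sum_{S\in T_v}(-1)^{|S|}N^{D-r_S}$ as the $\varphi$-sum; you simply spell out this last step, which the paper leaves implicit. Your optional aside on $d=n$ is garbled as written (there $T_v=\emptyset$ and the identity reads $\alpha_1=N\alpha_0/(N-1)$), but it is not needed, since the paper only invokes this corollary for $d\le n-1$.
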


\begin{proof}
    Recalling that $\alpha_0 = \chi_{M}(N)$ and $\alpha_1 = \chi_{M/v}(N)N$ (hence $\chi_{M/v} = \frac{\alpha_1}{N}$), and evaluating the formula \eqref{eq:We} in $\lambda = N$, we get the result.
\end{proof}

\subsection{Involutions}\label{subsec:involutions}

The sign of $W_e(\lambda)$ is hard to read off because the ranks of the sets in $T_e$ vary. We use a sign-reversing involution, in the spirit of Whitney's broken-circuit theorem \cite{whitney32} and its bijective proof by Blass and Sagan \cite{blasssagan}, to rewrite each of the sums in \eqref{eq:We} as a sum over independent sets only.

Fix a total order $e_1<e_2<\dots$ on $E$. For $S\subseteq E$ and $p\in E$ let $S_{>p}:=\{s\in S: s>p\}$. Put $E_0 := E$ and $E_i:=E\setminus\{e_1,\dots,e_i\}$, for $i \ge 1$.

\begin{definition}\label{def:active}
Let $i\ge0$ and $S\subseteq E_i$. An element $p\in E_i$ is $i$-\emph{active} for $S$ if $p\in\cl(S_{>p})$. If $S$ has an $i$-active element, its $i$-\emph{pivot} is $p_i(S):=\min\{p\in E_i: p\ i\text{-active for }S\}$ and \\
$\iota_i(S):=S\,\triangle\,\{p_i(S)\}$; otherwise $\iota_i(S):=S$. 
\end{definition}

\begin{definition}
Let $T \subseteq M$, then $\mathrm{Fix}_i(T) := \{S \in T \mid \iota_i(S) = S\}$.
\end{definition}

\begin{observation}
    $p \in \mathrm{cl}(S_{>p})$ if and only if there is a circuit $C \subseteq S_{>p} \cup \{p\}$ with $p \in C$. Moreover $\min C = p$ since $C \setminus \{p\} \subseteq S_{>p}$ and hence exceeds $p$. For any $e_i$ such that $p > e_i$, $C$ is therefore a circuit of $M \setminus \{e_1, \dots, e_i\}$. 
\end{observation}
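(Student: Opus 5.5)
The plan is to reduce the first assertion to the standard characterisation of closure by circuits, applied to $A:=S_{>p}$. The remaining two assertions then follow by bookkeeping on the total order. The only point to check at the outset is that $p\notin A$. This holds because $S_{>p}$ consists of elements strictly larger than $p$.

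For the forward direction, assume $p\in\cl(A)$, so $r(A\cup\{p\})=r(A)$. I will choose a maximal independent subset $I\subseteq A$, so that $r(I)=r(A)$. By submodularity (R3) applied to $I\cup\{p\}$ and $A$, we get $r(I\cup\{p\})\le r(A\cup\{p\})+r(I)-r(A)=r(I)$, hence $p\in\cl(I)\setminus I$. Remark~\ref{rem:facts}(d) then gives a fundamental circuit $I^p\subseteq I\cup\{p\}\subseteq A\cup\{p\}$ that contains $p$.

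For the converse, let $C$ be a circuit with $p\in C\subseteq A\cup\{p\}$. By Remark~\ref{rem:facts}(b), $r(C)=|C|-1=r(C\setminus\{p\})$, so $p\in\cl(C\setminus\{p\})$. Since $C\setminus\{p\}\subseteq A$, monotonicity of $\cl$ (Remark~\ref{rem:facts}(c)) yields $p\in\cl(A)$. This is exactly the argument already used in Observation~\ref{obs:Te-cl}, here with $S_{>p}$ in place of $S$.

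For the second assertion, every element of $C\setminus\{p\}$ lies in $S_{>p}$ and is therefore $>p$, so $\min C=p$. For the third, if $e_i<p$ then, because $e_1<e_2<\dots$, every element of $C$ is $\ge p>e_i\ge e_j$ for all $j\le i$. Hence $C\subseteq E_i=E\setminus\{e_1,\dots,e_i\}$.

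It remains to check that $C$ is a circuit of $M\setminus\{e_1,\dots,e_i\}$. By Definition~\ref{def:basic}, the rank function of the deletion coincides with $r$ on all subsets of $E_i$. Being a circuit is a condition only on the ranks of $C$ and its subsets, via Remark~\ref{rem:facts}(b), so $C$ remains a circuit after deletion. I expect no real obstacle here. The only delicate step is making sure the independent set $I$ in the forward direction is taken inside $A$, so that the circuit produced avoids elements $\le p$ other than $p$ itself.
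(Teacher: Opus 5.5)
Your argument is the paper's own: the paper simply invokes the standard closure--circuit characterisation, the same fact stated in Observation~\ref{obs:Te-cl}, for $A=S_{>p}$, and then reads off $\min C=p$ and $C\subseteq E_i$. Your converse direction, your argument for $\min C=p$, and your deletion argument are all correct.

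One step in your forward direction is mis-justified, though. Applying (R3) to $I\cup\{p\}$ and $A$, with $I\subseteq A$ and $p\notin A$, gives $r(A\cup\{p\})+r(I)\le r(I\cup\{p\})+r(A)$. That is a \emph{lower} bound on $r(I\cup\{p\})$, not the upper bound you claim. The inequality you need follows instead from monotonicity (R2), since $I\cup\{p\}\subseteq A\cup\{p\}$: $r(I)\le r(I\cup\{p\})\le r(A\cup\{p\})=r(A)=r(I)$. Hence $p\in\cl(I)\setminus I$, and Remark~\ref{rem:facts}(d) applies as you say.
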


\begin{lemma}\label{lem:involution-full}
Let $i\ge0$. Then:
\begin{enumerate}[label=\textnormal{(\alph*)}]
\item the map $\iota_i:2^{E_i}\to2^{E_i}$ is an involution, and every non-fixed
orbit consists of two sets $\{S,\iota_i(S)\}$ with $|\iota_i(S)|=|S|\pm1$;
\item if $S\notin\Fix_i(2^{E_i})$ and $p=p_i(S)$, then $p\in\cl(S\setminus\{p\})$;
consequently $r(S\cup X)=r(\iota_i(S)\cup X)$ for every $X\subseteq E$, and in
particular $r(S)=r(\iota_i(S))$;
\item any $S\in\Fix_i(2^{E_i})$ is independent, so $r(S)=|S|$.
\end{enumerate}
\end{lemma}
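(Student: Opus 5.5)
The plan is to check the three items directly from Definition~\ref{def:active}, using one structural fact throughout: whether an element $q$ is $i$-active for $S$ depends only on $S_{>q}$, and closures are monotone and idempotent (Remark~\ref{rem:facts}(c)). I would prove (b) first, since its key inclusion also drives (a), and then (c), which is independent of the other two.

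For (b), let $S\notin\Fix_i(2^{E_i})$ and $p=p_i(S)$. Since $p\in\cl(S_{>p})$ and $S_{>p}\subseteq S\setminus\{p\}$, monotonicity gives $p\in\cl(S\setminus\{p\})$. Hence, for every $X\subseteq E$, monotonicity again gives $p\in\cl((S\setminus\{p\})\cup X)$. Therefore
\[
r((S\setminus\{p\})\cup X)=r(S\cup\{p\}\cup X).
\]
The sets $S\cup X$ and $\iota_i(S)\cup X$ are exactly these two sets, in some order, so $r(S\cup X)=r(\iota_i(S)\cup X)$. Taking $X=\emptyset$ gives $r(S)=r(\iota_i(S))$.

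For (a), put $S'=\iota_i(S)=S\triangle\{p\}$. Clearly $S'\subseteq E_i$ and $|S'|=|S|\pm1$. The main step, and the only delicate point, is to show that $S'$ has exactly the same $i$-active elements below and at $p$ as $S$. This guarantees $p_i(S')=p$, and hence $\iota_i(S')=S$. I would split into three cases according to the position of $q\in E_i$ relative to $p$.
\begin{itemize}
\item If $q>p$, then $S'_{>q}=S_{>q}$, so nothing changes. (These $q$ do not affect the minimum in any case.)
\item If $q=p$, then $S'_{>p}=S_{>p}$, so $p$ is still active for $S'$.
\item If $q<p$, then $S'_{>q}=S_{>q}\triangle\{p\}$. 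Since $S_{>p}\subseteq S_{>q}\setminus\{p\}$ and $p\in\cl(S_{>p})$, idempotency and monotonicity give
\[
\cl(S_{>q}\setminus\{p\})=\cl(S_{>q}\cup\{p\}).
\]
So $S_{>q}$ and $S'_{>q}$ have the same closure, and $q$ is active for $S'$ if and only if it is active for $S$. In particular no $q<p$ is active for $S'$.
\end{itemize}
Thus $p_i(S')=p$ and $\iota_i$ swaps $S$ and $S'$. Fixed sets are sent to themselves, so $\iota_i$ is an involution, and its non-fixed orbits are pairs $\{S,\iota_i(S)\}$ whose cardinalities differ by one, as claimed.

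For (c), I would argue by contraposition. Suppose $S\subseteq E_i$ is dependent; then it contains a circuit $C$. Let $p=\min C$. Then $C\setminus\{p\}\subseteq S_{>p}$, so $p\in\cl(S_{>p})$. Moreover $p\in C\subseteq S\subseteq E_i$, so $p$ is an $i$-active element for $S$. Hence $S$ has a pivot and $\iota_i(S)\neq S$. Consequently every fixed set is independent, so $r(S)=|S|$. I expect no real obstacle beyond the case $q<p$ in (a).
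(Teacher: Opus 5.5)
Your proposal is correct and follows essentially the same route as the paper: you show $p_i(\iota_i(S))=p_i(S)$ by comparing the closures of $S_{>q}$ and $S'_{>q}$ for $q<p$, you get rank invariance from $p\in\cl(S\setminus\{p\})$, and you prove (c) by taking $p=\min C$ for a circuit $C\subseteq S$. Your version is slightly tidier in three places. You obtain $p\in\cl(S\setminus\{p\})$ directly from monotonicity and $S_{>p}\subseteq S\setminus\{p\}$. You prove equality of the two closures rather than a one-sided inclusion. You also spell out the general statement $r(S\cup X)=r(\iota_i(S)\cup X)$ for every $X$, which the paper's proof leaves implicit.
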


\begin{proof}
\textbf{Involution}: We prove that for any $S \subseteq M$, $p_i(S) = p_i(\iota_i(S))$. This will give that $\iota_i(\iota_i(S)) = \iota_i(S \triangle \{p_i(S)\}) = (S \triangle \{p_i(S)\}) \triangle \{p_i(S)\} = S$, i.e. $\iota_i$ is an involution. \\
Let $p = p_i(S)$ and $S' = \iota_i(S) = S \triangle \{p\}$. Adding or removing $p$ changes $S_{>q}$ only for $q < p$, and leaves $S_{>q}$ unchanged for $q \ge p$; in particular $S'_{>p} = S_{>p}$, so $p$ remains active for $S'$. \\
We claim no $q < p$ is active for $S'$. For $q < p$ we have $S_{>p} \subseteq S_{>q}$, hence $\mathrm{cl}(S_{>p}) \subseteq \mathrm{cl}(S_{>q})$, and since $p \in \mathrm{cl}(S_{>p})$ we get $p \in \mathrm{cl}(S_{>q})$; therefore
\[
\mathrm{cl}(S_{>q} \cup \{p\}) = \mathrm{cl}(S_{>q}) \quad \text{and} \quad \mathrm{cl}(S_{>q} \setminus \{p\}) \subseteq \mathrm{cl}(S_{>q}).
\]
The equality (when $p$ is added) shows adjoining $p$ to $S_{>q}$ is closure-neutral, so it cannot bring $q$ into the closure; the inclusion (when $p$ is removed) shows deleting $p$ can only shrink the closure. In both directions, $q \in \mathrm{cl}(S'_{>q})$ would force $q \in \mathrm{cl}(S_{>q})$, equivalently $q$ active for $S$, contradicting the minimality of $p$. Hence $p$ is the least active element of $S'$, so $p(S') = p$ and $\iota$ is an involution.

\textbf{Rank preservation and sign reversal}: Write $A = S \setminus \{p\}$, so $\{S, \iota (S)\} = \{A, A \cup \{p\}\}$. The witnessing circuit $C$ with $\min C = p$ and $C \setminus \{p\} \subseteq S_{>p} \subseteq A$ shows $p \in \mathrm{cl}(A)$. Since $|\iota (S)| = |S| \pm 1$, the two members of each non-fixed orbit carry opposite signs and equal rank.

\textbf{Fixed points}: By definition of $\iota_i$, $S$ is fixed iff it has no
$i$-active element. Such an $S$ is independent: if $S$ contained a circuit $C$, then
$p:=\min C$ would satisfy $C\setminus\{p\}\subseteq S_{>p}$, hence
$p\in\cl(S_{>p})$, so $p$ would be $i$-active for $S$, a contradiction. Hence
$r(S)=|S|$, which is (c).
\end{proof}

\begin{corollary}[restricted Whitney theorem]\label{cor:whitney}
If $T\subseteq2^{E_i}$ satisfies $\iota_i(T)=T$ and $X\subseteq E$, then
\[
\sum_{S\in T}(-1)^{|S|}\lambda^{r(M)-r(S\cup X)}=\sum_{S\in\Fix_i(T)}(-1)^{|S|}\lambda^{r(M)-r(S\cup X)} .
\]
In particular, taking $i=0$, $T=2^E$ and $X=\emptyset$,
\[
\chi_M(\lambda)=\sum_{S\in\Fix_0(2^E)}(-1)^{|S|}\lambda^{r(M)-|S|}.
\]
\end{corollary}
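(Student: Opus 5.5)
The plan is to derive both identities from Lemma~\ref{lem:involution-full}, using the fact that $\iota_i$ partitions $T$ into fixed points and two-element orbits.

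First, the hypothesis $\iota_i(T)=T$ means that $T$ is a union of orbits of the involution $\iota_i$ on $2^{E_i}$. Hence
\[
T=\Fix_i(T)\ \sqcup\ \bigsqcup\bigl\{\{S,\iota_i(S)\}: S\in T\setminus\Fix_i(T)\bigr\}.
\]
By Lemma~\ref{lem:involution-full}(a) each non-fixed orbit consists of exactly two sets whose cardinalities differ by one. Their signs $(-1)^{|S|}$ are therefore opposite.

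Next, I would show that the two members of each orbit carry the same power of $\lambda$. By Lemma~\ref{lem:involution-full}(b), applied with the given $X$, we have $r(S\cup X)=r(\iota_i(S)\cup X)$. The underlying reason is that $p_i(S)\in\cl(S\setminus\{p_i(S)\})\subseteq\cl(S\setminus\{p_i(S)\}\cup X)$, so adding or removing the pivot does not change the rank even after adjoining $X$. So each two-element orbit contributes
\[
(-1)^{|S|}\lambda^{r(M)-r(S\cup X)}+(-1)^{|S|\pm1}\lambda^{r(M)-r(S\cup X)}=0.
\]
Summing over orbits, only the terms from $\Fix_i(T)$ survive, which is the first identity.

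For the special case, take $i=0$, $T=2^E$ and $X=\emptyset$. Here $\iota_0(2^E)=2^E$ holds trivially. The left-hand side is then $\chi_M(\lambda)$ by Definition~\ref{def:basic}. By Lemma~\ref{lem:involution-full}(c), every fixed set is independent, so $r(S)=|S|$, and the right-hand side becomes the stated sum.

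There is no real obstacle, since the lemma carries the work. The only points to check are two small ones. The first is that (b) is used with an arbitrary $X$, not just $X=\emptyset$. The second is that $T$ being $\iota_i$-stable makes the pairing internal to $T$, so the cancellation stays within the sum.
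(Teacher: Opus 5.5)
Your proposal is correct and is essentially the paper's proof. You pair the non-fixed members of the $\iota_i$-stable family $T$ using Lemma~\ref{lem:involution-full}(a),(b), so that paired sets have opposite signs and equal $r(S\cup X)$ and cancel, and you use Lemma~\ref{lem:involution-full}(c) to replace $r(S)$ by $|S|$ in the special case. Your justification of (b) for arbitrary $X$, via $p_i(S)\in\cl(S\setminus\{p_i(S)\})\subseteq\cl\bigl((S\setminus\{p_i(S)\})\cup X\bigr)$, is exactly the argument the lemma supplies.
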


\begin{proof}
By Lemma~\ref{lem:involution-full}(a),(b) the non-fixed members of $T$ split into
pairs $\{S,\iota_i(S)\}$ with $|\iota_i(S)|=|S|\pm1$ and
$r(S\cup X)=r(\iota_i(S)\cup X)$, whose contributions cancel. For the second
formula use Lemma~\ref{lem:involution-full}(c) to replace $r(S)$ by $|S|$ on the
fixed sets.
\end{proof}

We now prove the $\iota_i$-invariance of two subsets of $M$, that will appear in the general formula in Corollary \ref{cor:pointed-reduction-full}.

\begin{lemma}\label{lem:invariance-Te-full}
Let $e := e_1$. The following hold:
\begin{itemize}
    \item[1.] $T_e$ is $\iota_1$-invariant and for $S \in \mathrm{Fix}_1(T_e)$, $r(S) = r(S \cup \{e\})=|S|$;
    \item[2.] $\overline{T_e} := \{S \in M \setminus T_e \mid e \notin S\}$ is $\iota_0$-invariant and for $S \in \mathrm{Fix}_0(\overline{T_e})$, $r(S)=|S|$.
\end{itemize}
\end{lemma}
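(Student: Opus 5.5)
The plan is to reduce both invariance claims to one fact: an involution pair has the same closure. Let $S$ be non-fixed, with pivot $p=p_i(S)$, and write $\{S,\iota_i(S)\}=\{A,A\cup\{p\}\}$ with $A=S\setminus\{p\}$. Lemma~\ref{lem:involution-full}(b) gives $p\in\cl(A)$. Then extensivity, monotonicity and idempotency of $\cl$ (Remark~\ref{rem:facts}(c)) give
\[
\cl(A\cup\{p\})=\cl(A),\qquad\text{that is,}\qquad \cl(S)=\cl(\iota_i(S)).
\]
Together with Observation~\ref{obs:Te-cl}, which says membership in $T_e$ is the condition $e\in\cl(S)$ for $S\subseteq E\setminus\{e\}$, this handles most of the work. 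Both statements also need $\iota_i(S)$ to stay in the right ground set.

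\textbf{Part 1.} We have $T_e\subseteq 2^{E\setminus\{e\}}=2^{E_1}$, which is exactly the domain of $\iota_1$. Hence $\iota_1(S)\subseteq E_1$ automatically, and in particular $e\notin\iota_1(S)$.
\begin{itemize}
\item If $S\in T_e$ is not fixed, the closure identity gives $e\in\cl(S)=\cl(\iota_1(S))$, so $\iota_1(S)\in T_e$. Since $\iota_1$ is an involution, $\iota_1(T_e)=T_e$.
\item If $S\in\Fix_1(T_e)$, Lemma~\ref{lem:involution-full}(c) makes $S$ independent, so $r(S)=|S|$. Since $e\in\cl(S)$, adjoining $e$ does not raise the rank, so $r(S\cup\{e\})=r(S)=|S|$.
\end{itemize}

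\textbf{Part 2.} By Observation~\ref{obs:Te-cl}, $\overline{T_e}=\{S\subseteq E\setminus\{e\}: e\notin\cl(S)\}$. Now the involution $\iota_0$ acts on all of $2^{E}$, so it could a priori toggle $e$ itself into $S$. This is the only real point to check.

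Since $e=e_1$ is the minimum of the order, $S_{>e}=S$ for every $S$ not containing $e$. So $e$ is $0$-active for $S$ if and only if $e\in\cl(S)$, which fails for $S\in\overline{T_e}$. Therefore the pivot satisfies $p_0(S)\neq e$, and $\iota_0(S)$ still avoids $e$. The closure identity then gives $e\notin\cl(\iota_0(S))$, so $\iota_0(S)\in\overline{T_e}$, and $\overline{T_e}$ is $\iota_0$-invariant. For fixed sets, Lemma~\ref{lem:involution-full}(c) again gives independence, so $r(S)=|S|$.

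The main obstacle is only this pivot-avoids-$e$ check in Part 2. The rest is bookkeeping with the closure operator.
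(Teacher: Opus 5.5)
Your proposal is correct and follows the paper's proof essentially step for step. In both, $p\in\cl(A)$ gives the closure identity $\cl(A\cup\{p\})=\cl(A)$, which with Observation~\ref{obs:Te-cl} yields invariance of both families; $e=e_1$ being minimal (so $S_{>e}=S$) rules out $e$ as the $0$-pivot in part 2; and Lemma~\ref{lem:involution-full}(c) gives independence of the fixed sets.
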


\begin{proof}
\textbf{(1)}: Let $\iota = \iota_1$, $S \in T_e$, $S \ni p = p(S)$ and $A = S \setminus \{p\}$. The closure operator is monotone and idempotent, so $p \in \mathrm{cl}(A)$ forces
\[
\mathrm{cl}(A) \subseteq \mathrm{cl}(A \cup \{p\}) \subseteq \mathrm{cl}(\mathrm{cl}(A)) = \mathrm{cl}(A),
\]
that is, $\mathrm{cl}(A \cup \{p\}) = \mathrm{cl}(A)$. Thus $A$ and $A \cup \{p\}$ have the same closure $\mathrm{cl}(A)$; in particular $\mathrm{cl}(\iota(S)) = \mathrm{cl}(S) = \mathrm{cl}(A)$ and $r(\iota(S)) = r(S)$. Consequently membership of any fixed element in the closure is unaffected by the toggle:
\[
e \in \mathrm{cl}(S) \iff e \in \mathrm{cl}(A) \iff e \in \mathrm{cl}(\iota(S)).
\]
Hence $\iota$ maps $T_e$ to itself. \\
The rank formula is given directly by the definition of $T_e$ and Lemma \ref{lem:involution-full}(c).\\

\textbf{(2)}: Let $S\in\overline{T_e}$. Since $e\notin S$ we have $S_{>e}=S$, so $e\in\cl(S_{>e})$ would mean $S\in T_e$; thus $e$ is not $0$-active for $S$, $p_0(S)\ne e$ if it exists, and $\iota_0(S)\subseteq E\setminus\{e\}$. If $S$ is not fixed, $\cl(\iota_0(S))=\cl(S)\not\ni e$, so $\iota_0(S)\in\overline{T_e}$. Fixed members are independent by Lemma~\ref{lem:involution-full}(c), and $r(S\cup\{e\})=r(S)+1$ since $e\notin\cl(S)$.
\end{proof}

\begin{corollary}\label{cor:pointed-reduction-full}
If $e := e_1$ is not a loop then
\[
 \chi_{M/e}(\lambda)\lambda - \chi_M(\lambda) = \sum_{S \in \mathrm{Fix}_0(\overline{T_e})} (-1)^{|S|} \lambda^{r(M) - r(S \cup \{e\})} + \lambda \sum_{S \in \mathrm{Fix}_1(T_e)} (-1)^{|S|} \lambda^{r(M) - r(S)}.
\]
\end{corollary}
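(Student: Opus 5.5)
We start from equation \eqref{eq:We} of Proposition~\ref{We},
\[
\chi_{M/e}(\lambda)\lambda-\chi_M(\lambda)=\chi_{M/e}(\lambda)+(\lambda-1)W_e(\lambda),
\]
and rewrite the two pieces on the right separately with the restricted Whitney theorem, Corollary~\ref{cor:whitney}. The plan is to check that all the ingredients of that corollary are already in place in Lemma~\ref{lem:invariance-Te-full}, so that the proof amounts to bookkeeping.

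\textbf{First piece.} We split $\chi_{M/e}(\lambda)=\sum_{S\subseteq E\setminus\{e\}}(-1)^{|S|}\lambda^{r(M)-r(S\cup\{e\})}$ according to whether $S\in T_e$ or $S\in\overline{T_e}$. These two families partition $2^{E\setminus\{e\}}$.

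On $\overline{T_e}$ we apply Corollary~\ref{cor:whitney} with $i=0$, $T=\overline{T_e}$ and $X=\{e\}$. This is legitimate because $\overline{T_e}$ is $\iota_0$-invariant by Lemma~\ref{lem:invariance-Te-full}(2). The sum over $\overline{T_e}$ then collapses to the sum over $\Fix_0(\overline{T_e})$, which is exactly the first term of the claimed identity.

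On $T_e$ we have $e\in\cl(S)$, so $r(S\cup\{e\})=r(S)$. Hence this part equals $W_e(\lambda)$.

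\textbf{Combining.} Adding the two contributions from $\chi_{M/e}$ to $(\lambda-1)W_e(\lambda)$ gives
\[
\chi_{M/e}(\lambda)\lambda-\chi_M(\lambda)=\sum_{S\in\Fix_0(\overline{T_e})}(-1)^{|S|}\lambda^{r(M)-r(S\cup\{e\})}+\lambda W_e(\lambda).
\]

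\textbf{Second piece.} It remains to show $W_e(\lambda)=\sum_{S\in\Fix_1(T_e)}(-1)^{|S|}\lambda^{r(M)-r(S)}$. Since $T_e\subseteq 2^{E_1}$ and $T_e$ is $\iota_1$-invariant by Lemma~\ref{lem:invariance-Te-full}(1), Corollary~\ref{cor:whitney} with $i=1$ and $X=\emptyset$ gives this directly.

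\textbf{Where care is needed.} The delicate point is compatibility of the orders and involutions. For $\overline{T_e}$ we use $\iota_0$ on $2^E$, and must know that the pivot is never $e$, so that $\iota_0$ preserves $\overline{T_e}$. That is exactly what Lemma~\ref{lem:invariance-Te-full}(2) records. We also need Lemma~\ref{lem:involution-full}(b), with $X=\{e\}$, to ensure that paired sets have equal $r(S\cup\{e\})$. With these facts checked, the identity follows.

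The non-loop hypothesis enters only through Proposition~\ref{We}. That proposition as stated also requires $e$ not to be a coloop, which the standing assumption of this section provides; alternatively, the direct termwise computation in its proof avoids Proposition~\ref{identity} altogether.
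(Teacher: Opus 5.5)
Your proposal is correct and matches the paper's proof. Like the paper, you split $\chi_{M/e}$ over $\overline{T_e}\sqcup T_e$ and identify the $T_e$ part with $W_e$ so that Proposition~\ref{We} gives $A(\lambda)+\lambda W_e(\lambda)$, then apply Corollary~\ref{cor:whitney} with $(0,\overline{T_e},\{e\})$ and $(1,T_e,\varnothing)$ via Lemma~\ref{lem:invariance-Te-full}; collapsing the $\overline{T_e}$ sum before combining rather than after is immaterial.
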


\begin{proof}
Both $\overline{T_e}$ and $T_e$ consist of subsets of $E\setminus\{e\}$, and
$2^{E\setminus\{e\}}=\overline{T_e}\sqcup T_e$. Splitting the contraction formula
of Remark~\ref{rem:facts} accordingly, and using $r(S\cup\{e\})=r(S)$ for
$S\in T_e$ (by definition $e\in\cl(S)$ there), gives
\[
\chi_{M/e}(\lambda)=\underbrace{\sum_{S\in\overline{T_e}}
(-1)^{|S|}\lambda^{r(M)-r(S\cup\{e\})}}_{=:A(\lambda)}
\;+\;\underbrace{\sum_{S\in T_e}
(-1)^{|S|}\lambda^{r(M)-r(S)}}_{=\,W_e(\lambda)} .
\]
By Proposition~\ref{We},
\[
\lambda\chi_{M/e}(\lambda)-\chi_M(\lambda)
=\chi_{M/e}(\lambda)+(\lambda-1)W_e(\lambda)
=\bigl(A(\lambda)+W_e(\lambda)\bigr)+(\lambda-1)W_e(\lambda)
=A(\lambda)+\lambda\,W_e(\lambda).
\]
Finally, $\overline{T_e}$ is $\iota_0$-invariant and $T_e$ is $\iota_1$-invariant
by Lemma~\ref{lem:invariance-Te-full}, so Corollary~\ref{cor:whitney} applies to
each sum: with $(i,T,X)=(0,\overline{T_e},\{e\})$ it replaces $A(\lambda)$ by the
sum over $\Fix_0(\overline{T_e})$, and with $(i,T,X)=(1,T_e,\varnothing)$ it
replaces $W_e(\lambda)$ by the sum over $\Fix_1(T_e)$. This is the stated formula.
\end{proof}

As usual, this translates as
\begin{corollary}
    \[
    \alpha_1 - \alpha_0 = \sum_{S \in \mathrm{Fix}_0(\overline{T_v})} (-1)^{|S|}\varphi(S) + N\sum_{S \in \mathrm{Fix}_1(T_v)} (-1)^{|S|}\varphi(S).
    \]
\end{corollary}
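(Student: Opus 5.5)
The plan is to specialise Corollary~\ref{cor:pointed-reduction-full} to the evaluation matroid $M=(V,\overline r)$ and set $\lambda=N$. In the same way, Corollary~\ref{cor:We-alpha} was obtained from Proposition~\ref{We}.

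First I would check the hypotheses. By Lemma~\ref{lem:coloop}, no $v\in V$ is a loop. I would then fix a total order on $V$ in which the chosen point $v$ comes first, so that $v=e_1$. Corollary~\ref{cor:pointed-reduction-full} then applies with $e=v$ and gives the polynomial identity
\[
\lambda\chi_{M/v}(\lambda)-\chi_M(\lambda)=\sum_{S\in\Fix_0(\overline{T_v})}(-1)^{|S|}\lambda^{D-r(S\cup\{v\})}+\lambda\sum_{S\in\Fix_1(T_v)}(-1)^{|S|}\lambda^{D-r(S)},
\]
where $r(M)=D$. Next I would set $\lambda=N$. By Theorem~\ref{thm:boolean-matroid-link}, $N\chi_{M/v}(N)=\alpha_1$ and $\chi_M(N)=\alpha_0$, so the left-hand side becomes exactly $\alpha_1-\alpha_0$. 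This step is routine.

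The remaining work is to rewrite each power of $N$ as a value of $\varphi$, using the formula $\varphi(S)=N^{D-r_S}$ from the counting subsection.

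For the second sum, the term is $N\cdot N^{D-r(S)}=N\varphi(S)$ directly. Lemma~\ref{lem:invariance-Te-full}(1) also gives $r(S)=r(S\cup\{v\})=|S|$, so this term equals $N\cdot N^{D-|S|}$.

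For the first sum, the exponent is $D-r(S\cup\{v\})$. Lemma~\ref{lem:invariance-Te-full}(2) gives $r(S\cup\{v\})=|S|+1$ for $S\in\Fix_0(\overline{T_v})$, so the term is $N^{D-|S|-1}=\varphi(S\cup\{v\})$. This is the step I expect to need the most care. When matching the result to the displayed formula, the quantity entering the first sum is $\lambda^{D-r(S\cup\{v\})}$, the number of systems vanishing on $S\cup\{v\}$. I would verify carefully, against the convention used for $\varphi$ in the first sum of the displayed statement, that the bookkeeping of the extra point $v$ agrees. Concretely, I would recompute one small case: $n=2$, $d=2$, where every point is a coloop and $\overline{T_v}$ is all of $2^{V\setminus\{v\}}$. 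This checks that the normalisation of the first sum is the one written before the identity is used in Section~\ref{sec:reduction}.

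Once this identification is settled, the displayed identity follows by substituting these term values into the evaluated polynomial identity. No further cancellation is needed, because the involution argument has already reduced both sums to independent sets.
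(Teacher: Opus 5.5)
Your route is the paper's: put $v=e_1$, evaluate Corollary~\ref{cor:pointed-reduction-full} at $\lambda=N$, and use Theorem~\ref{thm:boolean-matroid-link}. Your term-by-term computations are correct. The gap is the step you defer: the bookkeeping you want to ``settle'' cannot be settled in favour of the displayed formula. For $S\in\Fix_0(\overline{T_v})$ you correctly find the term $N^{D-|S|-1}=\varphi(S\cup\{v\})$. Since $v\notin\cl(S)$, this equals $\varphi(S)/N$, never $\varphi(S)$. So substituting your term values does not give the printed identity.

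In fact the printed identity is false when read with $\varphi(S)=N^{D-r_S}$ as defined in the paper. Its first sum equals $N\mathcal S_0(N)$ and its second equals $N\mathcal S_1(N)$. By the proof of Corollary~\ref{cor:pointed-reduction-full}, $\mathcal S_0+\mathcal S_1=\chi_{M/v}$, so the right-hand side is $N\chi_{M/v}(N)=\alpha_1$. The identity would then say $\alpha_0=0$, but any system with $f_1=1$ has no zeros.

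Your own test case shows this. For $n=d=2$ one has $\Fix_1(T_v)=\emptyset$ and $\Fix_0(\overline{T_v})=2^{V\setminus\{v\}}$. Then $\sum_S(-1)^{|S|}\varphi(S)=4^4-3\cdot4^3+3\cdot4^2-4=108=\alpha_1$. On the other hand, $\alpha_1-\alpha_0=108-81=27=\sum_S(-1)^{|S|}\varphi(S\cup\{v\})$.

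What your argument actually proves is
\[
\alpha_1-\alpha_0=\sum_{S\in\Fix_0(\overline{T_v})}(-1)^{|S|}\varphi(S\cup\{v\})+N\sum_{S\in\Fix_1(T_v)}(-1)^{|S|}\varphi(S),
\]
that is, \eqref{eq:ab} at $\lambda=N$. The paper's one-line ``translation'' is the same specialisation and yields the same corrected form, which is the one the rest of the paper uses. You should conclude that the first sum of the statement is misprinted by a factor $N$, rather than asserting that the displayed identity follows.
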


\subsection{The structure of the fixed families and the main inequality}\label{subsec:fixed}

\begin{lemma}\label{lem:interval}
Let $S\in\Fix_1(T_e)$. Then $S\cup\{e\}$ contains exactly one circuit $C$, and $e\in C$; the set $S^e:=C\setminus\{e\}$ is the unique member of $P_e$ contained in $S$; and every $S'$ with $S^e\subseteq S'\subseteq S$ belongs to $\Fix_1(T_e)$.
\end{lemma}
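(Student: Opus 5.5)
The plan is to combine three ingredients established earlier: fixed points of $\iota_1$ are independent (Lemma~\ref{lem:involution-full}(c)), the fundamental-circuit property (Remark~\ref{rem:facts}(d)), and monotonicity of the closure operator.

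First, the unique circuit. Let $S\in\Fix_1(T_e)$. By Lemma~\ref{lem:involution-full}(c), $S$ is independent. Since $S\in T_e$, Observation~\ref{obs:Te-cl} gives $e\in\cl(S)$, and $e\notin S$ because $S\subseteq E\setminus\{e\}=E_1$. Remark~\ref{rem:facts}(d) then says that $S\cup\{e\}$ contains exactly one circuit $C=S^e$ (the fundamental circuit), and that $e\in C$. Hence $S^e:=C\setminus\{e\}\in P_e$ and $S^e\subseteq S$.

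Second, uniqueness in $P_e$. Suppose $C'\in P_e$ with $C'\subseteq S$. Then $C'\cup\{e\}$ is a circuit contained in $S\cup\{e\}$. By the uniqueness just proved, $C'\cup\{e\}=C$, so $C'=S^e$.

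Third, the interval property. Take $S'$ with $S^e\subseteq S'\subseteq S$.
\begin{itemize}
\item $S'\subseteq E_1$, and $S'\supseteq S^e\in P_e$, so $S'\in T_e$ by definition of the upfilter.
\item It remains to show $\iota_1(S')=S'$, i.e.\ that $S'$ has no $1$-active element. Suppose $p\in E_1$ were $1$-active for $S'$, so $p\in\cl(S'_{>p})$.
\item Since $S'_{>p}\subseteq S_{>p}$, monotonicity of $\cl$ gives $p\in\cl(S_{>p})$. Thus $p$ would be $1$-active for $S$, contradicting $S\in\Fix_1(T_e)$.
\end{itemize}
The active elements are drawn from the same set $E_1$ for both $S$ and $S'$, so the comparison is legitimate. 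Note also that $e=e_1\notin E_1$ can never serve as a pivot here, which is exactly why the sets in $T_e$ are compatible with $\iota_1$.

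I do not expect a genuine obstacle. The only point needing care is to use the definition of activity (an element $p\in E_1$ that need not lie in $S$), rather than circuits inside $S$. Working with the definition makes monotonicity under shrinking $S$ immediate.
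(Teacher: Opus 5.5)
Your proof is correct and follows the paper's argument essentially step for step: independence of $S$ via Lemma~\ref{lem:involution-full}(c) together with the fundamental-circuit property of Remark~\ref{rem:facts}(d), the same uniqueness argument for members of $P_e$, and monotonicity of $\cl$ to show that a $1$-active element of $S'$ would be $1$-active for $S$. The differences are cosmetic: you get $S'\in T_e$ directly from the upfilter definition, whereas the paper uses $e\in\cl(S^e)\subseteq\cl(S')$. Also, your phrase ``$C=S^e$'' uses the notation $I^e$ of Remark~\ref{rem:facts}(d) for the whole circuit, which clashes with the lemma's convention $S^e:=C\setminus\{e\}$.
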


\begin{proof}
$S$ is independent and $e\in\cl(S)\setminus S$, so Remark~\ref{rem:facts}(d) gives the unique circuit $C\ni e$. If $P\in P_e$ and $P\subseteq S$, then $P\cup\{e\}$ is a circuit in $S\cup\{e\}$, hence equals $C$. Let $S^e\subseteq S'\subseteq S$. Then $e\in\cl(S^e)\subseteq\cl(S')$, so $S'\in T_e$; and if $p\in E_1$ were $1$-active for $S'$ then $p\in\cl(S'_{>p})\subseteq\cl(S_{>p})$ would be $1$-active for $S$. So $S'\in\Fix_1(T_e)$.
\end{proof}

For $k\ge0$ set
\[
a_k:=\bigl|\{S\in\Fix_0(\overline{T_e}):|S|=k\}\bigr|,\quad
b_k:=\bigl|\{S\in\Fix_1(T_e):|S|=k\}\bigr|,\quad
\mu_k:=\bigl|\{P\in P_e:|P|=k\}\bigr| .
\]
Then $a_0=1$ ($\emptyset\in\Fix_0(\overline{T_e})$ since $e$ is not a loop) and $b_0=b_1=0$ ($e$ is not a loop and has no parallel element). Corollary~\ref{cor:pointed-reduction-full} reads
\begin{equation}\label{eq:ab}
\lambda\chi_{M/e}(\lambda)-\chi_M(\lambda)=\mathcal S_0(\lambda)+\lambda\,\mathcal S_1(\lambda),
\end{equation}
where
\[
\mathcal S_0(\lambda):=\sum_{k\ge0}(-1)^ka_k\lambda^{r(M)-k-1},\qquad
\mathcal S_1(\lambda):=\sum_{k\ge0}(-1)^kb_k\lambda^{r(M)-k},
\]
and $\mathcal S_1(\lambda)=W_e(\lambda)$, $\mathcal S_0(\lambda)=\chi_{M/e}(\lambda)-W_e(\lambda)$ by the proof of Corollary~\ref{cor:pointed-reduction-full}.

\begin{lemma}[shadow bounds]\label{lem:shadow}
For every $k\ge0$:
\begin{enumerate}[label=\textnormal{(\roman*)}]
\item $(k+1)\,a_{k+1}\le(|E|-1-k)\,a_k$;
\item $b_{k+1}\le(|E|-1-k)\,b_k+\mu_{k+1}$.
\end{enumerate}
\end{lemma}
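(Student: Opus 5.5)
Both inequalities will come from a double count of pairs $(S',S)$ with $S'\subset S$ and $|S\setminus S'|=1$, that is, from a shadow argument. The key input is that each family is closed downward, exactly in (i) and up to the port in (ii). The ground set of each family is $E\setminus\{e\}$, which has $|E|-1$ elements; this is where the factor $|E|-1-k$ comes from.

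\textbf{(i)} The first step is to show that $\Fix_0(\overline{T_e})$ is closed under taking subsets. Let $S\in\Fix_0(\overline{T_e})$ and $S'\subseteq S$.
\begin{itemize}
\item Since $e\notin\cl(S)$, monotonicity of closure gives $e\notin\cl(S')$, so $S'\in\overline{T_e}$.
\item If some $p$ were $0$-active for $S'$, then $p\in\cl(S'_{>p})\subseteq\cl(S_{>p})$, so $p$ would be $0$-active for $S$. This contradicts $S$ being fixed.
\end{itemize}
Hence $S'$ is again fixed. Now count pairs $(S',S)$ with $S\in\Fix_0(\overline{T_e})$, $|S|=k+1$, and $S'=S\setminus\{x\}$.
\begin{itemize}
\item Each such $S$ produces exactly $k+1$ pairs, and every $S'$ obtained lies in the family by downward closure.
\item Each $S'$ of size $k$ is hit at most $|E\setminus\{e\}|-k=|E|-1-k$ times, once for each element of $E\setminus\{e\}$ not in $S'$.
\end{itemize}
Comparing the two counts gives $(k+1)a_{k+1}\le(|E|-1-k)a_k$.

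\textbf{(ii)} Lemma~\ref{lem:interval} supplies only partial downward closure for $\Fix_1(T_e)$: every $S$ in the family contains a unique $S^e\in P_e$, and every set in the interval $[S^e,S]$ also lies in $\Fix_1(T_e)$. I would split the sets $S\in\Fix_1(T_e)$ with $|S|=k+1$ into two kinds.
\begin{itemize}
\item If $S=S^e$, then $S$ is a port member of size $k+1$. There are at most $\mu_{k+1}$ such sets.
\item If $S\supsetneq S^e$, choose $x\in S\setminus S^e$. A canonical choice is the largest such element. Then $S'=S\setminus\{x\}$ lies in $[S^e,S]$, so $S'\in\Fix_1(T_e)$ and $|S'|=k$.
\end{itemize}
The map $S\mapsto S'$ on the second kind is at most $(|E|-1-k)$-to-one, because $S=S'\cup\{x\}$ with $x\in E\setminus(\{e\}\cup S')$. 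This bounds the second kind by $(|E|-1-k)b_k$. Adding the first kind gives $b_{k+1}\le(|E|-1-k)b_k+\mu_{k+1}$. A sharper factor $1/(k+1-|S^e|)$ would be available, but the statement does not need it.

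\textbf{Main obstacle.} The only delicate point is making the split in (ii) exhaustive and the map well defined. This rests on the uniqueness of $S^e$ in Lemma~\ref{lem:interval}: $S\setminus S^e$ is empty exactly when $S$ is itself a port member, and otherwise any removable element keeps the set in the interval. I would also check that every member of $\Fix_1(T_e)$ avoids $e$, so that $|E|-1$ is the correct ambient count. This holds because $T_e$ consists of subsets of $E\setminus\{e\}$ and $e=e_1$ is excluded from $E_1$.
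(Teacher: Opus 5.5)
Your proof is correct and follows the paper's argument. For (i) you use downward closure of $\Fix_0(\overline{T_e})$ and the double count of $(k,k+1)$ pairs inside $E\setminus\{e\}$; for (ii) you split the size-$(k+1)$ members of $\Fix_1(T_e)$ into port members (at most $\mu_{k+1}$) and the rest, which contain a size-$k$ member by Lemma~\ref{lem:interval}. Your canonical choice of $x$ just makes explicit the at-most-$(|E|-1-k)$-to-one map that the paper leaves implicit; your aside about a sharper factor is not uniform, since $|S^e|$ varies with $S$, but nothing depends on it.
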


\begin{proof}
(i) $\Fix_0(\overline{T_e})$ is closed under taking subsets: if $S'\subseteq S$ then $e\notin\cl(S')$ because $\cl(S')\subseteq\cl(S)$, and a $0$-active element of $S'$ is $0$-active for $S$. Count the pairs $(T,S)$ with $S\in\Fix_0(\overline{T_e})$, $|S|=k+1$, $T\subseteq S$, $|T|=k$: each $S$ gives $k+1$ pairs, each $T$ at most $|E|-1-k$ (the supersets of $T$ of size $k+1$ in $E\setminus\{e\}$).

(ii) Let $S\in\Fix_1(T_e)$, $|S|=k+1$. If $S\notin P_e$, then $S^e\subsetneq S$ and for $x\in S\setminus S^e$ the set $S\setminus\{x\}$ lies in $\Fix_1(T_e)$ by Lemma~\ref{lem:interval}. Hence every such $S$ has a subset of size $k$ in $\Fix_1(T_e)$, each of which has at most $|E|-1-k$ supersets of size $k+1$; the members of $\Fix_1(T_e)$ of size $k+1$ lying in $P_e$ number at most $\mu_{k+1}$.
\end{proof}

\begin{theorem}\label{thm:matroid-inequality}
Let $M=(E,r)$ be a matroid and $e\in E$ be neither a loop nor a coloop and have no parallel element. Then, with the notation above,
\begin{enumerate}[label=\textnormal{(\alph*)}]
\item $\mathcal S_0(|E|)=\sum_{k\ \mathrm{even}}|E|^{\,r(M)-k-2}\bigl(|E|a_k-a_{k+1}\bigr) \ge 0$;
\item $|E|\,\mathcal S_1(|E|)\ge-\sum_{j\ \mathrm{odd}}\mu_j\,|E|^{\,r(M)+1-j}$;
\item if every member of $P_e$ has odd cardinality, then $\mathcal S_1(|E|)\le0$.
\end{enumerate}
Consequently
\[
|E|\,\chi_{M/e}(|E|)-\chi_M(|E|)\ \ge\ \mathcal S_0(|E|)-\sum_{j\ \mathrm{odd}}\mu_j\,|E|^{\,r(M)+1-j},
\]
and under the hypothesis of \textnormal{(c)} also $|E|\chi_{M/e}(|E|)-\chi_M(|E|)\le\chi_{M/e}(|E|)$ and $\mathcal S_0(|E|)\ge\chi_{M/e}(|E|)$.
\end{theorem}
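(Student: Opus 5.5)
Write $\lambda:=|E|$ throughout. The whole argument pairs consecutive terms of $\mathcal S_0$ and $\mathcal S_1$ and controls each pair with the shadow bounds of Lemma~\ref{lem:shadow}. Since $(k+1)\ge1$ and $|E|-1-k<\lambda$, the shadow bounds give the cruder inequalities
\[
a_{k+1}\le\lambda a_k,\qquad b_{k+1}\le\lambda b_k+\mu_{k+1}.
\]

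\textbf{Part (a).} I group the terms of $\mathcal S_0(\lambda)$ as $(k,k+1)$ with $k$ even. This gives
\[
\mathcal S_0(\lambda)=\sum_{k\ \mathrm{even}}\lambda^{r(M)-k-2}\bigl(\lambda a_k-a_{k+1}\bigr).
\]
Each bracket is non-negative by $a_{k+1}\le\lambda a_k$. The exponents $r(M)-k-2$ may become negative for large $k$, but that is harmless: the terms are rational and still non-negative.

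\textbf{Part (b).} Here I pair odd $k$ with $k+1$:
\[
\lambda\mathcal S_1(\lambda)=\sum_{k\ \mathrm{odd}}\lambda^{r(M)-k}\bigl(\lambda b_{k+1}-\lambda b_k\bigr)\cdot\lambda^{0}\,/\,\lambda,
\]
which more cleanly reads as $\sum_{k\ \mathrm{odd}}\lambda^{r(M)-k}(-b_k\lambda+b_{k+1})$. The $k=0$ term vanishes because $b_0=0$, so every term is covered by an odd-even pair.

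By Lemma~\ref{lem:shadow}(ii), $b_{k+1}-\lambda b_k\le\mu_{k+1}$. That inequality points the wrong way for a lower bound, so this pairing is not the right one. Instead I pair $(k-1,k)$ with $k$ odd. The term of $\lambda\mathcal S_1$ is then $\lambda^{r(M)-k+1}(\lambda b_{k-1}-b_k)$, and (ii) applied at index $k-1$ gives $b_k\le\lambda b_{k-1}+\mu_k$. Hence each pair is at least $-\mu_k\lambda^{r(M)+1-k}$ with $k$ odd, which is exactly (b).

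\textbf{Part (c).} If all members of $P_e$ are odd, then $\mu_{k+1}=0$ for $k$ odd. With the grouping $(k,k+1)$ for $k$ odd, i.e.
\[
\mathcal S_1(\lambda)=\sum_{k\ \mathrm{odd}}\lambda^{r(M)-k-1}\bigl(-\lambda b_k+b_{k+1}\bigr),
\]
each bracket is $\le\mu_{k+1}=0$. Again $b_0=0$ means the even index $0$ contributes nothing. So $\mathcal S_1(\lambda)\le0$.

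\textbf{Consequences.} The main inequality follows from \eqref{eq:ab} together with (a) and (b). Under (c), $\lambda\mathcal S_1\le0$ gives $\lambda\chi_{M/e}-\chi_M\le\mathcal S_0$. Also $\mathcal S_0=\chi_{M/e}-W_e=\chi_{M/e}-\mathcal S_1\ge\chi_{M/e}$.

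I must be careful here. The stated claim $\lambda\chi_{M/e}-\chi_M\le\chi_{M/e}$ follows from \eqref{eq:We} directly: it is equivalent to $(\lambda-1)W_e(\lambda)\le0$, i.e.\ to $\mathcal S_1(\lambda)\le0$, since $\lambda>1$.

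\textbf{Main obstacle.} The only delicate point is choosing the correct parity of pairing in (b) versus (c) so that the error terms $\mu_j$ land on the right indices. I also need to double-check that the boundary terms, namely $b_0=b_1=0$ and the top-degree terms with no partner, have the right sign; an unpaired final term is a lone $+$ or $-$ coefficient and must be checked against the grouping.
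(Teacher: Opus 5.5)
Your proof is correct and follows the paper's argument: you use the same parity-based pairings (your $(k-1,k)$ with $k$ odd in (b) is the paper's $(k,k+1)$ with $k$ even), the crude forms $a_{k+1}\le\lambda a_k$ and $b_{k+1}\le\lambda b_k+\mu_{k+1}$ of the shadow bounds, and the identities $\lambda\chi_{M/e}-\chi_M=\chi_{M/e}+(\lambda-1)\mathcal S_1$ and $\mathcal S_0=\chi_{M/e}-\mathcal S_1$ for the consequences. Delete the abandoned first display in (b); the worry about an unpaired final term is moot, since the pairs partition all indices $k\ge0$ and $a_k,b_k$ vanish for large $k$.
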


\begin{proof}
Write $\lambda=|E|$. \\
(a) Group the terms of $\mathcal S_0$ in pairs $(k,k+1)$ with $k$ even; by Lemma~\ref{lem:shadow}(i), $a_{k+1}\le\lambda a_k$. \\ 
(b) Group the terms of $\mathcal S_1$ in pairs $(k,k+1)$ with $k$ even: $b_k\lambda^{r(M)-k}-b_{k+1}\lambda^{r(M)-k-1}=\lambda^{r(M)-k-1}(\lambda b_k-b_{k+1})\ge-\mu_{k+1}\lambda^{r(M)-k-1}$ by Lemma~\ref{lem:shadow}(ii); multiply by $\lambda$ and sum, writing $j=k+1$. \\ 
(c) Group $\mathcal S_1$ in pairs $(k,k+1)$ with $k$ odd; then $\mu_{k+1}=0$, so $b_{k+1}\le\lambda b_k$ and each pair $-\lambda^{r(M)-k-1}(\lambda b_k-b_{k+1})$ is $\le0$. The consequences follow from \eqref{eq:ab}, from $\lambda\chi_{M/e}-\chi_M=\chi_{M/e}+(\lambda-1)\mathcal S_1$, and from $\mathcal S_0=\chi_{M/e}-\mathcal S_1$.
\end{proof}

\section{The port of the evaluation matroid}\label{subsec:rm-codes}

We now specialise Theorem~\ref{thm:matroid-inequality} to $M=(V,\overline r)$ and
identify the port $P_v$ in coding-theoretic terms. The identification is the point
at which Reed--Muller codes enter: the minimal sets of the port turn out to be the
minimal-support words of the dual code, so the arithmetic quantities $\mu_j$
governing the error term become weight-enumerator coefficients.

\begin{corollary}\label{cor:alpha-inequality}
In the evaluation matroid with $d\le n-1$, for any $v\in V$ and with $a_k,b_k,\mu_k$ computed at $e=v$:
\begin{enumerate}[label=\textnormal{(\alph*)}]
\item $\alpha_1-\alpha_0\ \ge\ \mathcal S_0(N)-\sum_{j\ \mathrm{odd}}\mu_jN^{D+1-j}$, and $\mathcal S_0(N)\ge N^{D-2}(N-a_1)$;
\item if every member of $P_v$ has odd cardinality, then $\alpha_1-\alpha_0\le\alpha_1/N$ and $\alpha_1/N\le\mathcal S_0(N)$.
\end{enumerate}
\end{corollary}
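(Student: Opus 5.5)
The plan is to specialise Theorem~\ref{thm:matroid-inequality} to the evaluation matroid $M=(V,\overline r)$, with $E=V$, $|E|=N$, $r(M)=D$, and $e=v$ placed first in the total order ($e_1=v$). Then we translate the matroid quantities into $\alpha_0,\alpha_1$ via Theorem~\ref{thm:boolean-matroid-link}.

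First I will check the hypotheses of Theorem~\ref{thm:matroid-inequality}. By Lemma~\ref{lem:coloop}, $v$ is not a loop and has no parallel element. Since we assume $2\le d\le n-1$, the same lemma shows $v$ is not a coloop. So the theorem applies at $\lambda=|E|=N$. Theorem~\ref{thm:boolean-matroid-link} gives $\alpha_0=\chi_M(N)$ and $\alpha_1=N\chi_{M/v}(N)$. Hence
\[
N\chi_{M/v}(N)-\chi_M(N)=\alpha_1-\alpha_0,
\qquad \chi_{M/v}(N)=\alpha_1/N .
\]

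For (a), the ``Consequently'' clause of Theorem~\ref{thm:matroid-inequality} with $r(M)=D$ immediately gives
\[
\alpha_1-\alpha_0\ge\mathcal S_0(N)-\sum_{j\ \mathrm{odd}}\mu_jN^{D+1-j}.
\]
For the lower bound on $\mathcal S_0(N)$, I use the paired form from part (a) of that theorem:
\[
\mathcal S_0(N)=\sum_{k\ \mathrm{even}}N^{D-k-2}(Na_k-a_{k+1}).
\]
Each summand is nonnegative by Lemma~\ref{lem:shadow}(i). That lemma gives $a_{k+1}\le\frac{N-1-k}{k+1}a_k\le Na_k$. Dropping every pair except $k=0$ and using $a_0=1$ leaves $\mathcal S_0(N)\ge N^{D-2}(N-a_1)$.

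One small point needs care: the exponents $D-k-2$ may be negative for large $k$. This is harmless, since the identity is an equality of rational numbers at $\lambda=N>0$ and the positivity argument is termwise. Alternatively, $a_k=0$ for $k>D-1$ because fixed sets are independent and avoid $e\in\cl(S)$.

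For (b), under the hypothesis that all members of $P_v$ have odd cardinality, I invoke part (c) of Theorem~\ref{thm:matroid-inequality} and its stated consequences. These give $N\chi_{M/v}(N)-\chi_M(N)\le\chi_{M/v}(N)$ and $\mathcal S_0(N)\ge\chi_{M/v}(N)$. After the substitutions above, these read $\alpha_1-\alpha_0\le\alpha_1/N$ and $\alpha_1/N\le\mathcal S_0(N)$.

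There is no real obstacle, as everything is bookkeeping. The only step needing attention is confirming that $\mathcal S_0$, $\mathcal S_1$ and $\mu_j$ are computed with the same ordering ($e_1=v$) used in Corollary~\ref{cor:pointed-reduction-full}, so that identity~\eqref{eq:ab} holds verbatim.
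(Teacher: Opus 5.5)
Your proposal is correct and follows the paper's own argument. The paper likewise specialises Theorem~\ref{thm:matroid-inequality} to $|E|=N$, $r(M)=D$, $e=v$; this is legitimate by Lemma~\ref{lem:coloop} since $2\le d\le n-1$. It then substitutes $\alpha_0=\chi_M(N)$ and $\alpha_1/N=\chi_{M/v}(N)$ from Theorem~\ref{thm:boolean-matroid-link}. Your explicit derivation of $\mathcal S_0(N)\ge N^{D-2}(N-a_1)$, keeping only the nonnegative $k=0$ pair with $a_0=1$, is the step the paper leaves implicit.
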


\begin{proof}
Theorem~\ref{thm:matroid-inequality} with $|E|=N$, $r(M)=D$, $\alpha_0=\chi_M(N)$, $\alpha_1/N=\chi_{M/v}(N)$.
\end{proof}

To do this, we connect our matroid construction to coding theory via Reed-Muller codes, which have been introduced by Muller in \cite{RM}. \\
We recall here some basic definition and properties, more can be found at \cite{code}. 

\begin{definition}[\cite{reed,RM}]
For $0 \le d \le n$, the \emph{binary Reed-Muller code} of order $d$ and length $N = 2^n$, denoted $\mathrm{RM}(d,n)$, is the subspace of $\FF^N$ obtained by evaluating all multilinear polynomials in $Q_d$ at all points $v \in V = \mathbb{F}^n$:
\[
\mathrm{RM}(d,n) := \{ (f(v))_{v \in V}  \mid f \in Q_d \}.
\]
In particular, for quadratic systems, $\mathrm{RM}(2,n) = \{ (f(v))_{v \in V} \mid \mathrm{deg}(f) \le 2 \}$.
\end{definition}

\begin{proposition} \label{prop:rm-facts}
    The minimum distance of $RM(d,n)$ is $2^{n - d}$.
\end{proposition}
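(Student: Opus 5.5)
We have to prove that the minimum distance of $\RM(d,n)$ is exactly $2^{n-d}$. The plan is to prove the two inequalities separately. The upper bound comes from an explicit codeword, and the lower bound comes from induction on $n$ via the Plotkin $(u\mid u+v)$ decomposition.

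\emph{Upper bound.} Consider the monomial $f=x_1x_2\cdots x_d\in Q_d$. Its evaluation vector is nonzero exactly at the points with $v_1=\dots=v_d=1$, and there are $2^{n-d}$ such points. So $\RM(d,n)$ contains a nonzero word of weight $2^{n-d}$, and the minimum distance is at most $2^{n-d}$.

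\emph{Lower bound.} We show that every nonzero $f\in Q_d$ has at least $2^{n-d}$ nonzeros on $V$, by induction on $n$. The claim holds for all $0\le d\le n$.

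\begin{itemize}
\item \emph{Base cases.} For $d=0$, $f$ is the constant $1$, of weight $2^n$. For $d=n$, any nonzero function has weight $\ge1=2^0$. These cover $n=0$ and $n=1$.
\item \emph{Decomposition.} For $0<d<n$, split off the last variable as $f=g+x_n h$. Here $g\in Q_d$ and $h\in Q_{d-1}$ are polynomials in $x_1,\dots,x_{n-1}$. Restricting to the hyperplanes $x_n=0$ and $x_n=1$ gives the functions $g$ and $g+h$ on $\FF^{n-1}$. Hence
\[
\mathrm{wt}(f)=\mathrm{wt}(g)+\mathrm{wt}(g+h).
\]
\item \emph{Case $h=0$.} Then $g\neq0$ has degree $\le d\le n-1$. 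By induction $\mathrm{wt}(g)\ge 2^{n-1-d}$, and this weight is counted twice, so $\mathrm{wt}(f)\ge 2^{n-d}$.
\item \emph{Case $h\ne0$.} The triangle inequality gives $\mathrm{wt}(g)+\mathrm{wt}(g+h)\ge\mathrm{wt}(h)$. Since $h$ is nonzero of degree $\le d-1$ in $n-1$ variables, induction gives $\mathrm{wt}(h)\ge 2^{(n-1)-(d-1)}=2^{n-d}$.
\end{itemize}

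The only point needing care is that the restrictions to the two hyperplanes are exactly $g$ and $g+h$. This holds because $x_n$ takes the values $0$ and $1$ there. Everything else is routine. Alternatively, one can simply cite the standard result \cite{code}.
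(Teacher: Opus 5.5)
Your proof is correct. The monomial $x_1\cdots x_d$ gives a word of weight $2^{n-d}$, and the induction via $f=g+x_nh$ gives the matching lower bound for all $0\le d\le n$: it uses $\mathrm{wt}(f)=\mathrm{wt}(g)+\mathrm{wt}(g+h)$ and splits into the cases $h=0$ and $h\neq0$. The paper gives no proof of this proposition and treats it as a standard fact from \cite{code}; your Plotkin $(u\mid u+v)$ argument is the standard textbook proof behind that citation.
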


\begin{definition}
For a linear code $\mathbf{C} \subseteq \mathbb{F}^\nu$, its \emph{dual code} $\mathbf{C}^\perp$ is defined with respect to the standard dot product:
\[
\mathbf{C}^\perp := \{ c \in \mathbb{F}^\nu \mid \forall c' \in \mathbf{C}, \quad \sum_{v \in V} c_v c'_v = 0 \}.
\]
The \emph{support} of a codeword $c \in \mathbb{F}^\nu$ is $\mathrm{supp}(c) := \{ v \in V \mid c_v = 1 \}$, and its Hamming weight is $|c| = |\mathrm{supp}(c)|$.
$c$ is a \emph{minimal support codeword} \cite{ashikhminbarg,minimal} if there is no other word $c' \neq 0$ such that $\mathrm{supp}(c') \subseteq \mathrm{supp}(c)$.
\end{definition}

\begin{proposition}
The dual of the Reed-Muller code $\mathrm{RM}(d,n)$ is another Reed-Muller code given by:
\[
\mathrm{RM}(d,n)^\perp = \mathrm{RM}(n - d - 1, n).
\]
In particular, for quadratic systems, $\mathrm{RM}(2,n)^\perp = \mathrm{RM}(n-3, n)$.
\end{proposition}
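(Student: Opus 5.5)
We prove $\mathrm{RM}(d,n)^\perp=\mathrm{RM}(n-d-1,n)$ in three steps: show that $\mathrm{RM}(n-d-1,n)$ is orthogonal to $\mathrm{RM}(d,n)$, then compare dimensions, then specialise to $d=2$.

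\emph{Orthogonality.} Since both codes are spanned by evaluation vectors of monomials, it suffices to check monomial pairs. Take $m\in M_d$ and $m'\in M_{n-d-1}$. Their inner product is
\[
\sum_{v\in V} m(v)m'(v)\pmod 2 .
\]
The product $mm'$ in $Q$ is again a monomial $x_A=\prod_{i\in A}x_i$, where $A$ is the union of the two variable sets. Hence
\[
|A|\le d+(n-d-1)=n-1 .
\]
The number of $v\in V$ with $x_A(v)=1$ is $2^{n-|A|}$, which is even because $|A|<n$. So the inner product vanishes, and therefore
\[
\mathrm{RM}(n-d-1,n)\subseteq\mathrm{RM}(d,n)^\perp .
\]

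\emph{Dimension count.} The evaluation map $Q\to\FF^N$ is injective, since the $N$ monomials of $Q$ give $N$ independent functions; equivalently $Q\cong\FF^{V}$ via indicator functions. So $\dim\mathrm{RM}(d,n)=D(n,d)$, and
\[
\dim\mathrm{RM}(d,n)^\perp=N-D(n,d)=\sum_{j=d+1}^{n}\binom nj .
\]
Substituting $j\mapsto n-j$ gives $\sum_{j=0}^{n-d-1}\binom nj=D(n,n-d-1)=\dim\mathrm{RM}(n-d-1,n)$. An inclusion of subspaces of equal dimension is an equality.

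\emph{Quadratic case.} Setting $d=2$ gives $\mathrm{RM}(2,n)^\perp=\mathrm{RM}(n-3,n)$. For $n=2$ the right side is the code $\mathrm{RM}(-1,n)=\{0\}$, consistent with $D(2,2)=N$.

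The only point needing care is the injectivity of the evaluation map. This is the same fact as the earlier remark that no nonzero polynomial in $Q$ vanishes on all of $V$, which was already used to get $\overline r(V)=D$. Otherwise the argument is a routine parity computation.
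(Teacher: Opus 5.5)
Your proof is correct. The paper gives no proof of its own here and simply recalls the statement as a standard fact from \cite{code}; your argument is exactly the standard textbook proof: a product of monomials of degrees at most $d$ and $n-d-1$ is a monomial $x_A$ with $|A|\le n-1$, whose evaluation vector has even weight $2^{n-|A|}$, followed by the count $D(n,d)+D(n,n-d-1)=N$. The injectivity of evaluation, which you phrase somewhat circularly, is properly closed by your indicator remark: the indicators $\prod_{i}(1+x_i+a_i)$ show that the $N$ monomials span $\FF^{V}$, which has dimension $N$, so they are linearly independent as functions.
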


\begin{theorem}\label{thm:circuits}
A set $C \subseteq V$ is a circuit of the Boolean evaluation matroid $M = (V, r)$ if and only if $C = \mathrm{supp}(c)$ for a non-zero codeword $c \in \mathrm{RM}(d,n)^\perp$ of minimal support.
\end{theorem}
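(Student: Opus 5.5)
The plan is to translate linear dependence of evaluation vectors into membership in the dual code, and then match minimality on both sides. First I identify the space of linear relations among the points. For a set $S\subseteq V$ and a vector $c\in\FF^N$ with $\supp(c)\subseteq S$, the relation $\sum_{v\in S}c_v\,\ev(v)=0$ holds in $\FF^D$ if and only if every coordinate vanishes. Coordinatewise this says $\sum_{v}c_v m(v)=0$ for every monomial $m\in M_d$. By linearity this is equivalent to $\sum_v c_v f(v)=0$ for every $f\in Q_d$, that is, $c\perp(f(v))_{v\in V}$ for all $f\in Q_d$. So $c\in\RM(d,n)^\perp$. Hence the set of dependency vectors $\{c:\sum_v c_v\ev(v)=0\}$ is exactly the code $\RM(d,n)^\perp$ (and the explicit form $\RM(n-d-1,n)$ is not needed for the theorem). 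In particular, $S$ is dependent in $M$ if and only if $S$ contains $\supp(c)$ for some non-zero $c\in\RM(d,n)^\perp$.

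Next I prove the direction circuit $\Rightarrow$ minimal support. Let $C$ be a circuit. Since $C$ is dependent, there is a non-zero $c$ in the dual code with $\supp(c)\subseteq C$. Because $C$ is minimal dependent and $\supp(c)$ is itself dependent, we get $\supp(c)=C$. Now suppose $c'\neq0$ is a codeword with $\supp(c')\subseteq\supp(c)=C$. Then $\supp(c')$ is a dependent subset of $C$, so by minimality again $\supp(c')=C$. Therefore $c$ is of minimal support in the sense of the paper's definition, where inclusion of supports is compared; over $\FF_2$ this also forces $c'=c$.

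Then I prove the converse. Let $c\ne0$ be a minimal-support codeword and put $C=\supp(c)$. $C$ is dependent because $c$ is a relation supported on $C$. If some proper subset $C'\subsetneq C$ were dependent, it would carry a non-zero codeword $c'$ with $\supp(c')\subseteq C'\subsetneq\supp(c)$, which contradicts minimality. So every proper subset of $C$ is independent, and $C$ is a circuit.

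I see no real obstacle. The only points that need care are the following.
\begin{itemize}
\item The identification of the two roles of $\FF^D$: relations among the $\ev(v)$ are vectors indexed by $V$, not by monomials. I handle this by using the transpose, namely the $D\times N$ matrix whose columns are the $\ev(v)$, whose row space is $\RM(d,n)$ and whose kernel is therefore $\RM(d,n)^\perp$.
\item Over $\FF_2$ a relation is determined by its support. This makes "minimal support" unambiguous and lets dependent sets be described purely through supports.
\end{itemize}
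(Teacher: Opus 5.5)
Your proof is correct and follows the same route as the paper. You identify the linear relations among the $\ev(v)$ with $\RM(d,n)^\perp$ through the monomial basis, and then match minimal dependent sets with minimal supports. Your two-direction minimality argument, which uses that a dependent set \emph{contains} (rather than equals) the support of a nonzero codeword, is in fact more careful than the paper's one-line conclusion.
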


\begin{proof}
Let $c \in \mathbb{F}^n$. By definition, $c \in \mathrm{RM}(d,n)^\perp$ if and only if $\sum_{v \in V} c_v f(v) = 0$ for all $f \in Q_d$. Since $Q_d$ is spanned by the monomial basis $M_d$, this condition holds if and only if $\sum_{v \in V} c_v m(v) = 0$ for all $m \in M_d$. Under the evaluation map $\mathrm{ev}(v) = (m(v))_{m \in M_d} \in \mathbb{F}^D$, this is equivalent to:
\[
\sum_{v \in V} c_v \mathrm{ev}(v) = \mathbf{0} \quad \text{in } \mathbb{F}^D.
\]
Thus, $c \in \mathrm{RM}(d,n)^\perp$ if and only if the evaluation vectors $\{\mathrm{ev}(v) \mid v \in \mathrm{supp}(c)\}$ are linearly dependent over $\mathbb{F}$.

A circuit $C$ of $M$ is defined as a minimal dependent set of evaluation vectors. Therefore, $C$ is a circuit of $M$ if and only if $C = \mathrm{supp}(c)$ for a non-zero codeword $c \in \mathrm{RM}(d,n)^\perp$ whose support is minimal with respect to set inclusion.
\end{proof}

\begin{corollary}\label{cor:minimal-Tv-RM}
A set $S \subseteq V \setminus \{v\}$ is an element of the $v$-port of $M$ if and only if $S \cup \{v\} = \mathrm{supp}(c)$ for some minimal support codeword $c \in \mathrm{RM}(n - d - 1, n)$.
\end{corollary}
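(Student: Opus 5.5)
This is a direct consequence of Theorem~\ref{thm:circuits}, combined with the definition of the port and the duality $\RM(d,n)^\perp=\RM(n-d-1,n)$. The plan is to unwind definitions in both directions. Note that the codeword in the statement must be understood as a minimal support codeword of $\RM(n-d-1,n)$, equivalently of $\RM(d,n)^\perp$, and that minimality is taken within that code. Since the code is binary, a codeword is determined by its support.

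For the forward direction, let $S\in P_v$. By definition of the port, $S=C\setminus\{v\}$ for some circuit $C$ of $M$ with $v\in C$. Since $v\in C$, we have $S\subseteq V\setminus\{v\}$ and $S\cup\{v\}=C$. Theorem~\ref{thm:circuits} gives a non-zero minimal support codeword $c\in\RM(d,n)^\perp$ with $\supp(c)=C$. Rewriting $\RM(d,n)^\perp$ as $\RM(n-d-1,n)$ by the duality proposition, we obtain $S\cup\{v\}=\supp(c)$ with $c$ a minimal support codeword of $\RM(n-d-1,n)$.

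For the converse, let $S\subseteq V\setminus\{v\}$ with $S\cup\{v\}=\supp(c)$ for a minimal support codeword $c\in\RM(n-d-1,n)=\RM(d,n)^\perp$. A minimal support codeword is non-zero by convention; in any case its support contains $v$, so $c\ne0$. Theorem~\ref{thm:circuits} then says $C:=\supp(c)$ is a circuit of $M$. This circuit contains $v$, and since $v\notin S$ we have $C\setminus\{v\}=S$. Hence $S\in P_v$.

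No step is a real obstacle. The only point needing care is the precise reading of Theorem~\ref{thm:circuits}: dependence of $\ev(\supp(c))$ for $c$ in the dual code is correct, but the equivalence with minimal dependent sets uses two facts. First, a dependent set supports some non-zero dual codeword, whose support may be smaller. Second, minimality of the support among codewords matches minimality of the set among dependent sets. Since Theorem~\ref{thm:circuits} is stated earlier, we invoke it as given and need not reprove these facts.
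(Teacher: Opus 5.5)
Your proposal is correct and takes the same route as the paper, which derives the corollary directly from Theorem~\ref{thm:circuits}. You also spell out both directions, the identification $\RM(d,n)^\perp=\RM(n-d-1,n)$, and the fact that $v\in\supp(c)$ forces $c\neq 0$, all of which the paper leaves implicit.
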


\begin{proof}
This follows directly from Theorem \ref{thm:circuits}.
\end{proof}

\begin{corollary}\label{cor:port-rm}
$P_v=\{\supp(c)\setminus\{v\}: c\in C\text{ minimal-support},\ v\in\supp(c)\}$. Every member of $P_v$ has odd cardinality $\ge2^{d+1}-1$, and
\[
\mu_j=\frac{(j+1)\,A^{\min}_{j+1}(\mathbf{C})}{N}\ \le\ \frac{(j+1)\,A_{j+1}(\mathbf{C})}{N}\qquad(j\ge0).
\]
\end{corollary}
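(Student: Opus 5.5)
Here $\mathbf C$ is the code $\RM(n-d-1,n)=\RM(d,n)^\perp$ and $A^{\min}_w(\mathbf C)$ is its number of minimal-support codewords of weight $w$. The plan is to read off the three claims of Corollary~\ref{cor:port-rm} from Corollary~\ref{cor:minimal-Tv-RM} and Theorem~\ref{thm:circuits}.

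\textbf{Description of $P_v$.} By Corollary~\ref{cor:minimal-Tv-RM}, $S\in P_v$ iff $S\cup\{v\}=\supp(c)$ for a minimal-support word $c$ of $\mathbf C$, with $v\notin S$. The map $S\mapsto c$ is well defined and injective, since over $\FF_2$ a word is determined by its support. Its image is exactly the set of minimal-support words with $v\in\supp(c)$. This gives the displayed description of $P_v$.

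\textbf{Cardinalities.} Let $c\in\mathbf C$ be nonzero. By Proposition~\ref{prop:rm-facts} its weight is at least $2^{n-(n-d-1)}=2^{d+1}$, so $|P|=|c|-1\ge 2^{d+1}-1$. For parity, we need that every word of $\mathbf C$ has even weight. Since $d\ge 2\ge 0$, the constant function $1$ lies in $Q_d$, so the all-ones vector lies in $\RM(d,n)$. Because $\mathbf C=\RM(d,n)^\perp$, each $c\in\mathbf C$ satisfies $\sum_v c_v=0$, i.e.\ $|c|$ is even. Hence every $P\in P_v$ has odd cardinality $|c|-1$.

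\textbf{Counting $\mu_j$.} This is the step that needs real care, namely the transitivity argument. Double count the pairs $(v,c)$ with $c$ a minimal-support word of weight $j+1$ and $v\in\supp(c)$; there are $(j+1)A^{\min}_{j+1}(\mathbf C)$ of them. I then need the number of such $c$ through a given $v$, namely $\mu_j$ at $v$, to be independent of $v$. For this I would use the affine group of $V$, as in the proof of Theorem~\ref{thm:boolean-matroid-link}. It acts transitively on $V$ and maps $Q_d$ to itself, hence preserves the matroid $M$. So it permutes circuits and sends the port at $v$ bijectively onto the port at $g(v)$, preserving sizes. Summing over the $N$ points then gives $N\mu_j=(j+1)A^{\min}_{j+1}(\mathbf C)$.

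The final inequality holds simply because $A^{\min}_{j+1}(\mathbf C)\le A_{j+1}(\mathbf C)$. For $j=0$ both sides are $0$: a minimal-support word of weight $1$ would make $v$ a loop, which Lemma~\ref{lem:coloop} excludes, and in any case the minimum distance is at least $8$.
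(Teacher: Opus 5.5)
Your proposal is correct and follows the paper's proof: you get the description of $P_v$ from Theorem~\ref{thm:circuits}, the size bound from the minimum distance in Proposition~\ref{prop:rm-facts}, and $N\mu_j=(j+1)A^{\min}_{j+1}(\mathbf C)$ from transitivity of the affine group plus double counting of incidences. Your explicit parity argument fills in a step the paper only attributes to Proposition~\ref{prop:rm-facts}, which as stated gives just the minimum distance: the all-ones word lies in $\RM(d,n)$, so every word of $\mathbf C=\RM(d,n)^\perp$ has even weight.
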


\begin{proof}
The first statement is Theorem~\ref{thm:circuits}; the parity and size statements are Proposition~\ref{prop:rm-facts}. The affine group of $V$ acts transitively on $V$, maps $\mathbf{C}$ onto itself and preserves minimality of supports; hence every point lies in the same number of minimal-support words of weight $j+1$, and counting incidences gives $N\mu_j=(j+1)A^{\min}_{j+1}(\mathbf{C})$.
\end{proof}

\begin{remark}
    It is an open problem to determine the exact distribution of minimal support codewords of $\mathrm{RM(d,n)}$, with $d > 2$, hence we cannot determine precisely the number of minimal sets in $T_v$.
    A better dissertation about this can be found in \cite{minimal}.
\end{remark}

\medskip\noindent
Corollary~\ref{cor:port-rm} converts the error term of
Theorem~\ref{thm:matroid-inequality}(b) into a weighted enumerator of $\mathbf C$.
Throughout the rest of the paper we assume $d\ge2$ and $d\le n-2$, so that $n\ge4$,
$N\ge16$ and $d(\mathbf{C})=2^{d+1}\ge8$; we fix $v\in V$ and compute
$a_k,b_k,\mu_k$ at $e=v$, and we put
\[
\Phi:=\sum_{w}w\,A_w(\mathbf{C})\,N^{-w}.
\]

\begin{lemma}\label{lem:small}
$a_k=\binom{N-1}k$ for $0\le k\le6$. Consequently
\[
\mathcal S_0(N)\ \ge\ N^{D-2}+N^{D-4}\,\frac{(N-1)(N-2)(2N+3)}{6},
\qquad
\sum_{j\ \mathrm{odd}}\mu_jN^{D+1-j}\ \le\ N^{D+1}\,\Phi ,
\]
and $(N-1)(N-2)(2N+3)/6\ge N^3/4$ for $N\ge16$.
\end{lemma}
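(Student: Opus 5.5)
We prove the three assertions of Lemma~\ref{lem:small} separately.

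\textbf{The values $a_k=\binom{N-1}k$ for $k\le6$.} Recall that $a_k$ counts the sets $S\subseteq V\setminus\{v\}$ with $|S|=k$ that lie in $\Fix_0(\overline{T_v})$. Such an $S$ must satisfy $v\notin\cl(S)$ and must have no $0$-active element. Both conditions are implied by independence of $S\cup\{v\}$:
\begin{itemize}
\item if $S\cup\{v\}$ is independent, then $v\notin\cl(S)$;
\item $S$ itself is independent, so no $p$ can satisfy $p\in\cl(S_{>p})$, since that would produce a circuit inside $S$.
\end{itemize}
By Theorem~\ref{thm:circuits}, every circuit of $M$ is the support of a nonzero word of $\mathbf C=\RM(n-d-1,n)$. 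Hence every circuit has size at least $2^{d+1}\ge8$, by Proposition~\ref{prop:rm-facts} applied with order $n-d-1$. Every set of at most $7$ points is therefore independent. In particular, for $k\le6$ every $k$-subset of $V\setminus\{v\}$ is counted, which gives $a_k=\binom{N-1}{k}$.

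\textbf{Lower bound for $\mathcal S_0(N)$.} We use the pairing of Theorem~\ref{thm:matroid-inequality}(a):
\[
\mathcal S_0(N)=\sum_{k\text{ even}}N^{D-k-2}(Na_k-a_{k+1}),
\]
with $r(M)=D$. Every pair is nonnegative, so we may drop the pairs with $k\ge4$ and keep $k=0,2$. Using $a_0=1$, $a_1=N-1$, $a_2=\binom{N-1}2$ and $a_3=\binom{N-1}3$, the $k=0$ pair is
\[
N^{D-2}(N-(N-1))=N^{D-2}.
\]
The $k=2$ pair is
\[
N^{D-4}\Bigl(N\tbinom{N-1}{2}-\tbinom{N-1}{3}\Bigr)
=N^{D-4}\tfrac{(N-1)(N-2)}{6}\,(3N-(N-3))
=N^{D-4}\tfrac{(N-1)(N-2)(2N+3)}{6}.
\]
This is exactly the claimed bound.

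\textbf{Bound on the odd-$\mu$ sum.} Corollary~\ref{cor:port-rm} gives $\mu_j\le (j+1)A_{j+1}(\mathbf C)/N$. Hence
\[
\sum_{j\text{ odd}}\mu_jN^{D+1-j}\le\sum_j (j+1)A_{j+1}N^{D-j}.
\]
Setting $w=j+1$, the right-hand side becomes $\sum_w wA_wN^{D+1-w}=N^{D+1}\Phi$. Every term is nonnegative, so extending the sum to all $j$ only increases it.

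\textbf{Final numerical inequality.} We need $(N-1)(N-2)(2N+3)\ge\tfrac32N^3$. Expanding,
\[
(N-1)(N-2)(2N+3)=2N^3-3N^2-5N+6,
\]
so the condition becomes $\tfrac12N^3-3N^2-5N+6\ge0$. This holds for $N\ge16$ because $N^3/2\ge 8N^2\ge3N^2+5N$.

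No step is a genuine obstacle. The only point needing care is the first: independence of $S\cup\{v\}$ must be checked to yield both membership in $\overline{T_v}$ and the absence of active elements, and the minimum-distance bound must be applied to the dual code $\mathbf C$, of order $n-d-1$, rather than to $\RM(d,n)$.
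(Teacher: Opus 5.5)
Your outline is the same as the paper's. You show small sets are fixed because circuits have size $\ge 2^{d+1}\ge 8$, keep the $k=0,2$ pairs of Theorem~\ref{thm:matroid-inequality}(a), apply Corollary~\ref{cor:port-rm} with $w=j+1$, and expand the cubic. The computations for $\mathcal S_0(N)$, for the $\mu$-sum and for the final inequality are all correct.

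There is, however, a genuine error in the step you yourself flag as the one needing care. In Definition~\ref{def:active}, an element $p$ is $0$-active for $S$ when $p\in E_0=V$ and $p\in\cl(S_{>p})$. Here $p$ need \emph{not} lie in $S$. For $p\notin S$ the witnessing circuit lies in $S_{>p}\cup\{p\}$, not inside $S$. So independence of $S$, or even of $S\cup\{v\}$, does not exclude active elements. In general the fixed sets are exactly those containing no broken circuit $C\setminus\{\min C\}$. This is why Lemma~\ref{lem:involution-full}(c) only asserts that fixed sets are independent, not the converse. A minimal counterexample: take three elements $1<2<3$ forming a single circuit. Then $S=\{2,3\}$ is independent, yet $1\in\cl(S_{>1})$, so $1$ is $0$-active and $\iota_0(S)=\{1,2,3\}\neq S$. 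Your claim that both defining conditions of $\Fix_0(\overline{T_v})$ follow from independence of $S\cup\{v\}$ is therefore false as stated.

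The repair uses only the fact you already established, that every set of at most $7$ points is independent. Suppose some $p\in V$ were $0$-active for $S$ with $|S|\le 6$. The witnessing circuit $C'\subseteq S_{>p}\cup\{p\}$ would then have at most $|S|+1\le 7<8$ elements, a contradiction. Equivalently, as the paper phrases it, the broken circuit $C'\setminus\{p\}\subseteq S$ would have at least $7>|S|$ elements. This is where the restriction $k\le 6$ is used for the second condition, not only through $|S\cup\{v\}|\le 7$. With this fix your proof coincides with the paper's.
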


\begin{proof}
Circuits have size $\ge d(\mathbf{C})\ge8$. Let $S\subseteq V\setminus\{v\}$, $|S|\le6$. Then $S\cup\{v\}$ contains no circuit, so $v\notin\cl(S)$; and $S$ has no $0$-active element, since a $0$-active $p$ would give a circuit $C'$ with $C'\setminus\{p\}\subseteq S_{>p}\subseteq S$ and $|C'\setminus\{p\}|\ge7$. So $S\in\Fix_0(\overline{T_v})$. The bound on $\mathcal S_0$ is Theorem~\ref{thm:matroid-inequality}(a) with $a_0=1$, $a_1=N-1$, $a_2=\binom{N-1}2$, $a_3=\binom{N-1}3$, using $N\binom{N-1}2-\binom{N-1}3=(N-1)(N-2)(2N+3)/6$. The bound on $\sum_j\mu_jN^{D+1-j}$ is Corollary~\ref{cor:port-rm}: $\sum_{j}(j+1)A_{j+1}N^{D-j}=\sum_ww A_wN^{D+1-w}$. The last claim is $N^3/12-N^2/2-5N/6+1\ge0$ for $N\ge16$.
\end{proof}

\section{Bounding the weighted enumerator}\label{sec:rm-bound}

It remains to bound $\Phi=\sum_w wA_w(\mathbf C)N^{-w}$, a weighted enumerator of
$\mathbf C=\RM(n-d-1,n)$ evaluated near $x=1/N$. Three facts drive the estimate.

First, MacWilliams duality turns $\Phi$ into an average over the \emph{dual} code
$\mathbf C^\perp=\RM(d,n)$ of an explicit function
$f(Y_q)=\kappa e^{\lambda Y_q}(NY_q-1)$ of the normalised weight
$Y_q=1-2w(q)/N\in[-1,1]$ (Lemma~\ref{lem:macwilliams_derivative}).

Second, because $\mathbf C$ has minimum distance $M=2^{d+1}$, the first $M-1$
moments of $Y$ over $\mathbf C^\perp$ coincide with those over the whole of
$\FF^N$ (Lemma~\ref{lem:delsarte_orthogonality}, a Delsarte-type orthogonality),
and the full-space average of any polynomial of degree $<M$ in $Y$ vanishes
(Lemma~\ref{lem:binomial_identity}). Hence the Taylor polynomial of $f$ of order
$M-1$ contributes nothing, and $\Phi$ is exactly the average of the Taylor
remainder.

Third, that remainder is controlled by a single bound on $f^{(M)}$
(Lemma~\ref{lem:global_derivative}) together with a count of the minimum-weight
words of $\mathbf C$ (Lemma~\ref{lem:min_weight}). The outcome is
$\Phi=O(N^{-2^d})$, which for $d=2$ is all that Section~\ref{sec:mainthm} consumes.

The constants below are deliberately crude: only the exponent of $N$ matters
downstream, and no attempt is made to optimise them.

\begin{lemma}[a vanishing binomial sum]
\label{lem:binomial_identity}
For any positive integer $N$,
\[
\sum_{w=0}^N \binom{N}{w} \left(\frac{N-1}{N+1}\right)^w (N - 1 - 2w) = 0
\]
\end{lemma}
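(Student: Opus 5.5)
We need $\sum_w \binom{N}{w} x^w (N-1-2w)=0$ with $x=(N-1)/(N+1)$. The plan is to split the sum into two standard binomial sums and evaluate each in closed form, expecting the choice of $x$ to make them cancel.

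First, $\sum_w \binom{N}{w}x^w=(1+x)^N$. Second, $\sum_w w\binom{N}{w}x^w = Nx(1+x)^{N-1}$, obtained by applying $x\,\frac{d}{dx}$ to the first identity (equivalently, from $w\binom Nw=N\binom{N-1}{w-1}$). Substituting, the sum becomes
\[
(N-1)(1+x)^N-2Nx(1+x)^{N-1}=(1+x)^{N-1}\bigl((N-1)(1+x)-2Nx\bigr)=(1+x)^{N-1}\bigl((N-1)-(N+1)x\bigr).
\]

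The bracket vanishes exactly when $x=(N-1)/(N+1)$, which is the given value, so the sum is zero. Since $N\ge1$, the denominator $N+1$ is nonzero and $1+x>0$, so the factorization is valid.

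There is no real obstacle here beyond carrying out the derivative identity correctly. Probabilistically, the statement says that a binomial variable with success probability $x/(1+x)=(N-1)/(2N)$ has mean $(N-1)/2$. This reading matters downstream: it is the normalization that makes the full-space average of the linear term $NY-1$ vanish when $Y=1-2w/N$.
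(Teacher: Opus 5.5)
Your proof is correct and is the paper's argument: both use $\sum_w\binom Nw x^w=(1+x)^N$ and $\sum_w w\binom Nw x^w=Nx(1+x)^{N-1}$, factor out $(1+x)^{N-1}$, and note that the remaining factor $(N-1)(1+x)-2Nx$ vanishes at $x=(N-1)/(N+1)$. One small caution on your closing aside: the identity makes the uniform average over $\FF^N$ of $e^{\lambda Y}(NY-1)$ vanish (equivalently, the average of $NY-1$ under the tilted binomial), whereas the uniform average of $NY-1$ itself is $-1$.
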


\begin{proof}
Let $a = \frac{N-1}{N+1}$. Using $(1+a)^N = \sum \binom{N}{w} a^w$ and $N a (1+a)^{N-1} = \sum w \binom{N}{w} a^w$:
\[
\sum_{w=0}^N \binom{N}{w} a^w (N - 1 - 2w) = (1+a)^{N-1} \left[ (N-1)(1+a) - 2Na \right]
\]
Substituting $1+a = \frac{2N}{N+1}$ gives $(N-1)\left(\frac{2N}{N+1}\right) - 2N\left(\frac{N-1}{N+1}\right) = 0$.
\end{proof}

\begin{lemma}[minimum-weight words of $\mathbf{C}$]
\label{lem:min_weight}
The minimum distance of $\mathbf{C} = RM(n-d-1, n)$ is $M = 2^{d+1}$. The number of minimum-weight codewords satisfies:
\[
A_M \le \frac{N^{d+2}}{2^{d+1} K_d} \quad \text{where } K_d = 2^{d(d+1)/2} \prod_{i=1}^{d+1} (2^i - 1)
\]
\end{lemma}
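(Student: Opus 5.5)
The first claim is immediate from Proposition~\ref{prop:rm-facts} applied to $\RM(n-d-1,n)$: its minimum distance is $2^{n-(n-d-1)}=2^{d+1}=M$. For the count $A_M$, I will use the classical characterization of the minimum-weight words of Reed--Muller codes \cite{code}: a word of $\RM(r,n)$ has weight $2^{n-r}$ if and only if it is the indicator function of an affine subspace of $V$ of dimension $n-r$. With $r=n-d-1$, the minimum-weight words of $\mathbf C$ are therefore in bijection with the $(d+1)$-dimensional affine subspaces (flats) of $\FF^n$. Hence $A_M$ equals the number of such flats; in fact only the inclusion of the minimum-weight words into the set of indicators of flats is needed for an upper bound.

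Next I count the $k$-flats with $k=d+1$. A $k$-flat is a coset of a $k$-dimensional linear subspace. There are $\binom{n}{k}_2$ linear subspaces, and each has $2^{n-k}$ cosets. Thus
\[
A_M=2^{n-k}\binom{n}{k}_2=2^{n-k}\prod_{i=0}^{k-1}\frac{2^n-2^i}{2^k-2^i}.
\]

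The denominator factors as
\[
\prod_{i=0}^{k-1}(2^k-2^i)=\prod_{i=0}^{k-1}2^i(2^{k-i}-1)=2^{k(k-1)/2}\prod_{i=1}^{k}(2^i-1).
\]
For $k=d+1$ this is exactly $2^{d(d+1)/2}\prod_{i=1}^{d+1}(2^i-1)=K_d$. The numerator satisfies $\prod_{i=0}^{k-1}(2^n-2^i)\le N^{k}=N^{d+1}$. Combining,
\[
A_M\le 2^{n-d-1}\,\frac{N^{d+1}}{K_d}=\frac{N^{d+2}}{2^{d+1}K_d},
\]
which is the stated bound.

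The only non-elementary step is the characterization of minimum-weight words as indicators of flats. I will quote it from \cite{code} (MacWilliams--Sloane, Ch.~13) rather than reprove it. If a self-contained argument were wanted, I would proceed by induction on $n$ using the $(u\,|\,u+v)$ decomposition $\RM(r,n)=\{(u,u+v):u\in\RM(r,n-1),\,v\in\RM(r-1,n-1)\}$. A minimum-weight word then has either $v=0$ (two copies of a minimum-weight word of $\RM(r,n-1)$, forming a flat times a line) or $u=0$ or $u=v$ (one copy of a minimum-weight word of $\RM(r-1,n-1)$, a flat in a hyperplane). This case analysis is the main obstacle. The counting itself is routine.
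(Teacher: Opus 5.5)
Your argument is correct and is essentially the paper's proof: you identify the weight-$2^{d+1}$ words with indicators of $(d+1)$-flats, count them as $\frac{N}{2^{d+1}}$ cosets times the Gaussian binomial $\genfrac{[}{]}{0pt}{}{n}{d+1}_2$, bound the numerator by $N^{d+1}$, and factor the denominator into $K_d$. One caution about your optional inductive sketch: when $v\neq0$, minimum weight only forces $v$ to be a minimum-weight word with $\supp(u)\subseteq\supp(v)$, so besides $u=0$ and $u=v$ there is a third case in which $u$ and $u+v$ are both minimum-weight words of $\RM(r,n-1)$ splitting the flat $\supp(v)$ (for $\RM(1,3)$ this case accounts for $6$ of the $14$ weight-$4$ words), and since the upper bound needs \emph{every} minimum-weight word to be a flat, that case would need its own argument.
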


\begin{proof}
Weight-$M$ codewords are indicator functions of $(d+1)$-dimensional affine subspaces of $\mathbb{F}_2^n$. The number of $(d+1)$-dimensional linear subspaces is given by the Gaussian binomial coefficient:
\[
\genfrac{[}{]}{0pt}{}{n}{d+1}_2 = \frac{\prod_{i=0}^d (N - 2^i)}{\prod_{i=0}^d (2^{d+1} - 2^i)}
\]
Bounding the numerator by $N^{d+1}$ and factoring $2^i$ out of each term in the denominator yields:
\[
\prod_{i=0}^d (2^{d+1} - 2^i) = 2^{\sum_{i=0}^d i} \prod_{i=0}^d (2^{d+1-i} - 1) = 2^{d(d+1)/2} \prod_{j=1}^{d+1} (2^j - 1) = K_d
\]
Thus $\genfrac{[}{]}{0pt}{}{n}{d+1}_2 < \frac{N^{d+1}}{K_d}$. Multiplying by the $N/2^{d+1}$ parallel cosets yields $A_M = \frac{N}{2^{d+1}} \genfrac{[}{]}{0pt}{}{n}{d+1}_2 < \frac{N^{d+2}}{2^{d+1} K_d}$.
\end{proof}

\begin{lemma}[MacWilliams form of $\Phi$]
\label{lem:macwilliams_derivative}
Let $\mathbf{C}^\perp = RM(d, n)$ and  \[ g(w) = \frac{1}{N+1} \left(\frac{N+1}{N}\right)^{N-1} \left(\frac{N-1}{N+1}\right)^{w-1} (N - 1 - 2w). \]Then:
\[
\Phi = \frac{1}{|\mathbf{C}^\perp|} \sum_{q \in \mathbf{C}^\perp} g(w(q))
\]
\end{lemma}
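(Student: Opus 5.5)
The plan is to express $\Phi$ as $x\,\frac{d}{dx}$ of the weight enumerator of $\mathbf C$ at $x=1/N$, apply the MacWilliams identity with $\mathbf C^\perp=\RM(d,n)$ as the reference code, and differentiate each term of the resulting sum over $\mathbf C^\perp$. Only algebraic simplification is then needed.

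Set $A(x):=\sum_w A_w(\mathbf C)\,x^w$, the weight enumerator of $\mathbf C$ in one variable. Then $\Phi=\bigl(xA'(x)\bigr)\big|_{x=1/N}$. The MacWilliams identity, applied to $\mathbf C$ expressed through its dual, reads
\[
A(x)=\frac{1}{|\mathbf C^\perp|}\sum_{q\in\mathbf C^\perp}(1+x)^{N-w(q)}(1-x)^{w(q)} .
\]
Here we use $(\mathbf C^\perp)^\perp=\mathbf C$ and the fact that the homogeneous form is evaluated at $(y+x,\,y-x)$ with $y=1$. This is the standard identity for binary linear codes, so I would simply cite it.

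Next, differentiate term by term. For fixed $w=w(q)$,
\[
\frac{d}{dx}\Bigl[(1+x)^{N-w}(1-x)^{w}\Bigr]
=(1+x)^{N-w-1}(1-x)^{w-1}\bigl[(N-w)(1-x)-w(1+x)\bigr],
\]
and the bracket equals $N-2w-Nx$. For $w=0$ the factor $(1-x)^{w-1}$ is harmless, since $x=1/N<1$ and the formula is still a valid algebraic identity. Evaluating at $x=1/N$, the bracket becomes $N-1-2w$.

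The remaining step is to simplify $\frac1N\bigl(\frac{N+1}{N}\bigr)^{N-w-1}\bigl(\frac{N-1}{N}\bigr)^{w-1}$. I would rewrite it as $\frac1N\bigl(\frac{N+1}{N}\bigr)^{N-1}\bigl(\frac{N+1}{N}\bigr)^{-w}\bigl(\frac{N-1}{N}\bigr)^{w-1}$. Since $\bigl(\frac{N+1}{N}\bigr)^{-w}\bigl(\frac{N-1}{N}\bigr)^{w-1}=\bigl(\frac{N-1}{N+1}\bigr)^{w-1}\cdot\frac{N}{N+1}$, the prefactor collapses to $\frac{1}{N+1}\bigl(\frac{N+1}{N}\bigr)^{N-1}\bigl(\frac{N-1}{N+1}\bigr)^{w-1}$. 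Multiplying by $N-1-2w$ gives exactly $g(w)$, and averaging over $q\in\mathbf C^\perp$ yields the claim.

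There is no real obstacle here. The only points needing care are orienting the MacWilliams identity correctly, so that the sum runs over $\mathbf C^\perp=\RM(d,n)$ with the normalisation $1/|\mathbf C^\perp|$, and keeping track of the exponent bookkeeping in the final simplification.
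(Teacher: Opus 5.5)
Your proposal is correct and follows essentially the same route as the paper: write $\Phi$ as $x\,\frac{d}{dx}$ of the weight enumerator of $\mathbf C$ at $x=1/N$ (the paper phrases this as $\frac1N\,\partial_x W_{\mathbf C}(x,y)$ at $(1/N,1)$), apply the MacWilliams identity over $\mathbf C^\perp=\RM(d,n)$, differentiate term by term and simplify. Your exponent bookkeeping is cleaner than the paper's intermediate line, which misprints the factor $\bigl(\frac{N+1}{N}\bigr)^{w-1}$ as $\bigl(\frac{N}{N+1}\bigr)^{w-1}$; the paper's final expression $g(w)$ is still correct.
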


\begin{proof}
Let $W_C(x, y) = \sum_{w=0}^N A_w x^w y^{N-w}$. Differentiating gives $\Phi = \frac{1}{N} \left. \frac{\partial W_C}{\partial x} \right|_{(1/N, 1)}$. Applying MacWilliams duality \cite[Ch.~5]{code} $W_C(x, y) = \frac{1}{|\mathbf{C}^\perp|} \sum_{q \in \mathbf{C}^\perp} (y+x)^{N-w(q)} (y-x)^{w(q)}$ and differentiating term-by-term yields:
\begin{multline*}
    \frac{1}{N} \left. \frac{\partial}{\partial x} \left[ (y+x)^{N-w} (y-x)^w \right] \right|_{(1/N, 1)} = \\ = \frac{1}{N} \left(\frac{N+1}{N}\right)^{N-w-1} \left(\frac{N-1}{N+1}\right)^{w-1}\left(\frac{N}{N+1}\right)^{w-1} (N - 1 - 2w) = \\ = \frac{1}{N+1} \left(\frac{N+1}{N}\right)^{N-1} \left(\frac{N-1}{N+1}\right)^{w-1} (N - 1 - 2w) = g(w)
\end{multline*}

\end{proof}

\begin{lemma}[Delsarte orthogonality \cite{delsarte}]
\label{lem:delsarte_orthogonality}
Let $Y_q = 1 - \frac{2w(q)}{N} \in [-1, 1]$. For every integer $0 \le k \le M-1$ where $M = 2^{d+1}$:
\[
\frac{1}{|\mathbf{C}^\perp|} \sum_{q \in \mathbf{C}^\perp} Y_q^k = \frac{1}{2^N} \sum_{u \in \mathbb{F}_2^N} Y_u^k
\]
\end{lemma}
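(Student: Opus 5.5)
We aim to show that, for $0\le k\le M-1$, the $k$-th moment of the normalised weight $Y$ is the same on $\mathbf{C}^\perp$ and on all of $\FF^N$. The plan is to expand $Y_u^k$ in characters and then use that $\mathbf{C}=(\mathbf{C}^\perp)^\perp$ has no nonzero words of weight $<M$.

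\textbf{Character expansion.} Index the coordinates by $V$. For $u\in\FF^N$ we have
\[
Y_u=1-\frac{2w(u)}{N}=\frac1N\sum_{x\in V}(-1)^{u_x}.
\]
Raising to the $k$-th power gives
\[
Y_u^k=N^{-k}\sum_{(x_1,\dots,x_k)\in V^k}(-1)^{u\cdot \mathbf 1_{\{x_1\}}+\dots+u\cdot\mathbf 1_{\{x_k\}}}=N^{-k}\sum_{(x_1,\dots,x_k)}(-1)^{u\cdot z(x_1,\dots,x_k)},
\]
where $z(x_1,\dots,x_k)\in\FF^N$ is the mod-$2$ sum of the unit vectors $\mathbf 1_{\{x_i\}}$. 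Its weight is the number of points occurring an odd number of times among the $x_i$, so $w(z)\le k$.

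\textbf{Averaging over the two spaces.} Average over $q\in\mathbf{C}^\perp$ and exchange the sums. By the standard orthogonality relation for characters of a linear code,
\[
\frac1{|\mathbf{C}^\perp|}\sum_{q\in\mathbf{C}^\perp}(-1)^{q\cdot z}=\begin{cases}1,& z\in(\mathbf{C}^\perp)^\perp=\mathbf{C},\\ 0,& \text{otherwise}.\end{cases}
\]
Similarly, averaging over all $u\in\FF^N$ gives $1$ if $z=0$ and $0$ otherwise. The two averages therefore count the tuples with $z\in\mathbf C$ and with $z=0$, respectively.

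\textbf{Comparison.} Since $w(z)\le k\le M-1<M=d(\mathbf C)$ by Lemma~\ref{lem:min_weight}, the only codeword of $\mathbf C$ that $z$ can equal is $0$. Hence $z\in\mathbf{C}$ if and only if $z=0$, the two counts coincide, and the moments agree. The case $k=0$ is trivial.

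The only delicate points are the bookkeeping identity $w(z)\le k$ and the identification $(\mathbf C^\perp)^\perp=\mathbf C$, which holds for any linear code. This is the usual Delsarte argument that a code whose dual distance is $M$ is an orthogonal array of strength $M-1$. We do not expect any step to be a real obstacle.
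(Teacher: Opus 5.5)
Your proposal is correct and matches the paper's proof: expand $Y^k$ into characters $(-1)^{q\cdot z}$ with $w(z)\le k$, average to get the indicator of $z\in\mathbf{C}$ on one side and of $z=0$ on the other, and conclude from $d(\mathbf{C})=M>k$. Your bookkeeping is slightly more explicit than the paper's, since you note that $w(z)$ counts the points occurring an odd number of times.
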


\begin{proof} Notice that $Y_q = \frac{1}{N}\sum_{x=1}^{N}(-1)^{q_x}$.
Expanding $Y_q^k = \frac{1}{N^k} \sum_{x_1, \dots, x_k = 1}^N (-1)^{q \cdot v}$ for $v = 1_{x_1} + \dots + 1_{x_k}$, where $1_{x_i}$ is the vector of weight 1, with non-zero entry in the $i$-th position. Summing over $\mathbf{C}^\perp$ yields the indicator function of $\mathbf{C}$, evaluated on $v$. Namely it is non-zero if and only if $v \in (\mathbf{C}^\perp)^\perp = \mathbf{C}$. Because $v$ has weight at most $k \le M - 1 < d(\mathbf{C})$, $v \in \mathbf{C} \iff v = \mathbf{0}$. The same holds in the second summation, for $(\mathbb{F}_2^N)^\perp = \{0\}$.
\end{proof}

\begin{lemma}[derivative bound]
\label{lem:global_derivative}
Let \[f(Y_q) = g(w(q)) = \kappa e^{\lambda Y_q} (NY_q - 1),\] where $\lambda = \frac{N}{2} \ln\left(\frac{N+1}{N-1}\right)$ and $\kappa = \frac{1}{N+1} \left(\frac{N+1}{N}\right)^{N-1} \left(\frac{N-1}{N+1}\right)^{N/2-1}$. For any $M = 2^{d+1}$ and all $\xi \in [-1, 1]$:
\[
f^{(M)}(\xi) \le e (M+4)
\]
\end{lemma}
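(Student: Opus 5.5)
The plan is to compute $f^{(M)}$ in closed form and bound each factor separately. Since $f(Y)=\kappa e^{\lambda Y}(NY-1)$ is an exponential times a linear polynomial, Leibniz' rule leaves only two terms:
\[
f^{(M)}(\xi)=\kappa\,e^{\lambda\xi}\,\lambda^{M-1}\bigl(\lambda(N\xi-1)+MN\bigr).
\]
The prefactor $\kappa e^{\lambda\xi}\lambda^{M-1}$ is positive. So if the bracket is negative the claim is trivial. Otherwise, since $\lambda>0$, both $e^{\lambda\xi}$ and the bracket increase with $\xi$, and it suffices to bound the expression at $\xi=1$, namely by $\kappa e^{\lambda}\lambda^{M-1}\bigl(\lambda(N-1)+MN\bigr)$.

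The first key step is to simplify $\kappa e^{\lambda}$. By definition $e^{\lambda}=\bigl(\frac{N+1}{N-1}\bigr)^{N/2}$, hence $\bigl(\frac{N-1}{N+1}\bigr)^{N/2-1}=e^{-\lambda}\,\frac{N+1}{N-1}$. This gives
\[
\kappa e^{\lambda}=\frac{1}{N-1}\Bigl(\frac{N+1}{N}\Bigr)^{N-1}\le\frac{e}{N-1},
\]
using $(1+1/N)^{N-1}\le e$. Plugging this in, we get
\[
f^{(M)}(\xi)\le e\,\lambda^{M-1}\Bigl(\lambda+\frac{MN}{N-1}\Bigr).
\]

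The second step is to control $\lambda$. The series $\ln\frac{1+x}{1-x}=2\sum_{k\ \mathrm{odd}}x^k/k$ with $x=1/N$ gives $\lambda=1+\frac{1}{3N^2}+\frac{1}{5N^4}+\cdots$. Bounding the tail by a geometric series yields
\[
1<\lambda\le 1+\frac{1}{3(N^2-1)}.
\]
Under the standing assumption $d\le n-2$ we have $M=2^{d+1}\le N/2$. Therefore $\lambda^{M}\le\exp\bigl(M/(3(N^2-1))\bigr)\le\exp\bigl(1/(6(N-1))\bigr)$, which is at most $1.02$ for $N\ge16$.

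Finally, $\frac{MN}{N-1}=M+\frac{M}{N-1}\le M+1$, so the bound becomes
\[
e\,\lambda^{M-1}\bigl(\lambda+M+1\bigr)\le e\,\lambda^{M}(M+2).
\]
It then remains to check $\lambda^{M}(M+2)\le M+4$, that is, $(\lambda^M-1)(M+2)\le 2$. Using $\lambda^M-1\le e^{t}-1\le 2t$ with $t=M/(3(N^2-1))$, this reduces to $\frac{2M(M+2)}{3(N^2-1)}\le 2$. That inequality holds because $M\le N/2$.

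The main obstacle is purely the bookkeeping of the exponential cancellation in $\kappa e^{\lambda}$. One must recognise that $e^{\lambda}$ exactly cancels the $(N/2)$-th power in $\kappa$; a cruder bound such as $\kappa\le e/(N+1)$ and $e^{\lambda}\le e^{1.01}$ separately would lose a factor $e$ and fail to give the stated constant. The rest consists of elementary, deliberately crude estimates.
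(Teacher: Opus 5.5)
Your argument is correct, and it proves the lemma with the constant $e(M+4)$ exactly as stated. Its skeleton is the paper's:
\begin{itemize}
\item the closed form $f^{(M)}(\xi)=\kappa e^{\lambda\xi}\lambda^{M-1}\bigl(\lambda(N\xi-1)+MN\bigr)$;
\item the reduction to $\xi=1$;
\item the cancellation $\kappa e^{\lambda}=\frac{1}{N-1}\bigl(1+\frac1N\bigr)^{N-1}<\frac{e}{N-1}$;
\item the series for $\lambda$.
\end{itemize}

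Two steps differ. To reduce to $\xi=1$, the paper computes $f^{(M+1)}$ and checks that its linear factor is positive at $\xi=-1$. That is a numerical check using $N\ge16$ and $M\ge8$. Your case split on the sign of the bracket needs no numerics.

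The second difference matters more. The paper bounds $\lambda^{M-1}<e$ using only $M-1<2N^2$, together with $(N-1)\lambda+MN\le(N-1)(M+4)$. It therefore actually reaches only $f^{(M)}(1)<e^{2}(M+4)$, a factor $e$ weaker than the statement. That weaker constant is the one fed into Theorem~\ref{prop:Phi} and Corollary~\ref{cor:d2_eval}, so the discrepancy is harmless downstream. You instead use $M\le N/2$ to get $\lambda^{M}\le\exp\bigl(\frac{1}{6(N-1)}\bigr)=1+O(1/N)$, and this is what recovers the stated $e(M+4)$. So your argument proves the lemma as written, while the paper's proof does not.

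Your closing diagnosis is slightly off. The exact cancellation in $\kappa e^{\lambda}$ is common to both proofs. The factor $e$ is won or lost in the treatment of $\lambda^{M-1}$, which is the step you sharpened.
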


\begin{proof}
Differentiating $f(Y)$ $M$ times via the product rule yields $f^{(M)}(\xi) = \kappa e^{\lambda \xi} L(\xi)$, where $L(\xi) = (N\xi - 1)\lambda^M + M N \lambda^{M-1}$. Moreover:
\[
\frac{d}{d\xi} f^{(M)}(\xi) = f^{(M)+1}(\xi) =\kappa \lambda^M e^{\lambda \xi} \left[ N\lambda \xi + (M+1)N - \lambda \right]
\]
Evaluating the linear factor at its minimum $\xi = -1$ yields $(M+1)N - (N+1)\lambda > 144 - 18.13 > 0$ for all $N \ge 16, M \ge 8$. Thus $\frac{d}{d\xi} f^{(M)}(\xi) > 0$ on $[-1, 1]$, proving $f^{(M)}(\xi)$ is strictly increasing on $[-1, 1]$ and maximized uniquely at $\xi = 1$:
\[
f^{(M)}(\xi) \le f^{(M)}(1) = \kappa e^\lambda \left[ (N - 1)\lambda^M + M N \lambda^{M-1} \right]
\]
For the constant,
\[
\kappa e^\lambda = \frac{1}{N+1}\left(\frac{N+1}{N}\right)^{N-1}\left(\frac{N+1}{N-1}\right) = \frac{1}{N-1}\left(1 + \frac{1}{N}\right)^{N-1} < \frac{e}{N-1}.
\]
For the bracket, factor out $\lambda^{M-1}$ and use $\lambda>1$:
\[
L(1) = (N-1)\lambda^{M} + MN\lambda^{M-1}
 = \lambda^{M-1}\bigl[(N-1)\lambda + MN\bigr] .
\]
Since $\ln\frac{N+1}{N-1}=2\sum_{k\ge0}\frac{1}{(2k+1)N^{2k+1}}$, we have
$\lambda = 1+\frac{1}{3N^{2}}+\frac{1}{5N^{4}}+\dots \le 1+\frac{1}{2N^{2}}$, whence
$\lambda^{M-1}\le\bigl(1+\frac{1}{2N^{2}}\bigr)^{M-1}<e^{(M-1)/(2N^{2})}<e$,
because $M-1<2N^{2}$. Moreover $(N-1)\lambda+MN\le(N-1)(M+4)$: this is
$(N-1)(\lambda-1)\le 3N-M-3$, and the left side is at most $1$ while
$3N-M-3\ge N-3>1$, using $M\le N/2$. Therefore
$L(1) < e\,(N-1)(M+4)$ and
\[
f^{(M)}(1) = \kappa e^{\lambda}L(1) < \frac{e}{N-1}\cdot e\,(N-1)(M+4) = e^{2}(M+4).
\]
\end{proof}

\medskip\noindent The three ingredients now combine.

\begin{theorem}[the bound on $\Phi$]
\label{prop:Phi}
Let $\mathbf{C} = RM(n-d-1, n)$ be a binary Reed-Muller code of length $N = 2^n$ ($n \ge 4$, $2 \le d \le n-2$) with dual $\mathbf{C}^\perp = RM(d, n)$. Let $M = 2^{d+1}$ and $K_d = 2^{d(d+1)/2} \prod_{i=1}^{d+1} (2^i - 1)$. Then the weighted sum $\Phi = \sum_{w=1}^N w A_w N^{-w}$ satisfies the following inequality:
\[
\Phi \le \frac{e^2(M+4)}{M!} \left( \frac{(M-1)!!}{N^{2^d}} + \frac{M!}{2^{d+1} K_d \cdot N^{2^{d+1} - d - 2}} \right)
\]
\end{theorem}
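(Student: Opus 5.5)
The plan is to write $\Phi$ as the average of $f(Y_q)$ over $\mathbf C^\perp$, using Lemma~\ref{lem:macwilliams_derivative} together with the substitution $w=\tfrac N2(1-Y)$ of Lemma~\ref{lem:global_derivative}. I would then Taylor-expand $f$ at $0$ to order $M-1$ with Lagrange remainder:
\[
f(Y)=P(Y)+R(Y),\qquad R(Y)=\frac{f^{(M)}(\xi_Y)}{M!}Y^M,\quad \xi_Y\in[-1,1],
\]
where $\deg P\le M-1$. By Lemma~\ref{lem:delsarte_orthogonality}, the averages of $P(Y)$ over $\mathbf C^\perp$ and over $\FF^N$ coincide. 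The full-space average of $f$ is, up to the factor $\kappa$, the binomial sum of Lemma~\ref{lem:binomial_identity}, so it vanishes. Hence
\[
\Phi=\frac{1}{|\mathbf C^\perp|}\sum_{q\in\mathbf C^\perp}R(Y_q)-\frac1{2^N}\sum_{u\in\FF^N}R(Y_u).
\]

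Next comes the sign of the remainder. Lemma~\ref{lem:global_derivative} shows that $f^{(M)}$ is increasing on $[-1,1]$. At $\xi=-1$ we have $f^{(M)}(-1)=\kappa e^{-\lambda}\lambda^{M-1}[MN-(N+1)\lambda]>0$, because $\lambda<1+1/(2N^2)$ and $M\ge 8$. So $f^{(M)}>0$ on $[-1,1]$. Since $M$ is even, $R\ge0$ everywhere, and we may drop the subtracted full-space term. Using $f^{(M)}\le e^2(M+4)$ from Lemma~\ref{lem:global_derivative}, this gives
\[
\Phi\le\frac{e^2(M+4)}{M!}\cdot\frac{1}{|\mathbf C^\perp|}\sum_{q\in\mathbf C^\perp}Y_q^M .
\]

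The remaining step, which I expect to be the main obstacle, is to bound the $M$-th moment over $\mathbf C^\perp$. Here the order equals the minimum distance of $\mathbf C$, so Delsarte orthogonality no longer applies directly. I would repeat the expansion from the proof of Lemma~\ref{lem:delsarte_orthogonality}:
\[
\frac{1}{|\mathbf C^\perp|}\sum_{q}Y_q^M=N^{-M}\,\#\{(x_1,\dots,x_M)\in V^M:\ 1_{x_1}+\dots+1_{x_M}\in\mathbf C\}.
\]
The vector $1_{x_1}+\dots+1_{x_M}$ has weight at most $M$. It therefore lies in $\mathbf C$ only in two cases, which I treat separately.

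In the first case the vector is zero. Then every point appears an even number of times, and such tuples number at most $(M-1)!!\,N^{M/2}$: choose a perfect matching of the $M$ positions, then one point per pair.

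In the second case the vector is a minimum-weight word of $\mathbf C$. Then the $x_i$ are distinct and form the support of that word, so there are exactly $M!\,A_M$ such tuples.

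Dividing by $N^M=N^{2^{d+1}}$ gives the two terms $(M-1)!!\,N^{-2^d}$ and $M!A_M N^{-2^{d+1}}$. Lemma~\ref{lem:min_weight} bounds $A_M\le N^{d+2}/(2^{d+1}K_d)$, which turns the second term into $M!/(2^{d+1}K_dN^{2^{d+1}-d-2})$. Multiplying by $e^2(M+4)/M!$ yields exactly the claimed inequality.
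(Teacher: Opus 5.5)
Your proposal is correct and follows the paper's proof step for step. The steps are: Taylor expansion of $f$ to order $M-1$, Delsarte orthogonality for the polynomial part, the binomial identity for the full-space average, nonnegativity of the remainder, the bound $f^{(M)}\le e^2(M+4)$, and the two-case count of the $M$-th moment combined with Lemma~\ref{lem:min_weight}.

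Your bookkeeping is slightly more careful than the paper's in two places. First, the paper asserts that the full-space average of $P_{M-1}$ itself vanishes and that $\Phi$ equals the $\mathbf C^\perp$-average of the remainder. The binomial identity actually only gives $\Phi=\frac{1}{|\mathbf C^\perp|}\sum_{q\in\mathbf C^\perp}R(Y_q)-\frac{1}{2^N}\sum_{u\in\FF^N}R(Y_u)$, which is exactly what you wrote. Second, you justify $f^{(M)}>0$ on $[-1,1]$, which the paper only asserts.
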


\begin{proof}
Set $M = 2^{d+1}$. Expand $f(Y_q)$ in its $(M-1)$-th degree Taylor polynomial around $Y = 0$:
\[
f(Y_q) = P_{M-1}(Y_q) + \frac{f^{(M)}(\xi_q)}{M!} Y_q^M \quad \text{where } \xi_q \in [-1, 1]
\]
Averaging over $\mathbf{C}^\perp$, Lemma~\ref{lem:delsarte_orthogonality} implies $\frac{1}{|\mathbf{C}^\perp|} \sum_{q \in \mathbf{C}^\perp} P_{M-1}(Y_q) = \frac{1}{2^N}\sum_{u \in \FF^N}P_{M-1}(Y_u)$. Since $\frac{1}{2^N}\sum_{u \in \FF^N}P_{M-1}(Y_u) = 0$ by Lemma~\ref{lem:binomial_identity} and the Taylor remainder $R_M(Y_u) = \frac{f^{(M)}(\xi_u)}{M!} Y_u^M \ge 0$, we have $\Phi = \frac{1}{|\mathbf{C}^\perp|}\sum_{q \in \mathbf{C}^\perp}R_M(Y_q)$. Applying Lemma~\ref{lem:global_derivative} gives:
\[
\Phi \le \frac{e^2(M+4)}{M!} \left( \frac{1}{|\mathbf{C}^\perp|} \sum_{q \in \mathbf{C}^\perp} Y_q^M \right)
\]
Expanding the $M$-th moment into character indicators $v = 1_{x_1} + \dots + 1_{x_M}$:
\[
\frac{1}{|\mathbf{C}^\perp|} \sum_{q \in \mathbf{C}^\perp} Y_q^M = \frac{1}{N^M} \left| \{ (x_1, \dots, x_M) \mid v \in \mathbf{C} \} \right|
\]
Since $\text{wt}(v) \le M = d(\mathbf{C})$, $v \in \mathbf{C}$ if and only if $v = \mathbf{0}$ (contributing $(M-1)!! N^{2^d}$ paired tuples) or $v \in \mathbf{C} \setminus \{\mathbf{0}\}$ (contributing $M! A_M$ weight-$M$ tuples). Substituting Lemma~\ref{lem:min_weight} for $A_M$:
\[
\frac{1}{|\mathbf{C}^\perp|} \sum_{q \in \mathbf{C}^\perp} Y_q^M \le \frac{(M-1)!! N^{2^d} + M! \left( \frac{N^{d+2}}{2^{d+1} K_d} \right)}{N^{2^{d+1}}} = \frac{(M-1)!!}{N^{2^d}} + \frac{M!}{2^{d+1} K_d \cdot N^{2^{d+1} - d - 2}}
\]
Multiplying by $\frac{e^2(M+4)}{M!}$ completes the proof.
\end{proof}

\begin{corollary}
\label{cor:d2_eval}
For any $2 \le d \le n - 2$, $\Phi < \frac{1}{3 N^4}$.
\end{corollary}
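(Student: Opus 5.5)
The plan is to specialise the bound of Theorem~\ref{prop:Phi} and check numerically that both of its terms together stay below $\frac{1}{3N^4}$. The case $d=2$ is the tight one, and every $d\ge3$ reduces to it by monotonicity. First I rewrite the two summands with $M=2^{d+1}$. Using $M!/(M-1)!!=M!!=2^{M/2}(M/2)!$, the theorem gives $\Phi\le T_1+T_2$ with
\[
T_1=\frac{e^2(M+4)}{2^{M/2}(M/2)!}\,N^{-2^d},\qquad
T_2=\frac{e^2(M+4)}{2^{d+1}K_d}\,N^{-(2^{d+1}-d-2)} .
\]
I use the constant $e^2(M+4)$, which is what the proof of Lemma~\ref{lem:global_derivative} actually delivers and what Theorem~\ref{prop:Phi} carries. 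The statement of that lemma says $e(M+4)$; the weaker $e^2$ version is enough here.

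For $d=2$ we have $M=8$, $2^d=4$, $2^{d+1}-d-2=4$ and $K_2=2^3\cdot1\cdot3\cdot7=168$. This gives
\[
T_1=\frac{12e^2}{16\cdot 24}\,N^{-4}=\frac{e^2}{32}N^{-4}\approx 0.231\,N^{-4},\qquad
T_2=\frac{12e^2}{8\cdot168}\,N^{-4}=\frac{e^2}{112}N^{-4}\approx 0.066\,N^{-4}.
\]
The total is about $0.297\,N^{-4}<\frac13 N^{-4}$. To make this rigorous I use the elementary bound $e^2<7.39$, which gives $e^2\bigl(\frac1{32}+\frac1{112}\bigr)=\frac{9e^2}{224}<\frac13$, since this needs only $e^2<\frac{224}{27}\approx 8.30$.

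For $d\ge3$ I argue by monotonicity in $d$. The coefficient $(M+4)/(2^{M/2}(M/2)!)$ is decreasing in $M$, because the denominator at least doubles while $M+4$ grows by less than a factor $2$. The coefficient $(M+4)/(2^{d+1}K_d)$ is also decreasing in $d$: passing from $d$ to $d+1$ multiplies $2^{d+1}K_d$ by $2\cdot 2^{d+1}(2^{d+2}-1)\ge 2$ and multiplies $M+4$ by less than $2$. The exponents $2^d\ge8$ and $2^{d+1}-d-2\ge11$ are both at least $4$, and $N\ge16>1$, so $N^{-\text{exponent}}\le N^{-4}$. Hence for every $d\ge3$ each of $T_1,T_2$ is at most its $d=2$ counterpart, and the $d=2$ estimate gives $\Phi<\frac{1}{3N^4}$.

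The only delicate point is the $d=2$ arithmetic. The margin is modest, roughly $0.297$ against $0.333$, and it relies on the exact value $K_2=168$ and on keeping the double-factorial simplification exact. For larger $d$ there is a huge amount of slack.
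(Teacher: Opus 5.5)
Your proof is correct and is essentially the paper's argument: you specialise Theorem~\ref{prop:Phi} to $d=2$, where the bound becomes $\frac{9e^2}{224N^4}=\frac{1620e^2}{40320N^4}<\frac{1}{3N^4}$, and you reduce $d\ge3$ to $d=2$ by monotonicity. Your monotonicity step is more complete than the paper's, which only cites the growth of the exponent of $N$ and does not check that the coefficients $\frac{M+4}{M!!}$ and $\frac{M+4}{2^{d+1}K_d}$ decrease; you are also right that Lemma~\ref{lem:global_derivative} as stated says $e(M+4)$, while its proof and Theorem~\ref{prop:Phi} actually use $e^2(M+4)$.
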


\begin{proof}
 Let $b_d$ represent the bound of the previous Theorem, then for any $d \ge 2$, $b_d \le b_2$, since the power of $N$ grows exponentially. \\
 Explicitly computing $b_2$ we have that for $d = 2$, $M = 2^3 = 8$, $K_2 = 2^3 \cdot 21 = 168$, $2^d = 4$, and $2^{d+1} - d - 2 = 4$. Hence it evaluates to:
\[
\Phi \le \frac{12e^2}{40320} \left( \frac{105}{N^4} + \frac{40320}{8 \cdot 168 \cdot N^4} \right) = \frac{12e^2}{40320} \left( \frac{105 + 30}{N^4} \right) = \frac{1620 e^2}{40320 N^4} < \frac{1}{3 N^4}
\]
\end{proof}

\section{Proof of the main theorem}\label{sec:mainthm}

We can now prove Theorem~\ref{thm:intro}, restated here as
Theorem~\ref{thm:main}. The three cases $d=n$, $d=n-1$ and $d\le n-2$ are handled
separately: the first two are direct computations, and only the last needs the
machinery of Sections~\ref{sec:reduction}--\ref{sec:rm-bound}. A final remark shows
that the hypothesis $d\ge2$ cannot be dropped.

\begin{theorem}\label{thm:main}
Let $n\ge2$ and $2\le d\le n$. Then
\begin{enumerate}[label=\textnormal{(\roman*)}]
\item $\alpha_1-\alpha_0\le\alpha_1/N$, with equality if and only if $d=n$;
\item $\alpha_1>\alpha_0$;
\item if $d\le n-2$, then
\[
\frac{\alpha_1}{N}-\frac{N^{D-3}}{3}\;\le\;\alpha_1-\alpha_0\;\le\;\frac{\alpha_1}{N},
\qquad
\frac{\alpha_1}{N}\;\ge\;N^{D-4}\,\frac{(N-1)(N-2)(2N+3)}{6}\;\ge\;\frac{N^{D-1}}4,
\]
and hence $\bigl(1-\frac4{3N^2}\bigr)\frac{\alpha_1}N\le\alpha_1-\alpha_0\le\frac{\alpha_1}N$.
\end{enumerate}
\end{theorem}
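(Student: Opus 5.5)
The proof splits into the three cases $d=n$, $d=n-1$ and $2\le d\le n-2$, as announced at the start of Section~\ref{sec:mainthm}.

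\emph{Case $d=n$.} Here $Q_d=Q$ and $D=N$. By Lemma~\ref{lem:coloop}, $\ev(V)$ is a basis of $\FF^N$, so $r_S=|S|$ for every $S\subseteq V$. Theorem~\ref{thm:boolean-matroid-link} then gives
\[
\alpha_0=\chi_M(N)=(N-1)^N,\qquad \alpha_1=N(N-1)^{N-1}.
\]
Hence $\alpha_1-\alpha_0=(N-1)^{N-1}=\alpha_1/N$, which is equality in (i), and $\alpha_1>\alpha_0$ holds trivially.

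\emph{Case $d=n-1$.} Now $\mathbf C=\RM(0,n)=\{0,\mathbf 1\}$, so the only circuit is $V$ itself and $D=N-1$. Every $S\ne V$ is independent, so
\[
\chi_M(\lambda)=\sum_{S\ne V}(-1)^{|S|}\lambda^{N-1-|S|}+\lambda^{0}=\frac{(\lambda-1)^N-1}{\lambda}+1 .
\]
The contraction $M/v$ has rank $N-2$ on $N-1$ elements with the single circuit $V\setminus\{v\}$, so in the same way
\[
\chi_{M/v}(\lambda)=\frac{(\lambda-1)^{N-1}+1}{\lambda}-1 .
\]
Evaluating both at $\lambda=N$, I will check directly that $0<\alpha_1-\alpha_0<\alpha_1/N$. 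Alternatively, I can argue through Corollary~\ref{cor:alpha-inequality}: the port $P_v=\{V\setminus\{v\}\}$ has odd cardinality $N-1$, which gives the upper bound, and the lower bound follows from $\mathcal S_0(N)\ge N^{D-2}(N-a_1)$ together with a single error term $\mu_{N-1}N^{D+2-N}=N^{0}$, which is tiny. Strictness of the upper bound should come from strict inequality in one pair of Theorem~\ref{thm:matroid-inequality}(c), for instance $b_{k+1}<\lambda b_k$ for some $k$. The small-$n$ instance $n=2$, $d=1$ is excluded, so this case needs $n\ge3$; I will verify $n=3$ by hand if necessary.

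\emph{Case $2\le d\le n-2$.} This is the main case.

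\textbf{Upper bound.} By Corollary~\ref{cor:port-rm}, every member of $P_v$ has odd cardinality. Corollary~\ref{cor:alpha-inequality}(b) therefore gives
\[
\alpha_1-\alpha_0\le\frac{\alpha_1}{N},\qquad \frac{\alpha_1}{N}\le\mathcal S_0(N).
\]
\textbf{Lower bound for $\alpha_1/N$.} I want the stated lower bound directly for $\alpha_1/N$, not for $\mathcal S_0(N)$. I will use $\alpha_1/N=\chi_{M/v}(N)=\mathcal S_0(N)+\mathcal S_1(N)$. Here $\mathcal S_1(N)\ge -N^{D}\Phi$ by Theorem~\ref{thm:matroid-inequality}(b) and Lemma~\ref{lem:small}. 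The terms of $\mathcal S_0(N)$ with $k\le 6$ are explicit by Lemma~\ref{lem:small}: the $k=4$ pair contributes $\ge 0$, and only $k=0,2$ are used. Against the bound $\Phi<1/(3N^4)$, these yield the stated quantity $N^{D-4}(N-1)(N-2)(2N+3)/6$, up to an $O(N^{D-4})$ slack. I expect that this slack can be absorbed, since the $k=0$ pair gives $N^{D-2}\cdot 1$ beyond the displayed bound. Lemma~\ref{lem:small} then gives $\ge N^{D-1}/4$.

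\textbf{Lower bound for $\alpha_1-\alpha_0$.} Combining Corollary~\ref{cor:alpha-inequality}(a) with Lemma~\ref{lem:small} and Corollary~\ref{cor:d2_eval}, the error is
\[
\sum_{j\ \mathrm{odd}}\mu_jN^{D+1-j}\le N^{D+1}\Phi<\frac{N^{D-3}}{3}.
\]
What remains is to replace $\mathcal S_0(N)$ by $\alpha_1/N$. For this I will use Proposition~\ref{We} in the form $\alpha_1-\alpha_0=\alpha_1/N+(N-1)\mathcal S_1(N)$ together with the bound on $N\mathcal S_1(N)$. This gives an error of $\frac{N-1}{N}\cdot\frac{N^{D-3}}{3}$, which suffices.

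\textbf{Final inequality and strictness.} Dividing, $\frac{N^{D-3}/3}{\alpha_1/N}\le\frac{4}{3N^2}$, which yields the last display of (iii). Statement (ii) follows because $\frac{4}{3N^2}<1$, and the strictness in (i) for $d<n$ follows because $\mathcal S_1(N)<0$ strictly. I expect this strictness step to be the main obstacle: it requires a pair in Theorem~\ref{thm:matroid-inequality}(c) with $b_{k+1}<Nb_k$. The plan is to take $k=2^{d+1}-1$, where $b_k=\mu_k>0$ and $b_{k+1}\le(N-1-k)b_k<Nb_k$ by Lemma~\ref{lem:shadow} at the first nonzero level. The same argument gives strictness in the case $d=n-1$.
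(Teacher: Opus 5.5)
Your proposal follows the paper's proof closely. It uses the same three-case split. For $d\le n-2$ it uses Corollary~\ref{cor:alpha-inequality}(b) for the upper bound, the decomposition $\alpha_1/N=\mathcal S_0(N)+\mathcal S_1(N)$ with the $k=0$ pair $N^{D-2}$ absorbing the $N^{D-4}/3$ slack, and a strictly negative pair in Theorem~\ref{thm:matroid-inequality}(c) for strictness.

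Your deviations are fine:
\begin{itemize}
\item For $d=n-1$, your closed forms for $\chi_M$ and $\chi_{M/v}$ give exactly $\alpha_1-\alpha_0=\alpha_1/N-(N-1)=\frac{(N-1)^{N-1}+1}{N}-N>0$. This replaces the paper's appeal to $|\mathrm{GL}(n,\FF)|$.
\item Bounding $\alpha_1-\alpha_0=\alpha_1/N+(N-1)\mathcal S_1(N)$ directly gives the marginally sharper error $\frac{N-1}{N}\cdot\frac{N^{D-3}}{3}$. The paper instead goes through $\mathcal S_0(N)\ge\alpha_1/N$.
\item In your alternative route for $d=n-1$, the error term is $N^{D+2-N}=N$, not $N^0$. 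This is still negligible against $N^{D-2}$.
\end{itemize}

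The one unsupported step is the one you flagged: strictness in (i) needs $b_m\ge1$ at $m=2^{d+1}-1$. Lemma~\ref{lem:shadow}(ii) at level $m-1$ gives only $b_m\le(N-m)b_{m-1}+\mu_m=\mu_m$, which is the wrong direction. You must show that some $P\in P_v$ with $|P|=m$ survives $\iota_1$.

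The paper does this using the fact that $\mathcal S_1(N)=W_v(N)$ does not depend on the order. It puts $v$ first and the elements of one minimum-size $P_0$ next. Then for $p\notin P_0\cup\{v\}$ one has $(P_0)_{>p}=\varnothing$, so $p$ cannot be active. An element $p\in P_0$ cannot be active either, because $P_0$ is independent.

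Your stronger claim $b_m=\mu_m$ is in fact true for every order with $v$ first, but it needs its own argument:
\begin{itemize}
\item Suppose $p\notin P$ were $1$-active for $P$. The circuit through $p$ inside $P_{>p}\cup\{p\}$ has size at least $2^{d+1}=m+1$, so it must equal $P\cup\{p\}$.
\item Then $P\cup\{v\}$ and $P\cup\{p\}$ are both circuits of this binary matroid. Adding their all-ones dependencies gives $\ev(v)=\ev(p)$, contradicting injectivity of $\ev$. Equivalently, $\mathbf C$ would contain a codeword of weight $2$.
\end{itemize}

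With either fix, the pair $(m,m+1)$ contributes at most $-N^{D-m-1}(m+1)b_m<0$. This uses $b_{m+1}\le(N-1-m)b_m$, which holds because $\mu_{m+1}=0$. Your argument is then complete.
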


\begin{proof}
\emph{Case $d=n$.} $Q_d=Q$ is the space of all functions $V\to\FF$, so a system is an arbitrary map $V\to\FF^n$ and $\alpha_0=(N-1)^N$, $\alpha_1=N(N-1)^{N-1}$ (choose the unique zero and the non-zero values elsewhere; equivalently, the number of solutions is binomial with parameters $N,1/N$, cf.\ \cite[Theorem~1]{jain}). Thus $\alpha_1-\alpha_0=\alpha_1/N>0$.

\emph{Case $d=n-1$ ($n\ge3$).} Here $\mathbf{C}=\RM(0,n)=\{\mathbf0,\mathbf1\}$, so the only circuit is $V$ and $T_v=\{V\setminus\{v\}\}$; moreover $r(V\setminus\{v\})=D$ by Lemma~\ref{lem:coloop}, so $\varphi(V\setminus\{v\})=1$ and $W_v(N)=(-1)^{N-1}=-1$. Corollary~\ref{cor:We-alpha} gives $\alpha_1-\alpha_0=\alpha_1/N-(N-1)<\alpha_1/N$. Also $\alpha_1/N\ge|GL(n,\FF)|>N-1$ (every invertible linear system has $\mathbf0$ as its unique zero, and $|GL(n,\FF)|=\prod_{i<n}(2^n-2^i)>2^n-1$ for $n\ge2$), so $\alpha_1>\alpha_0$.

\emph{Case $d\le n-2$.} Then $n\ge4$, $N\ge16$ and $d(\mathbf{C})=2^{d+1}\ge8$. By Corollary~\ref{cor:port-rm} every member of $P_v$ has odd cardinality, so Corollary~\ref{cor:alpha-inequality}(b) gives
\[
\alpha_1-\alpha_0\le\frac{\alpha_1}N\qquad\text{and}\qquad \mathcal S_0(N)\ge\frac{\alpha_1}N .
\]
By Corollary~\ref{cor:alpha-inequality}(a), Lemma~\ref{lem:small} and Proposition~\ref{prop:Phi},
\[
\alpha_1-\alpha_0\ \ge\ \mathcal S_0(N)-\sum_{j\ \mathrm{odd}}\mu_jN^{D+1-j}\ \ge\ \mathcal S_0(N)-N^{D+1}\Phi\ \ge\ \frac{\alpha_1}N-\frac{N^{D-3}}{3}.
\]
For the lower bound on $\alpha_1/N$, note that $\alpha_1/N=\chi_{M/v}(N)=\mathcal S_0(N)+\mathcal S_1(N)$ by \eqref{eq:ab} and Theorem~\ref{thm:boolean-matroid-link}, and $\mathcal S_1(N)\ge-\frac1N\sum_{j\ \mathrm{odd}}\mu_jN^{D+1-j}\ge-N^{D}\Phi\ge-N^{D-4}/3$ by Theorem~\ref{thm:matroid-inequality}(b), Lemma~\ref{lem:small} and Proposition~\ref{prop:Phi}. Hence, by Lemma~\ref{lem:small},
\begin{align*}
\frac{\alpha_1}N\ &\ge\ N^{D-2}\Bigl(1-\frac1{3N^2}\Bigr)+N^{D-4}\frac{(N-1)(N-2)(2N+3)}6\\
&\ge\ N^{D-4}\frac{(N-1)(N-2)(2N+3)}6\ \ge\ \frac{N^{D-1}}4 .
\end{align*}
Therefore $N^{D-3}/3\le\frac4{3N^2}\cdot\frac{\alpha_1}N$, which proves (iii), and (ii) follows since $\alpha_1>0$.

\emph{Strictness in (i) for $d\le n-2$.} Since $\alpha_1-\alpha_0=\alpha_1/N+(N-1)W_v(N)$ (Corollary~\ref{cor:We-alpha}), it suffices to show $W_v(N)<0$. The value $W_v(N)=\mathcal S_1(N)$ does not depend on the order chosen on $V$, so we may choose it: let $P_0\in P_v$ have the minimum cardinality $m$ (odd) and order $V$ so that $v$ comes first and the elements of $P_0$ next. Then $P_0\in\Fix_1(T_v)$: an element $p\in P_0$ cannot be $1$-active since $P_0$ is independent, and an element $p\notin P_0\cup\{v\}$ has $(P_0)_{>p}=\emptyset$. Hence $b_m\ge1$, and in the proof of Theorem~\ref{thm:matroid-inequality}(c) the pair $(m,m+1)$ contributes $-N^{D-m-1}(Nb_m-b_{m+1})\le-N^{D-m-1}(N-(N-1-m))b_m<0$, all other pairs being $\le0$. So $\mathcal S_1(N)<0$.
\end{proof}

\begin{remark}\label{rem:final}
The hypothesis $d\ge2$ cannot be dropped. For $d=1$ a system is an affine map
$S(x)=Ax+b$ with $A\in\FF^{n\times n}$ and $b\in\FF^n$, so $D=n+1$ and the $N^{D}$
systems correspond to the pairs $(A,b)$. The set $S^{-1}(0)$ is empty when
$b\notin\Im A$ and is otherwise a coset of $\ker A$, of size $2^{\,n-\mathrm{rank}\,A}$; hence,
writing $g_r$ for the number of $A$ of rank $r$,
\[
\alpha_0=\sum_{r=0}^{n}g_r\bigl(N-2^{r}\bigr),
\qquad
\alpha_1=N\,g_n=N\,|\mathrm{GL}(n,\FF)| ,
\]
since a unique solution means $\mathrm{rank}\,A=n$, and then every $b$ occurs. Now
$g_r=\genfrac{[}{]}{0pt}{}{n}{r}_2\prod_{i=0}^{r-1}(N-2^{i})$, and writing
$P=\prod_{i=0}^{n-3}(N-2^{i})$ one has
\[
g_n=P\Bigl(\tfrac{3N}4\Bigr)\Bigl(\tfrac N2\Bigr),\qquad
g_{n-1}=(N-1)P\Bigl(\tfrac{3N}4\Bigr),\qquad
g_{n-2}=\tfrac{(N-1)(\frac N2-1)}3\,P .
\]
Keeping only the terms $r=n-1$ and $r=n-2$ of $\alpha_0$ (all terms are
non-negative, and the term $r=n$ vanishes) gives
\[
\alpha_0\ \ge\ g_{n-1}\tfrac N2+g_{n-2}\tfrac{3N}4
=(N-1)P\Bigl(\tfrac{N^{2}}2-\tfrac N4\Bigr),
\qquad
\alpha_1=\tfrac38PN^{3},
\]
so that
\[
\frac{\alpha_0}{\alpha_1}\ \ge\ \frac{4N^{2}-6N+2}{3N^{2}}\ >\ 1
\qquad\Longleftrightarrow\qquad N^{2}-6N+2>0 ,
\]
which holds for $N\ge8$. Hence $\alpha_0>\alpha_1$ for every $n\ge3$, the reverse
of the inequality of Theorem~\ref{thm:main}. \textup{(}For $n=1,2$ one computes
$(\alpha_0,\alpha_1)=(1,2)$ and $(21,24)$, so $\alpha_0<\alpha_1$ there; the ratio
$\alpha_0/\alpha_1$ increases to $4/3$.\textup{)}
\end{remark}

\begin{proof}[Proof of \eqref{eq:intro-mq}]
Apply Theorem~\ref{thm:main} with $d=2$ and $n\ge2$, and use $\alpha_1-\alpha_0\le\alpha_1/N\iff\alpha_1\le(1+\frac1{N-1})\alpha_0$. Equality on the right holds iff $d=n$, i.e.\ iff $n=2$.
\end{proof}

\section{Remarks on complexity}\label{sec:complexity}

We record the elementary facts used in the introduction. A \emph{quadratic system} over $\FF$ is a tuple $F=(f_1,\dots,f_m)$ with $f_i\in Q_2$ (in some number $n$ of variables), and $X(F)\subseteq\FF^n$ is its set of solutions. A polynomial-time map $F\mapsto F'$ between systems is \emph{parsimonious} if $|X(F')|=|X(F)|$.

\begin{proposition}[squaring]\label{prop:square}
Let $F$ be a quadratic system over $\FF$ in $n$ variables with $m$ equations.
\begin{enumerate}[label=\textnormal{(\alph*)}]
\item If $m\le n$, appending $n-m$ equations $0=0$ gives an $n\times n$ system with
the same solution set.
\item If $m>n$, adjoining $m-n$ variables not occurring in any equation gives an
$m\times m$ system whose solution set is $X(F)\times\FF^{\,m-n}$; this is square but
not parsimonious, the number of solutions being multiplied by $2^{\,m-n}$.
\item If $m>n$, replacing a pair of equations $f=0$, $g=0$ by
\[
w+f=0,\qquad u+g=0,\qquad w+u+wu=0
\]
in two new variables $w,u$, and iterating $m-n$ times, gives a
$(2m-n)\times(2m-n)$ system with the same \emph{number} of solutions.
\end{enumerate}
All three maps are computable in polynomial time.
\end{proposition}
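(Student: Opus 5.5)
Each of the three constructions is verified by a direct computation on solution sets, followed by a size count that gives polynomial time.

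For (a), the appended equations $0=0$ are satisfied by every point of $\FF^n$. The solution set of the enlarged system is therefore $X(F)\cap\FF^n=X(F)$, and the new system has $n$ equations in $n$ variables. The output length is the input length plus $O(n)$ zero polynomials.

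For (b), a point $(x,y)\in\FF^n\times\FF^{m-n}$ satisfies the new system iff $x\in X(F)$, because the $y$-variables occur in no equation. Hence the solution set is $X(F)\times\FF^{m-n}$, which has $|X(F)|\,2^{m-n}$ elements. The system now has $m$ equations in $m$ variables. Since $m-n\ge1$, this count differs from $|X(F)|$ whenever $X(F)\neq\emptyset$, so the map is square but not parsimonious.

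For (c), I look at a single replacement step. The new system has $m+1$ equations in $n+2$ variables. A solution $(x,w,u)$ must satisfy $w=f(x)$ and $u=g(x)$, since over $\FF$ the equation $w+f=0$ means $w=f$. The third equation $w+u+wu=0$ says $1+(1+w)(1+u)=0$, i.e.\ $(1+w)(1+u)=1$, which over $\FF_2$ forces $w=u=0$. So the projection $(x,w,u)\mapsto x$ is a bijection from the new solution set onto $X(F)$, with inverse $x\mapsto(x,0,0)$. All new equations have degree $\le2$.

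Each step lowers the excess $m-n$ by one, since the equation count rises by $1$ and the variable count by $2$. After $m-n$ iterations the system has $m+(m-n)=2m-n$ equations in $n+2(m-n)=2m-n$ variables, as claimed. The step needs a pair of original equations, and I pair distinct original equations each time; this is possible because $2(m-n)\le m$ should hold whenever $m\le 2n$. If $m>2n$, I first pad with $0=0$ equations or reuse freshly produced equations $w+u+wu=0$ as members of pairs. The bijection argument uses only that $f,g$ are polynomials in the current variables, so it applies to any pair.

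Polynomial time is clear in all cases, since each step adds $O(1)$ terms and the number of steps is $m-n$. The only real obstacle is this bookkeeping of which pairs are available in (c); the algebraic core is the one-line identity $w+u+wu=1+(1+w)(1+u)$.
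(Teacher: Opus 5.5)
Your proof is correct and follows the paper's argument: the same forcing $w=f(x)$, $u=g(x)$ and $w=u=0$ for the gadget in (c), and the same count showing that each step lowers the excess of equations over variables by one. The only slip is in your bookkeeping aside: padding with $0=0$ equations when $m>2n$ makes things worse (each added equation requires one more step, which consumes two equations). Your other remedy, applying the step to any two equations of the current system, is valid and is what the paper implicitly does.
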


\begin{proof}
Only (c) needs a word. For each $x$ the first two equations force $w=f(x)$ and
$u=g(x)$, and $w+u+wu=0$ holds iff $w=u=0$; so the solutions of the new system are
in bijection with those of the old one, and the equations are in $Q_2$. One step
replaces two equations by three and adds two variables, so it leaves the number of
equations minus the number of variables one lower than before: starting from the
surplus $m-n>0$, after $m-n$ steps the system has $m+(m-n)=2m-n$ equations in
$n+2(m-n)=2m-n$ variables, hence is square.
\end{proof}

\begin{proposition}[clauses]\label{prop:clauses}
There is a parsimonious polynomial-time map from CNF formulas to quadratic systems over $\FF$.
\end{proposition}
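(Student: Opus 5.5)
The plan is to give an explicit clause-by-clause encoding, with auxiliary variables forced to unique values, so that solutions of the system are in bijection with satisfying assignments. Let $\phi=C_1\wedge\dots\wedge C_m$ be a CNF formula in Boolean variables $x_1,\dots,x_n$, identified with the variables of $Q$ via true $=1$. For a literal $\ell$ write $\ell=x_i$ or $\ell=1+x_i$, so each literal is an affine polynomial. A clause $C=\ell_1\vee\dots\vee\ell_k$ is satisfied at $x$ iff $\prod_{j=1}^k(1+\ell_j)(x)=0$. This product has degree $k$, so for $k\le2$ the single equation $\prod_j(1+\ell_j)=0$ lies in $Q_2$ and I keep it as it is. If a clause contains a variable and its negation, I simply drop it, since it is always true.

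For $k\ge3$ I will linearise the product with fresh variables. Introduce $y_1,\dots,y_{k-1}$ and the equations
\[
y_1+(1+\ell_1)=0,\qquad y_{j}+y_{j-1}(1+\ell_{j})=0\ (2\le j\le k-1),\qquad y_{k-1}(1+\ell_k)=0 .
\]
Each equation is quadratic, because $y_{j-1}$ multiplies an affine polynomial. The key step is to check parsimony. For fixed $x$, the first $k-1$ equations determine $y_1,\dots,y_{k-1}$ uniquely, by induction $y_j=\prod_{i\le j}(1+\ell_i)(x)$. The last equation then holds iff $\prod_{i\le k}(1+\ell_i)(x)=0$, that is, iff $x$ satisfies $C$. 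The auxiliary variables are taken fresh for each clause, so the projection from $X(F_\phi)$ to $\FF^n$ is injective, and its image is exactly the set of satisfying assignments. Hence $|X(F_\phi)|$ equals the number of satisfying assignments of $\phi$.

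For polynomial time, the number of new variables and equations is $\sum_C(|C|-1)$, linear in the size of $\phi$. Each equation has $O(1)$ monomials once it is expanded in $Q$, using $x_i^2=x_i$, which is harmless since literals are affine. The one subtle point is that the reduction $x^2\mapsto x$ is built into $Q$, so expanding products like $(1+x_i)(1+x_i)$ is consistent. This does not arise anyway, because repeated literals can be merged first. Combined with Proposition~\ref{prop:square}(a) or (c), this then yields square systems parsimoniously. I do not expect any step to be a real obstacle; the only care needed is the uniqueness of the auxiliary values.
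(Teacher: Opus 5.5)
Your proof is correct and follows essentially the same route as the paper. Both encode each clause separately with a chain of fresh auxiliary variables, each forced to a unique value by a quadratic equation, so that projecting onto the $x_i$ gives a bijection onto the satisfying assignments. The only difference is cosmetic: you propagate $\prod_{i\le j}(1+\ell_i)$, whereas the paper propagates its complement $\ell_1\vee\dots\vee\ell_{j+1}$ starting from the quadratic $\ell_1+\ell_2+\ell_1\ell_2$, which uses one fewer auxiliary variable per clause.
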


\begin{proof}
Identify a literal $\ell$ in the variables $x_1,\dots,x_n$ with the affine function $x_i$ or $1+x_i$, so that $\ell$ is true iff the function equals $1$. For a clause $C=\ell_1\vee\dots\vee\ell_k$, $k\ge2$, introduce variables $z_1,\dots,z_{k-2}$ and the equations
\[
z_1+\ell_1+\ell_2+\ell_1\ell_2=0,\qquad z_j+z_{j-1}+\ell_{j+1}+z_{j-1}\ell_{j+1}=0\ \ (2\le j\le k-2),\qquad z_{k-2}+\ell_k+z_{k-2}\ell_k+1=0
\]
(for $k=2$ the single equation $\ell_1+\ell_2+\ell_1\ell_2+1=0$; for $k=1$ the equation $\ell_1+1=0$). Since $a\vee b$ is the function $a+b+ab$, the first $k-2$ equations force $z_j=\ell_1\vee\dots\vee\ell_{j+1}$ for every assignment of the $x_i$, and the last one holds iff $C$ is satisfied. The equations are in $Q_2$, and an assignment of the $x_i$ extends to a solution of the system in exactly one way iff it satisfies $C$, in no way otherwise. Doing this for every clause, with disjoint sets of auxiliary variables, gives the map.
\end{proof}

\begin{corollary}\label{cor:complexity}
Let $\MQ_0(n)$, $\MQ_1(n)$ be as in the introduction.
\begin{enumerate}[label=\textnormal{(\alph*)}]
\item Deciding whether a square quadratic system over $\FF$ has a solution is NP-complete; hence $\bigcup_{n\ge2}\MQ_0(n)$ is coNP-complete.
\item $\bigcup_{n\ge2}\MQ_1(n)$ is in DP, is coNP-hard, and is NP-hard under randomized polynomial-time reductions.
\end{enumerate}
\end{corollary}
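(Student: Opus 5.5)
The statement is Corollary~\ref{cor:complexity}, and I would obtain it by assembling Propositions~\ref{prop:square} and~\ref{prop:clauses} with the classical results on SAT and unique satisfiability.

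\textbf{Part (a).} Membership in NP is immediate: a solution $x\in\FF^n$ is a certificate, and each $f_i(x)$ can be evaluated in polynomial time. For hardness, start from CNF-SAT, which is NP-complete. Proposition~\ref{prop:clauses} turns a formula into a quadratic system with the same number of solutions. Proposition~\ref{prop:square}(a) or (c) then makes the system square without changing the number of solutions. If a degenerate case with $n<2$ variables arises, I would pad by one fresh variable $y$ together with the equation $y=0$, which also preserves the count.

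It follows that $\bigcup_n \MQ_0(n)$, the set of square systems with no solution, is the complement of an NP-complete language within the polynomial-time recognisable set of well-formed square systems. Hence it is coNP-complete. The same composed reduction maps UNSAT to $\MQ_0$.

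\textbf{Part (b), membership in DP.} $\MQ_1=A\cap B$, where $A=\{F:|X(F)|\ge1\}\in$ NP and $B=\{F:|X(F)|\le1\}$. For $B$, its complement "there exist $x\ne y$ both solutions" is in NP, so $B\in$ coNP. An intersection of an NP language and a coNP language is exactly the definition of DP.

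\textbf{Part (b), coNP-hardness.} I would follow Blass--Gurevich and reduce UNSAT to unique satisfiability by $\varphi\mapsto\varphi'=(\varphi\wedge\neg z)\vee(\text{all variables false}\wedge z)$, or some equivalent construction. Then $\varphi$ is unsatisfiable iff $\varphi'$ has exactly one satisfying assignment. This is simpler to do directly at the level of systems. Take the square system $F$ from part (a), which is parsimonious from the CNF. Add one variable $z$ and form the system
\[
z f_i(x)=0,\qquad (1+z)x_j=0 .
\]
This system is quadratic only if $\deg f_i\le1$, which fails in general, so I would instead go through the CNF. Applying Proposition~\ref{prop:clauses} to $\varphi'$, once it is rewritten in CNF (adding $z$ to each clause of $\varphi$ and adding clauses $\neg z\vee\neg x_j$), gives a parsimonious map. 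Proposition~\ref{prop:square} then squares the result while preserving the count.

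\textbf{Part (b), randomized NP-hardness.} Valiant--Vazirani gives a randomized polynomial-time map from SAT to CNF formulas with the following property: satisfiable inputs map with probability $\ge 1/\mathrm{poly}$ to formulas with exactly one solution, and unsatisfiable inputs map to unsatisfiable ones. Composing this with the parsimonious maps of Propositions~\ref{prop:clauses} and~\ref{prop:square} transfers the property to $\MQ_1$. An alternative is to impose the random XOR constraints directly as linear equations, since they are already in $Q_2$.

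\textbf{Main obstacle.} The main obstacle is keeping every step parsimonious. In particular, Proposition~\ref{prop:square}(b) must be avoided because it multiplies the number of solutions, and the coNP-hardness gadget must be routed through CNF so that the degree stays at most $2$.
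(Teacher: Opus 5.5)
Your proposal is correct and matches the paper's proof: you compose Propositions~\ref{prop:clauses} and~\ref{prop:square}(a),(c) into a parsimonious reduction from CNF-SAT to square quadratic systems, deduce (a), obtain DP membership by writing the language as (at least one solution, in NP) $\cap$ (at most one solution, in coNP), and transfer the hardness of unique satisfiability through the same parsimonious map. The only differences are that you write out the Blass--Gurevich gadget $\bigwedge_C (C\vee z)\wedge\bigwedge_j(\neg z\vee\neg x_j)$, which has $\#\mathrm{SAT}(\varphi)+1$ models, where the paper simply cites Blass and Gurevich, and that you handle the small-$n$ padding which the paper leaves implicit.
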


\begin{proof}
Propositions~\ref{prop:clauses} and~\ref{prop:square} compose to a parsimonious polynomial-time reduction $\varphi\mapsto F_\varphi$ from CNF-SAT to square quadratic systems: $\varphi$ has exactly $s$ satisfying assignments iff $F_\varphi$ has exactly $s$ solutions. (a) Membership in NP is clear, hardness follows from the reduction (or from \cite{fraenkelyesha}), and the complement of an NP-complete language is coNP-complete. (b) A square system has exactly one solution iff it has at least one (an NP property) and at most one (a coNP property: two distinct solutions are a certificate of failure); so the language is in DP. The reduction maps the unique-satisfiability language $\{\varphi:\varphi\text{ has exactly one satisfying assignment}\}$ into $\bigcup_n\MQ_1(n)$ and its complement into the complement, so it is a many-one reduction from unique satisfiability, which is coNP-hard \cite{blassgurevich} and NP-hard under randomized reductions \cite{valiantvazirani}.
\end{proof}

By Theorem~\ref{thm:main}, for every $n\ge2$ there exist injections $\MQ_0(n)\hookrightarrow\MQ_1(n)$ whose images omit at most a $2^{-n}$ fraction of $\MQ_1(n)$. Nothing in this paper says how to compute such an injection; an injection computable in polynomial time, together with a polynomial-time inverse on its image, would give a rather unusual kind of reduction between a coNP-complete language and a language in DP. We leave this as an open question.

\section*{Acknowledgements}

These results are included in the first author's MSC thesis, supervised by the
second author. Several of the numerical claims were checked by computer algebra
with the assistance of Anthropic Claude and OpenAI ChatGPT, which also reviewed the
arguments and assisted with some proofs.

\newpage
\bibliographystyle{plain}
\bibliography{biblio}

\end{document}